\documentclass[10pt]{article}

\usepackage[T1]{fontenc}
\usepackage[utf8]{inputenc}
\usepackage[a4paper,total={7in,10in}]{geometry}
\usepackage[dvipsnames,table,xcdraw]{xcolor}
\usepackage{amsmath,amssymb,amsfonts,amsthm}
\usepackage{aliascnt}
\usepackage{graphicx}
\usepackage{subcaption}
\usepackage{float}
\usepackage{array}
\usepackage{verbatim}
\usepackage{comment}
\usepackage{tikz}
\usepackage{tkz-berge}
\usepackage{rotating}
\usepackage{enumitem}
\usepackage[pagewise]{lineno}
\usepackage[hidelinks]{hyperref}
\hypersetup{
  pdftitle={Complexity, approximation, and extension of proper \{a,b\}-edge-weightings},
  pdfauthor={Péter Madarasi and Máté Simon},
  pdfsubject={Decision, approximation, and extension problems for \{a,b\}-edge-weightings},
  pdfkeywords={1-2-3 conjecture, neighbor-sum-distinguishing edge-weightings,
               \{a,b\}-edge-weightings, locally irregular edge-colorings,
               planar graphs, approximation algorithms,
               extension of partial edge-weightings}
}
\usepackage[nameinlink,noabbrev,capitalise,sort&compress]{cleveref}

\def\Z{\mathbb Z}

\def\1{\mathbb 1}

\DeclareMathOperator{\dist}{dist}
\DeclareMathOperator{\OPT}{OPT}

\usepackage[textsize=tiny]{todonotes}

\numberwithin{equation}{section}

\makeatletter
\edef\ReferenceTool@CleverefVersion{\csname ver@cleveref.sty\endcsname}%
\def\ReferenceTool@BrokenCleverefVersion{2018/03/27 v0.21.4 Intelligent cross-referencing}%
\ifx\ReferenceTool@CleverefVersion\ReferenceTool@BrokenCleverefVersion
  \let\cpageref\relax
  \DeclareRobustCommand{\cpageref}{%
    \@ifstar{\@crefstar{cpageref}}{\@cref{cpageref}}}%
  \let\Cpageref\relax
  \DeclareRobustCommand{\Cpageref}{%
    \@ifstar{\@crefstar{Cpageref}}{\@cref{Cpageref}}}%
  \providecommand*{\@setcpagerefrange}[3]{%
    \@@setcpagerefrange{#1}{#2}{cref}{#3}}%
  \providecommand*{\@setCpagerefrange}[3]{%
    \@@setcpagerefrange{#1}{#2}{Cref}{#3}}%
  \providecommand*{\@setlabelcpagerefrange}[3]{%
    \@@setcpagerefrange{#1}{#2}{labelcref}{#3}}%
\fi
\makeatother

\AtBeginDocument{%
}

\newcommand{\DeclareEquationCrefFormat}[1]{%
  \crefformat{#1}{##2\textup{(##1)}##3}%
  \Crefformat{#1}{##2\textup{(##1)}##3}%
  \crefrangeformat{#1}{##3\textup{(##1)}##4\nobreakdash--##5\textup{(##2)}##6}%
  \Crefrangeformat{#1}{##3\textup{(##1)}##4\nobreakdash--##5\textup{(##2)}##6}%
  \crefmultiformat{#1}%
    {##2\textup{(##1)}##3}%
    { and~##2\textup{(##1)}##3}%
    {, ##2\textup{(##1)}##3}%
    {, and~##2\textup{(##1)}##3}%
  \Crefmultiformat{#1}%
    {##2\textup{(##1)}##3}%
    { and~##2\textup{(##1)}##3}%
    {, ##2\textup{(##1)}##3}%
    {, and~##2\textup{(##1)}##3}%
  \crefrangemultiformat{#1}%
    {##3\textup{(##1)}##4\nobreakdash--##5\textup{(##2)}##6}%
    { and~##3\textup{(##1)}##4\nobreakdash--##5\textup{(##2)}##6}%
    {, ##3\textup{(##1)}##4\nobreakdash--##5\textup{(##2)}##6}%
    {, and~##3\textup{(##1)}##4\nobreakdash--##5\textup{(##2)}##6}%
  \Crefrangemultiformat{#1}%
    {##3\textup{(##1)}##4\nobreakdash--##5\textup{(##2)}##6}%
    { and~##3\textup{(##1)}##4\nobreakdash--##5\textup{(##2)}##6}%
    {, ##3\textup{(##1)}##4\nobreakdash--##5\textup{(##2)}##6}%
    {, and~##3\textup{(##1)}##4\nobreakdash--##5\textup{(##2)}##6}%
  \labelcrefformat{#1}{##2\textup{(##1)}##3}%
  \labelcrefrangeformat{#1}{##3\textup{(##1)}##4\nobreakdash--##5\textup{(##2)}##6}%
  \labelcrefmultiformat{#1}%
    {##2\textup{(##1)}##3}%
    { and~##2\textup{(##1)}##3}%
    {, ##2\textup{(##1)}##3}%
    {, and~##2\textup{(##1)}##3}%
  \labelcrefrangemultiformat{#1}%
    {##3\textup{(##1)}##4\nobreakdash--##5\textup{(##2)}##6}%
    { and~##3\textup{(##1)}##4\nobreakdash--##5\textup{(##2)}##6}%
    {, ##3\textup{(##1)}##4\nobreakdash--##5\textup{(##2)}##6}%
    {, and~##3\textup{(##1)}##4\nobreakdash--##5\textup{(##2)}##6}%
}
\DeclareEquationCrefFormat{equation}
\DeclareEquationCrefFormat{subequation}

\newcommand{\DeclareItemCrefFormat}[1]{%
  \crefformat{#1}{##2\textup{##1}##3}%
  \Crefformat{#1}{##2\textup{##1}##3}%
  \crefrangeformat{#1}{##3\textup{##1}##4\nobreakdash--##5\textup{##2}##6}%
  \Crefrangeformat{#1}{##3\textup{##1}##4\nobreakdash--##5\textup{##2}##6}%
  \crefmultiformat{#1}%
    {##2\textup{##1}##3}{ and~##2\textup{##1}##3}%
    {, ##2\textup{##1}##3}{, and~##2\textup{##1}##3}%
  \Crefmultiformat{#1}%
    {##2\textup{##1}##3}{ and~##2\textup{##1}##3}%
    {, ##2\textup{##1}##3}{, and~##2\textup{##1}##3}%
  \crefrangemultiformat{#1}%
    {##3\textup{##1}##4\nobreakdash--##5\textup{##2}##6}%
    { and~##3\textup{##1}##4\nobreakdash--##5\textup{##2}##6}%
    {, ##3\textup{##1}##4\nobreakdash--##5\textup{##2}##6}%
    {, and~##3\textup{##1}##4\nobreakdash--##5\textup{##2}##6}%
  \Crefrangemultiformat{#1}%
    {##3\textup{##1}##4\nobreakdash--##5\textup{##2}##6}%
    { and~##3\textup{##1}##4\nobreakdash--##5\textup{##2}##6}%
    {, ##3\textup{##1}##4\nobreakdash--##5\textup{##2}##6}%
    {, and~##3\textup{##1}##4\nobreakdash--##5\textup{##2}##6}%
  \labelcrefformat{#1}{##2\textup{##1}##3}%
  \labelcrefrangeformat{#1}{##3\textup{##1}##4\nobreakdash--##5\textup{##2}##6}%
  \labelcrefmultiformat{#1}%
    {##2\textup{##1}##3}{ and~##2\textup{##1}##3}%
    {, ##2\textup{##1}##3}{, and~##2\textup{##1}##3}%
  \labelcrefrangemultiformat{#1}%
    {##3\textup{##1}##4\nobreakdash--##5\textup{##2}##6}%
    { and~##3\textup{##1}##4\nobreakdash--##5\textup{##2}##6}%
    {, ##3\textup{##1}##4\nobreakdash--##5\textup{##2}##6}%
    {, and~##3\textup{##1}##4\nobreakdash--##5\textup{##2}##6}%
}
\DeclareItemCrefFormat{enumi}
\DeclareItemCrefFormat{enumii}
\DeclareItemCrefFormat{enumiii}
\DeclareItemCrefFormat{enumiv}

\theoremstyle{plain}
\newtheorem{theorem}{Theorem}[section]

\newaliascnt{lemma}{theorem}
\newtheorem{lemma}[lemma]{Lemma}
\aliascntresetthe{lemma}

\newaliascnt{proposition}{theorem}
\newtheorem{proposition}[proposition]{Proposition}
\aliascntresetthe{proposition}

\newaliascnt{corollary}{theorem}
\newtheorem{corollary}[corollary]{Corollary}
\aliascntresetthe{corollary}

\newaliascnt{claim}{theorem}
\newtheorem{claim}[claim]{Claim}
\aliascntresetthe{claim}

\newaliascnt{prob}{theorem}

\aliascntresetthe{prob}

\newaliascnt{conjecture}{theorem}

\aliascntresetthe{conjecture}

\theoremstyle{definition}
\newaliascnt{definition}{theorem}

\aliascntresetthe{definition}

\theoremstyle{remark}
\newaliascnt{remark}{theorem}

\aliascntresetthe{remark}

\newcommand{\DeclareNamedCrefType}[3]{%
  \crefname{#1}{#2}{#3}%
  \Crefname{#1}{#2}{#3}%
  \crefrangelabelformat{#1}{##3##1##4\nobreakdash--##5##2##6}%
}

\DeclareNamedCrefType{section}{Section}{Sections}
\DeclareNamedCrefType{subsection}{Section}{Sections}
\DeclareNamedCrefType{subsubsection}{Section}{Sections}
\DeclareNamedCrefType{appendix}{Appendix}{Appendices}
\DeclareNamedCrefType{figure}{Figure}{Figures}
\DeclareNamedCrefType{table}{Table}{Tables}
\DeclareNamedCrefType{footnote}{Footnote}{Footnotes}
\DeclareNamedCrefType{page}{Page}{Pages}

\DeclareNamedCrefType{theorem}{Theorem}{Theorems}
\DeclareNamedCrefType{lemma}{Lemma}{Lemmas}
\DeclareNamedCrefType{proposition}{Proposition}{Propositions}
\DeclareNamedCrefType{corollary}{Corollary}{Corollaries}
\DeclareNamedCrefType{claim}{Claim}{Claims}
\DeclareNamedCrefType{prob}{Problem}{Problems}
\DeclareNamedCrefType{conjecture}{Conjecture}{Conjectures}
\DeclareNamedCrefType{definition}{Definition}{Definitions}
\DeclareNamedCrefType{remark}{Remark}{Remarks}

\makeatletter
\@onlypreamble\DeclareEquationCrefFormat
\@onlypreamble\DeclareItemCrefFormat
\@onlypreamble\DeclareNamedCrefType
\makeatother

\SetVertexNormal[FillColor=gray!25]
\newcommand{\figureedgelength}{1.15cm}
\tikzset{
  figure vertex/.style={circle,draw,fill=gray!25,minimum size=8pt,inner sep=0pt},
  figure edge/.style={line width=0.4pt}
}
\newcommand{\drawunitrhombus}[3]{%
  \node[figure vertex] (#1-left) at ($(#2)!{1/(2+sqrt(3))}!(#3)$) {};
  \node[figure vertex] (#1-right) at ($(#3)!{1/(2+sqrt(3))}!(#2)$) {};
  \coordinate (#1-mid) at ($(#1-left)!0.5!(#1-right)$);
  \node[figure vertex] (#1-upper) at ($(#1-mid)!{1/sqrt(3)}!90:(#1-right)$) {};
  \node[figure vertex] (#1-lower) at ($(#1-mid)!{1/sqrt(3)}!-90:(#1-right)$) {};
  \draw[figure edge] (#2)--(#1-left)--(#1-upper)--(#1-right)--(#3);
  \draw[figure edge] (#1-left)--(#1-lower)--(#1-right);
  \draw[figure edge] (#1-upper)--(#1-lower);
}

\usetikzlibrary{decorations.pathreplacing,calc}
\tikzset{draw half paths/.style 2 args={%
decoration={show path construction,
lineto code={
\draw [#1] (\tikzinputsegmentfirst) -- ($(\tikzinputsegmentfirst)!0.15!(\tikzinputsegmentlast)$);
\draw [#2] ($(\tikzinputsegmentfirst)!0.15!(\tikzinputsegmentlast)$) -- (\tikzinputsegmentlast);
}
}, decorate
}}

\begin{document}

\title{Complexity, approximation, and extension of\\ proper $\{a,b\}$-edge-weightings}

\author{%
  P\'eter Madarasi\thanks{HUN-REN Alfr\'ed R\'enyi Institute of Mathematics, Re\'altanoda u.\ 13--15., Budapest H-1053, Hungary; and Department of Operations Research, ELTE E\"otv\"os Lor\'and University, P\'azm\'any P.\ s.\ 1/c, Budapest H-1117, Hungary. E-mail: \texttt{madarasi@renyi.hu}, corresp.\ author}
\and
M\'at\'e Simon\thanks{Department of Operations Research, ELTE E\"otv\"os Lor\'and University, P\'azm\'any P.\ s.\ 1/c, Budapest H-1117, Hungary. E-mail: \texttt{ghlkwp@student.elte.hu}}%
}

\date{\vspace{-5mm}}

\maketitle

\begin{abstract}
For distinct integers $a$ and $b$, an \emph{$\{a,b\}$-edge-weighting} assigns $a$ or $b$ to each edge and labels each vertex by the sum of its incident weights.
Such a weighting is \emph{proper} if adjacent vertices receive distinct labels.
We prove that, for every fixed pair of distinct integers, deciding whether a proper weighting exists is NP-complete even for simple cubic planar graphs.
On planar multigraphs with $m$ edges, we give an exact $2^{O(\sqrt m)}$-time algorithm and, assuming the Exponential Time Hypothesis (ETH), exclude $2^{o(\sqrt m)}$-time algorithms even for simple cubic planar graphs.
As a consequence, locally irregular $2$-edge-coloring is NP-complete on simple cubic planar graphs, admits a deterministic $2^{O(\sqrt n)}$-time algorithm on $n$-vertex graphs in this class, and admits no $2^{o(\sqrt n)}$-time algorithm under ETH.
For maximizing the number of edges joining vertices with distinct labels, we give a deterministic efficient polynomial-time approximation scheme (EPTAS) on planar multigraphs, a polynomial-time $1/2$-approximation on multigraphs, and APX-completeness even on simple cubic graphs.
Extending a partial $\{a,b\}$-edge-weighting to a proper one is NP-complete for every fixed pair even on simple cubic planar bipartite graphs, while it is polynomial-time solvable on trees.
The hardness persists even when the prescribed edges form disjoint paths of length $6$ and all edges of each path have the same prescribed weight.

\medskip
\noindent\textbf{Keywords:} $1$-$2$-$3$ conjecture; neighbor-sum-distinguishing edge-weightings; $\{a,b\}$-edge-weightings; locally irregular edge-colorings; planar graphs; approximation algorithms; extension of partial edge-weightings.
\end{abstract}

\section{Introduction}

Unless stated otherwise, $G=(V,E)$ denotes a finite, simple, undirected graph.
For a vertex $v\in V(G)$, let $E_G(v)$ denote the set of edges incident with $v$, and let $d_G(v)=|E_G(v)|$.
For a set of integers $W\subseteq\Z$, an assignment $w:E(G)\rightarrow W$ is called a \emph{$W$-edge-weighting}.
The label induced by $w$ at a vertex $v$ is $z_w(v)=\sum_{e\in E_G(v)}w(e)$; when the weighting is clear, we write $z(v)$.
A $W$-edge-weighting is \emph{proper} if $z_w(u)\neq z_w(v)$ for every edge $uv\in E(G)$.
An edge is \emph{proper} under $w$ if its endpoints receive distinct labels, and \emph{improper} otherwise.
An isolated edge is improper under every edge-weighting, because both endpoints receive the weight of that edge as their label.
If $G$ has a proper $W$-edge-weighting, then we say that $G$ has the \emph{$W$-property}.

In 2004, Karo\'{n}ski, \L{}uczak, and Thomason proposed the $1$-$2$-$3$ conjecture, which states that every simple graph without isolated edges has the $\{1,2,3\}$-property~\cite{karonski2004edge}.
Throughout the paper, $a$ and $b$ denote fixed distinct integers.
This paper studies decision, optimization, and extension problems for $\{a,b\}$-edge-weightings motivated by that conjecture.
A graph is \emph{locally irregular} if adjacent vertices have distinct degrees.
A \emph{locally irregular $2$-edge-coloring} is a $2$-edge-coloring in which the edges of each color induce a locally irregular subgraph.
On regular graphs, locally irregular $2$-edge-colorings are equivalent to proper $\{a,b\}$-edge-weightings; see \cref{prop:regular-weight-normalization} and~\cite{baudon2015decomposing}.

\subsection{Our results}

First, for every fixed pair of distinct integers $a$ and $b$, \cref{thm:planar-ab-weighting} proves that deciding whether a proper $\{a,b\}$-edge-weighting exists is NP-complete even for simple cubic planar graphs.
On connected loopless planar multigraphs with $m$ edges, \cref{thm:planar-exact} gives an exact deterministic algorithm with running time $2^{O(\sqrt m)}$.
Assuming the Exponential Time Hypothesis, \cref{thm:planar-eth-lower-bound} shows that the square-root dependence in the exponent is optimal even for connected simple cubic planar graphs.
For $\{1,2\}$-edge-weightings, the NP-completeness result strengthens the result of Dehghan, Sadeghi, and Ahadi for simple cubic graphs by also requiring planarity~\cite[Theorem~1]{dehghan2013algorithmic}.
\Cref{cor:liec-planar-cubic-hardness,cor:liec-planar-cubic-subexp} show that locally irregular $2$-edge-coloring is NP-complete even for simple cubic planar graphs.
On $n$-vertex simple cubic planar graphs, the problem is solvable in deterministic time $2^{O(\sqrt n)}$ and, assuming the Exponential Time Hypothesis, admits no $2^{o(\sqrt n)}$-time algorithm.
Previously, NP-completeness was known for planar graphs~\cite[Theorem~3]{ahadi2018decomposing} and, separately, for cubic graphs via the equivalent notion of $2$-detectability~\cite{havet2014detection}.

Second, we consider the problem of maximizing the number of proper edges.
For every fixed pair of distinct integers $a$ and $b$, \cref{thm:planar-eptas} gives a deterministic efficient polynomial-time approximation scheme (EPTAS) on connected loopless planar multigraphs.
On loopless multigraphs, \cref{thm:half-approx-ew} gives a polynomial-time $1/2$-approximation for every fixed pair of distinct integers $a$ and $b$.
For each fixed pair, \cref{thm:apx-complete-cubic} proves that the maximization problem is APX-complete even on simple cubic graphs.

Third, we study whether a partial $\{a,b\}$-edge-weighting, which prescribes the weights on a subset of the edges, can be extended to a proper weighting.
For every fixed pair of distinct integers $a$ and $b$, \cref{thm:extProbHard} proves that this extension problem is NP-complete even on simple cubic planar bipartite graphs.
Moreover, NP-hardness holds even when each component of the subgraph formed by the prescribed edges is a path of length $6$, all of whose edges have the same prescribed weight.
For locally irregular $2$-edge-colorings, \cref{cor:liec-extension-hardness} proves that deciding whether a partial $2$-edge-coloring extends to a locally irregular one is NP-complete on simple cubic planar bipartite graphs under the same restriction on the prescribed edges.
This contrasts with the fact that every cubic bipartite graph admits a locally irregular $2$-edge-coloring~\cite{baudon2015decomposing}.
\Cref{thm:treet} gives a polynomial-time algorithm for the extension problem on trees for every fixed pair of distinct integers $a$ and $b$.
Together, \cref{thm:treet,thm:antifactor} give an alternative polynomial-time algorithm for the antifactor problem on trees.

\subsection{Motivation and previous results}

The study of neighbor-sum-distinguishing edge-weightings originates in graph irregularity.
No simple graph on at least two vertices can have pairwise distinct degrees at all vertices, because an $n$-vertex graph with that property would have both a vertex of degree $0$ and a vertex of degree $n-1$.
Chartrand et al.\ therefore considered replacing each edge by parallel copies so that all vertex degrees become pairwise distinct~\cite{chartrand1988irregular}.
For a simple graph $G$ with no isolated edges and at most one isolated vertex, the \emph{irregularity strength} $s(G)$ is the least positive integer $k$ for which there is an edge-weighting $w:E(G)\rightarrow\{1,\dots,k\}$ such that the labels $z_w(v)$ are pairwise distinct over all $v\in V(G)$.
Further results on irregularity strength appear in~\cite{aigner1990irregular,faudree1987bound,nierhoff2000tight}.

A weaker requirement asks only that adjacent vertices have distinct degrees after the edge replacements.
If an edge $e$ is replaced by $w(e)$ parallel copies, then the resulting degree of a vertex $v$ is $\sum_{e\in E_G(v)}w(e)$.
Thus allowing between $1$ and $k$ copies of each edge is equivalent to seeking a proper $\{1,\dots,k\}$-edge-weighting.
Restricting the allowed weights to two integers gives the $\{a,b\}$-edge-weighting problem studied here.
Dudek and Wajc proved that deciding the $\{1,2\}$-property is NP-complete~\cite{dudek2011complexity}, and Dehghan, Sadeghi, and Ahadi strengthened this result to simple cubic graphs~\cite[Theorem~1]{dehghan2013algorithmic}.
On a regular graph, replacing one pair of distinct edge weights by another transforms every vertex label by the same affine function and therefore preserves properness; see \cref{prop:regular-weight-normalization}.
Thus the NP-completeness result for simple cubic graphs extends to every fixed pair of distinct integers $a$ and $b$.
Bensmail also gave a direct proof of NP-completeness for every pair of distinct real weights on general graphs in an unpublished manuscript~\cite{bensmail2013vertex}.

For regular graphs, the equivalence between proper $\{a,b\}$-edge-weightings and locally irregular $2$-edge-colorings connects the present problem with decompositions into locally irregular subgraphs~\cite{baudon2015decomposing}.
Baudon et al.\ proved that every regular bipartite graph of minimum degree at least $3$ admits a locally irregular $2$-edge-coloring.
Using the equivalence on regular graphs, the result of Havet, Paramaguru, and Sampathkumar implies that deciding whether a cubic graph admits a locally irregular $2$-edge-coloring is NP-complete~\cite{havet2014detection}.
Baudon, Bensmail, and Sopena proved that deciding whether a graph admits a locally irregular $2$-edge-coloring is NP-complete in general~\cite{baudon2015complexity}.
Ahadi et al.\ later gave a planar NP-completeness proof~\cite[Theorem~3]{ahadi2018decomposing}.
More recently, Lu\v{z}ar et al.\ studied when two colors suffice on cubic graphs and exhibited an infinite family of cubic graphs that require three colors~\cite{luzar2023locally}.

Thomassen, Wu, and Zhang proved in 2016 that a connected bipartite graph without isolated edges has the $\{1,2\}$-property if and only if it is not an odd multi-cactus~\cite{thomassen20163}.
Their argument gives the same characterization for $\{a,b\}$-edge-weightings when $a<b$, $a$ is odd, and $b$ is even.
Lyngsie proved the corresponding characterization for connected bridgeless bipartite graphs when $a=0$ and $b=1$~\cite{lyngsie2018neighbour}.
Bensmail et al.\ proved the corresponding characterization for $2$-connected bipartite graphs when $a$ is odd and $b=a+2$~\cite{bensmail2022b}.

More generally, let $\chi_\Sigma(G)$ be the smallest positive integer $k$ for which $G$ has a proper $\{1,\dots,k\}$-edge-weighting; if no such integer exists, set $\chi_\Sigma(G)=\infty$.
Early work, including the paper that introduced the $1$-$2$-$3$ conjecture, compared $\chi_\Sigma(G)$ with the chromatic number $\chi(G)$.
Karo\'{n}ski, \L{}uczak, and Thomason proved that if $(A,+)$ is a finite abelian group of odd order and $G$ has no isolated edges and is $|A|$-colorable, then an edge-weighting by elements of $A$ induces a proper vertex coloring~\cite{karonski2004edge}.
If $G$ is $2$-connected and $\chi(G)\geq 3$, then $\chi_\Sigma(G)\leq\chi(G)$~\cite{seamone20121}.
For every integer $k\geq 3$, several results give the bound $\chi_\Sigma(G)\leq k$ under coloring hypotheses.
If $k$ is odd, then the bound holds for every $k$-colorable graph $G$ without isolated edges~\cite{karonski2004edge}.
If $k\equiv 0\pmod 4$, it also holds for every $k$-colorable graph $G$ without isolated edges~\cite{duan2012factors}.
If $k\equiv 2\pmod 4$, it holds for every $k$-colorable, $2$-connected graph $G$ with minimum degree at least $k-1$~\cite{lu2009vertex}.

Addario-Berry et al.\ established the first general upper bound, $\chi_\Sigma(G)\leq 30$~\cite{addario2007vertex}.
Their method uses the degree-constrained subgraph problem.
Addario-Berry, Dalal, and Reed improved the bound to $16$~\cite{addario2005degree}, and Wang and Yu subsequently improved it to $13$~\cite{wang2008vertex}.
Kalkowski, Karo\'{n}ski, and Pfender proved that $\chi_\Sigma(G)\leq 5$; equivalently, every graph without isolated edges has the $\{1,2,3,4,5\}$-property~\cite{kalkowski2010vertex}.
Zhong proved the $1$-$2$-$3$ conjecture for sufficiently large dense graphs.
More precisely, there exists a constant $n'$ such that every graph $G=(V,E)$ with at least $n'$ vertices has the $\{1,2,3\}$-property whenever every vertex has degree greater than $0.99985|V|$~\cite{zhong20191}.
Addario-Berry, Dalal, and Reed proved that an Erd\H{o}s--R\'{e}nyi random graph has the $\{1,2\}$-property asymptotically almost surely~\cite{addario2005degree}.
For regular graphs, Przyby{\l}o proved that every $d$-regular graph with $d\geq 2$ has the $\{1,2,3,4\}$-property and that the $1$-$2$-$3$ conjecture holds when $d\geq 10^8$~\cite{przybylo20211}.
Keusch proved in 2023 that $\chi_\Sigma(G)\leq 4$~\cite{keusch2023vertex} and later confirmed the $1$-$2$-$3$ conjecture~\cite{keusch123proof}.

Related problems assign weights to vertices or use products instead of sums.
In the vertex-weighting problem, each vertex is labeled by the sum of the weights of its neighbors.
Ahadi et al.\ proved that deciding whether a graph has a proper vertex-weighting from $\{1,\dots,k\}$ is NP-complete for every $k\geq 2$, even for $3$-colorable planar graphs when $k=2$~\cite{ahadi2012computation}.
This vertex-weighting problem is also NP-complete for cubic graphs when $k=2$~\cite{dehghan2018complexity}.

In product variants, each vertex is labeled by the product of its incident edge weights or of the weights of its neighbors.
Dehghan, Sadeghi, and Ahadi proved that deciding whether a graph has a proper edge-weighting by product from $\{1,2\}$ is NP-complete both for planar $3$-colorable graphs and for cubic graphs~\cite{dehghan2013algorithmic}.
They also proved that deciding whether a proper $\{1,2\}$-vertex-weighting by product exists is NP-complete, even for planar $3$-colorable graphs~\cite{dehghan2013algorithmic}.
For every $k\geq 3$, the vertex-weighting-by-product problem with weights from $\{1,\dots,k\}$ is NP-complete on general graphs~\cite{dehghan2013algorithmic}.
Bensmail et al.\ confirmed the product analogue of the $1$-$2$-$3$ conjecture by proving that every graph without isolated edges has a proper $\{1,2,3\}$-edge-weighting by product~\cite{bensmail2023proof}.

\section{\texorpdfstring{$\{a,b\}$}{\{a,b\}}-edge-weightings in planar graphs}\label{sec:abWeightingNPC}

This section studies the decision and optimization problems for $\{a,b\}$-edge-weightings on planar graphs.
We first prove that deciding whether a proper $\{a,b\}$-edge-weighting exists is NP-complete even for simple cubic planar graphs.
We then give an exact subexponential algorithm on connected loopless planar multigraphs and a matching lower bound under the Exponential Time Hypothesis.
Finally, we give a deterministic EPTAS for maximizing the number of proper edges on connected loopless planar multigraphs.

\subsection{NP-completeness for simple cubic planar graphs}

On regular graphs, affine changes of the two allowed weights preserve properness, and the $\{0,1\}$ case is equivalent to locally irregular $2$-edge-coloring~\cite{baudon2015decomposing}.

\begin{proposition}\label{prop:regular-weight-normalization}
  Let $X$ be an $r$-regular graph, and let $a$ and $b$ be distinct integers.
  Under the bijection from $\{0,1\}$-edge-weightings $x$ to $\{a,b\}$-edge-weightings $w$ defined by $w(e)=a+(b-a)x(e)$, an edge is proper under $x$ if and only if it is proper under $w$.
  Consequently, the two weightings have the same set, and hence the same number, of proper edges.
  Moreover, viewing the two colors as $0$ and $1$, a $2$-edge-coloring of $X$ is locally irregular if and only if the resulting $\{0,1\}$-edge-weighting is proper.
  In particular, $X$ has the $\{a,b\}$-property if and only if it admits a locally irregular $2$-edge-coloring.
\end{proposition}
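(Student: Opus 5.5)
The plan is to compute the induced labels directly and observe that they are related by a single injective affine map. Fix a $\{0,1\}$-edge-weighting $x$ of the $r$-regular graph $X$ and let $w(e)=a+(b-a)x(e)$. First, this map is a bijection between $\{0,1\}$-edge-weightings and $\{a,b\}$-edge-weightings. Its inverse is $x(e)=(w(e)-a)/(b-a)$, which is well defined because $a\neq b$. It sends the value $a$ to $0$ and the value $b$ to $1$.

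Next, for each vertex $v$, we sum over the $r$ edges of $E_X(v)$ and obtain
\[
z_w(v)=\sum_{e\in E_X(v)}\bigl(a+(b-a)x(e)\bigr)=ra+(b-a)\,z_x(v).
\]
This is the only place where regularity is used: every vertex has exactly $r$ incident edges, so the additive term $ra$ is the same constant at every vertex. The map $t\mapsto ra+(b-a)t$ is injective because $b-a\neq 0$. Hence, for every edge $uv$, we have $z_w(u)=z_w(v)$ if and only if $z_x(u)=z_x(v)$. So the same edges are proper under $x$ and under $w$, and the statements about the set and the number of proper edges follow at once.

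For the coloring statement, identify a $2$-edge-coloring with $x$, where $x(e)=1$ means $e$ has color $1$. Let $H_1$ be the spanning subgraph of color-$1$ edges and $H_0$ the spanning subgraph of color-$0$ edges. Then $z_x(v)=d_{H_1}(v)$ and $d_{H_0}(v)=r-d_{H_1}(v)$.

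The step needing care is matching the definition of a locally irregular coloring. That definition compares degrees only along edges \emph{within} each color class, whereas properness of $x$ concerns every edge of $X$. I would handle this as follows. An edge $uv$ has some color $c$. Local irregularity requires $d_{H_c}(u)\neq d_{H_c}(v)$ for that edge, and this holds if and only if $d_{H_1}(u)\neq d_{H_1}(v)$, since $d_{H_0}=r-d_{H_1}$. So the requirement is exactly properness of $uv$ under $x$. Every edge lies in exactly one color class, so the coloring is locally irregular precisely when all edges of $X$ are proper under $x$.

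Combining this with the first part gives the final equivalence. $X$ has the $\{a,b\}$-property if and only if some $\{0,1\}$-weighting is proper, which holds if and only if $X$ has a locally irregular $2$-edge-coloring.
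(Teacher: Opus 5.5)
Your proposal is correct and follows essentially the same argument as the paper. Both use the identity $z_w(v)=ra+(b-a)z_x(v)$ together with injectivity of this affine map. Both then use $d_0(v)=r-d_1(v)$ to show that, on each edge, the local-irregularity condition for its own color class is equivalent to $z_x(u)\neq z_x(v)$.
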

\begin{proof}
  Write $z_x$ and $z_w$ for the labelings induced by $x$ and $w$, respectively.
  For every vertex $v\in V(X)$, we have $z_w(v)=ra+(b-a)z_x(v)$.
  Hence $z_w(u)-z_w(v)=(b-a)\bigl(z_x(u)-z_x(v)\bigr)$ for every edge $uv\in E(X)$.
  Since $a\neq b$, the two endpoint labels differ under $w$ if and only if they differ under $x$.

  For $c\in\{0,1\}$, let $d_c(v)$ be the degree of $v$ in the subgraph induced by the edges of color $c$.
  Then $d_1(v)=z_x(v)$ and $d_0(v)=r-z_x(v)$.
  For an edge $uv$ of color $1$, local irregularity of the color-$1$ subgraph is the condition $d_1(u)\neq d_1(v)$, while for an edge of color $0$ it is the condition $d_0(u)\neq d_0(v)$.
  Since $d_0(u)\neq d_0(v)$ if and only if $d_1(u)\neq d_1(v)$, the coloring is locally irregular if and only if $z_x(u)\neq z_x(v)$ for every edge $uv$.
\end{proof}

\begin{theorem}\label{thm:planar-ab-weighting}
  For every fixed pair of distinct integers $a$ and $b$, it is NP-complete to decide whether a given simple cubic planar graph has a proper $\{a,b\}$-edge-weighting.
\end{theorem}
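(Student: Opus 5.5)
By \cref{prop:regular-weight-normalization}, on cubic graphs the $\{a,b\}$-property is equivalent to the $\{0,1\}$-property. It therefore suffices to prove NP-hardness for $\{0,1\}$-edge-weightings of simple cubic planar graphs. Membership in NP is immediate, since a weighting is a polynomial certificate checkable edge by edge. In the $\{0,1\}$ setting, a weighting is simply a spanning subgraph $F$ (the weight-$1$ edges), the label of $v$ is $d_F(v)\in\{0,1,2,3\}$, and properness requires $d_F(u)\neq d_F(v)$ for every edge $uv$.

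I would reduce from planar monotone (or plain planar) $3$-SAT, or alternatively from NAE-$3$-SAT on planar instances; planar cubic restrictions of these source problems are standard. The reduction replaces each variable, clause, and wire by a planar gadget that is cubic inside and has a few \emph{pendant attachment edges}.

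\textbf{Wire/variable gadget.} I would look for a small cubic-like gadget with two attachment edges whose proper $\{0,1\}$-weightings are exactly of two types. In each type, the weight of the attachment edge together with the label of the attachment vertex encodes a Boolean value. A natural candidate chains copies of a rhombus-like piece, i.e. $K_4$ minus an edge, whose two degree-$2$ vertices are joined outward; the macro \texttt{drawunitrhombus} hints at exactly this. A case analysis of the few proper weightings of this piece should show that the two external edges are forced to carry correlated information, with alternation or copy along a chain. Chaining such pieces in a cycle (for a variable) or a path (for a wire) then propagates a consistent truth value to every occurrence. Splitting to several clause occurrences is done by branching the cycle around the face structure of the planar embedding.

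\textbf{Clause gadget.} I need a planar piece with three incoming wire ends that admits a proper completion if and only if not all three inputs are false (or, for NAE, not all equal). The key lever is the requirement that a vertex's label differs from all three neighbors' labels. A central vertex receiving three input edges gets label equal to the number of true-encoding inputs, and this must be distinguishable from forced neighbor labels. I would design it by small exhaustive case analysis, possibly computer-assisted, on gadgets with at most about a dozen vertices.

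\textbf{Assembly.} I would embed everything along the planar incidence graph of the formula and verify three things: the result is simple, cubic and planar; every satisfying assignment yields a proper weighting; and every proper weighting, restricted to the gadgets, decodes to a satisfying assignment. Any leftover degree deficiencies are padded with small planar cubic "filler" gadgets that have a proper weighting compatible with any boundary. For the last step I would make sure each gadget's interface is local, so that properness only couples adjacent gadgets through the attachment edges.

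The main obstacle is designing the gadgets with a rigid interface. Labels depend on all three incident edges, so an attachment vertex's label is influenced by both sides, and one must show that each side tolerates exactly the intended boundary states. I would first enumerate, for the rhombus piece with pendant edges, all proper weightings given each boundary pattern, and build the wire and clause gadgets from whatever forcing behaviour that table exhibits.
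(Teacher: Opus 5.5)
\textbf{There is a genuine gap.} It sits at the clause gadget. You never construct it, and both behaviours you ask of it are ruled out.

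By \cref{prop:regular-weight-normalization} applied to the pair $(1,0)$, complementing all weights ($x\mapsto 1-x$, so every label $z$ becomes $3-z$) maps proper $\{0,1\}$-weightings of a cubic graph to proper ones. Suppose your wire gadget has exactly two boundary types $(s,\ell)$. Complementation sends a type to $(1-s,3-\ell)$, which must then be the other type.

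Now take any clause gadget with this interface. If input types $(\tau_1,\tau_2,\tau_3)$ admit a proper completion, complement the gadget together with its input edges. This shows that $(\bar\tau_1,\bar\tau_2,\bar\tau_3)$ is also accepted.
\begin{itemize}
\item The relation ``not all three inputs false'' is not closed under this operation: accepting all-true would force accepting all-false. So the OR gadget you need for planar (monotone or not) $3$-SAT cannot exist.
\item You could try to escape by decoding truth relative to a reference wire. That wire would have to reach every clause, which a planar formula graph does not allow without a crossover gadget you do not have.
\item The NAE relation is complement-closed, but planar NAE-$3$-SAT is solvable in polynomial time (Moret, 1988). A reduction from it proves no hardness unless $\mathrm{P}=\mathrm{NP}$.
\end{itemize}

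\textbf{The missing idea} is to reduce from a planar problem whose constraint is itself self-complementary and still NP-hard. The paper uses planar \textsc{Monotone 2-in-4-SAT} with every variable occurring exactly three times~\cite{scheffler2022distance}. Here ``exactly two of four true'' is the same as ``exactly two of four false''.

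With this source, your rhombus intuition is right and handles the variables:
\begin{itemize}
\item The rhombus gadget forces equal terminal weights $s$ and labels $1+s$ at its attachment vertices.
\item A cycle of three rhombi with a port edge at each junction forces every port weight to be $s$ and every junction label to be $3s$.
\end{itemize}

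The clause gadget has two parts:
\begin{itemize}
\item An explicit $8$-vertex graph $H$ with four degree-$2$ vertices. Given outside labels $3s_i$, it accepts exactly the patterns with $s_1+s_3=s_2+s_4$.
\item Two inverting gadgets on the second and fourth inputs. Each is a $3$-port synchronizer with two port endpoints merged. Together with $H$ they turn the relation into $t_1+t_2+t_3+t_4=2$.
\end{itemize}
With these gadgets every vertex is already cubic, so no filler gadgets are needed.
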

\begin{proof}
  The problem is in NP because a proposed edge-weighting can be checked in polynomial time.
  For NP-hardness, we reduce from the following restriction of \textsc{Monotone 2-in-4-SAT}.
  An instance is a CNF formula $\Phi$ in which every clause contains four unnegated variable occurrences and every variable occurs exactly three times in the whole formula.
  A variable may occur more than once in the same clause.
  The \emph{formula graph} of $\Phi$ is the bipartite multigraph with one vertex for each variable and each clause and one edge for each variable occurrence, joining the corresponding variable and clause vertices.
  Only instances with a planar formula graph are allowed.
  The question is whether there is a truth assignment under which exactly two of the four occurrences in every clause evaluate to true.
  We call such an assignment a $2$-in-$4$ assignment.
  This restricted problem is NP-complete~\cite[Theorem~21]{scheffler2022distance}.
  Fix a planar embedding of the formula graph for the reduction.

  By \cref{prop:regular-weight-normalization}, it is enough to prove NP-hardness for $\{0,1\}$-edge-weightings.
  All edge weights in the remainder of the proof belong to $\{0,1\}$.
  The reduction replaces each variable by a gadget with three edges joining it to the rest of the construction; these three edges have the same weight in every proper weighting.
  Each clause is replaced by a gadget with four edges joining it to the rest of the construction; its internal edges can be weighted properly exactly when two of these four edges have weight $1$.
  We establish these properties before assembling the graph.

  A \emph{rhombus gadget} has vertices $v_0,\dots,v_5$ and edge set $\{v_0v_1,v_1v_2,v_2v_3,v_3v_5,v_3v_4,v_4v_1,v_2v_4\}$.
  The edges $v_0v_1$ and $v_3v_5$ are called terminal edges, while the other five edges are called nonterminal edges.
  \Cref{fig:rhombus-gadget} shows the gadget.

  \begin{figure}[H]
    \centering
    \begin{tikzpicture}[x=\figureedgelength,y=\figureedgelength]
      \node[figure vertex,label=left:$v_0$] (v0) at ({-1-sqrt(3)/2},0) {};
      \node[figure vertex,label=above:$v_1$] (v1) at ({-sqrt(3)/2},0) {};
      \node[figure vertex,label=above:$v_2$] (v2) at (0,.5) {};
      \node[figure vertex,label=above:$v_3$] (v3) at ({sqrt(3)/2},0) {};
      \node[figure vertex,label=below:$v_4$] (v4) at (0,-.5) {};
      \node[figure vertex,label=right:$v_5$] (v5) at ({1+sqrt(3)/2},0) {};
      \draw[figure edge] (v0)--(v1)--(v2)--(v3)--(v5);
      \draw[figure edge] (v1)--(v4)--(v3);
      \draw[figure edge] (v2)--(v4);
    \end{tikzpicture}
    \caption{The rhombus gadget.}
    \label{fig:rhombus-gadget}
  \end{figure}

  \begin{claim}\label{claim:rhombus-gadget}
    Suppose that the five nonterminal edges of a rhombus gadget are proper.
    Then its two terminal edges have equal weights.
    If their common weight is $s\in\{0,1\}$, then the labels of $v_1$ and $v_3$ are $1+s$.
    Conversely, if both terminal edges have weight $s$, then the nonterminal edges can be weighted so that all five of them are proper.
  \end{claim}
  \begin{proof}
    For each fixed choice of the two terminal-edge weights, enumerate the $2^5$ choices of the nonterminal edge weights.
    The following table lists precisely the choices for which all five nonterminal edges are proper, together with the resulting labels of the four internal vertices.
    \[
      \begin{array}{@{}cc|ccccc|cccc@{}}
        \multicolumn{2}{@{}c|}{\text{terminal}}
        & \multicolumn{5}{c|}{\text{nonterminal}}
        & \multicolumn{4}{c@{}}{\text{labels}}\\
        v_0v_1 & v_3v_5
        & v_1v_2 & v_2v_3 & v_3v_4 & v_4v_1 & v_2v_4
        & v_1 & v_2 & v_3 & v_4\\
        \hline
        0 & 0 & 0 & 0 & 1 & 1 & 0 & 1 & 0 & 1 & 2\\
        0 & 0 & 1 & 1 & 0 & 0 & 0 & 1 & 2 & 1 & 0\\
        1 & 1 & 0 & 0 & 1 & 1 & 1 & 2 & 1 & 2 & 3\\
        1 & 1 & 1 & 1 & 0 & 0 & 1 & 2 & 3 & 2 & 1
      \end{array}
    \]
    The absence of rows with terminal-edge weights $(0,1)$ or $(1,0)$ proves that the terminal-edge weights must be equal.
    The labels of $v_1$ and $v_3$ and the converse assertion follow from the four displayed rows.
  \end{proof}

  For $k\in\{2,3\}$, a \emph{$k$-port synchronizer} is obtained from $k$ rhombus gadgets as follows.
  Take junction vertices $j_1,\dots,j_k$, indexed cyclically.
  For each $i\in\{1,\dots,k\}$, identify the two degree-one endpoints of the $i$-th rhombus gadget with $j_i$ and $j_{i+1}$, where $j_{k+1}=j_1$.
  Add one pendant edge at every junction vertex; these $k$ edges are the port edges.
  Thus every junction vertex is incident with two terminal edges and one port edge.
  The synchronizer can be embedded in a disk with the degree-one endpoints of its port edges on the boundary.
  \Cref{fig:cubic-port-synchronizers} shows the $2$- and $3$-port synchronizers.

\begin{figure}[H]
  \centering
  \begin{subfigure}[b]{0.49\textwidth}
    \centering
    \begin{tikzpicture}[x=\figureedgelength,y=\figureedgelength]
      \node[figure vertex] (j1) at ({-(sqrt(3)+1)/2},0) {};
      \node[figure vertex] (j2) at ({(sqrt(3)+1)/2},0) {};
      \node[figure vertex] (p1) at ({-(sqrt(3)+3)/2},0) {};
      \node[figure vertex] (p2) at ({(sqrt(3)+3)/2},0) {};

      \node[figure vertex] (u1) at ({-sqrt(3)/2},{sqrt(3)/2}) {};
      \node[figure vertex] (u2) at (0,{(sqrt(3)+1)/2}) {};
      \node[figure vertex] (u3) at ({sqrt(3)/2},{sqrt(3)/2}) {};
      \node[figure vertex] (u4) at (0,{(sqrt(3)-1)/2}) {};

      \node[figure vertex] (l1) at ({-sqrt(3)/2},{-sqrt(3)/2}) {};
      \node[figure vertex] (l2) at (0,{-(sqrt(3)-1)/2}) {};
      \node[figure vertex] (l3) at ({sqrt(3)/2},{-sqrt(3)/2}) {};
      \node[figure vertex] (l4) at (0,{-(sqrt(3)+1)/2}) {};

      \draw[figure edge] (p1)--(j1)--(u1)--(u2)--(u3)--(j2)--(p2);
      \draw[figure edge] (u1)--(u4)--(u3);
      \draw[figure edge] (u2)--(u4);
      \draw[figure edge] (j1)--(l1)--(l2)--(l3)--(j2);
      \draw[figure edge] (l1)--(l4)--(l3);
      \draw[figure edge] (l2)--(l4);
    \end{tikzpicture}
    \caption{The $2$-port synchronizer.}
  \end{subfigure}
  \hfill
  \begin{subfigure}[b]{0.49\textwidth}
    \centering
    \begin{tikzpicture}[x=\figureedgelength,y=\figureedgelength]
      \node[figure vertex] (j1) at (0,{(2+sqrt(3))/sqrt(3)}) {};
      \node[figure vertex] (j2) at ({-(2+sqrt(3))/2},{-(2+sqrt(3))/(2*sqrt(3))}) {};
      \node[figure vertex] (j3) at ({(2+sqrt(3))/2},{-(2+sqrt(3))/(2*sqrt(3))}) {};
      \node[figure vertex] (p1) at (0,{(2+sqrt(3))/sqrt(3)+1}) {};
      \node[figure vertex] (p2) at ({-(2+sqrt(3))/2-sqrt(3)/2},{-(2+sqrt(3))/(2*sqrt(3))-.5}) {};
      \node[figure vertex] (p3) at ({(2+sqrt(3))/2+sqrt(3)/2},{-(2+sqrt(3))/(2*sqrt(3))-.5}) {};

      \drawunitrhombus{r12}{j1}{j2}
      \drawunitrhombus{r23}{j2}{j3}
      \drawunitrhombus{r31}{j3}{j1}
      \draw[figure edge] (j1)--(p1);
      \draw[figure edge] (j2)--(p2);
      \draw[figure edge] (j3)--(p3);
    \end{tikzpicture}
    \caption{The $3$-port synchronizer.}
  \end{subfigure}

  \caption{The $2$- and $3$-port synchronizers.
  The pendant edges are the port edges.}\label{fig:cubic-port-synchronizers}
\end{figure}

  \begin{claim}\label{claim:port-synchronizer}
    Let $k\in\{2,3\}$.
    If every edge lying in the rhombus gadgets of a $k$-port synchronizer is proper, then all its port edges have the same weight $s$.
    Conversely, if all port edges have weight $s$, then the remaining edges can be weighted so that every edge lying in the rhombus gadgets is proper.
    In either case, every junction vertex has label $3s$.
  \end{claim}
  \begin{proof}
    By \cref{claim:rhombus-gadget}, the two terminal edges of each rhombus gadget have equal weights.
    Suppose that the two terminal edges meeting at a junction vertex have different weights.
    Their other endpoints then have labels $1$ and $2$, while the junction vertex has label $1$ plus the weight of its port edge and hence also has label $1$ or $2$.
    One of the two terminal edges is therefore improper, a contradiction.
    Because the rhombus gadgets form a cycle, these equalities imply that all terminal edges have the same weight $s$.

    If a port edge had weight $1-s$, then its junction vertex would have label $2s+(1-s)=1+s$, equal to the labels of its two neighbors in the rhombus gadgets.
    The two terminal edges at that junction vertex would be improper.
    Thus every port edge has weight $s$.

    Conversely, assign weight $s$ to every terminal edge and apply \cref{claim:rhombus-gadget} to each rhombus gadget.
    Every junction vertex then has label $3s$, while its two neighbors in the rhombus gadgets have label $1+s$.
    Hence the terminal edges are proper, and \cref{claim:rhombus-gadget} guarantees that the nonterminal edges are proper as well.
  \end{proof}

  We next construct a graph $H$ whose four degree-two vertices will impose a relation on the weights of four attached edges.
  Its vertex set is $\{u_0,u_1,u_2,u_3,p_1,p_2,p_3,p_4\}$.
  Its edge set is $\{u_0u_1,\allowbreak u_0u_3,\allowbreak u_0p_1,\allowbreak u_1u_2,\allowbreak u_1u_3,\allowbreak u_2p_2,\allowbreak u_2p_4,\allowbreak u_3p_1,\allowbreak p_2p_3,\allowbreak p_3p_4\}$.
  The vertices $u_0,u_1,u_2,u_3$ have degree $3$, while $p_1,p_2,p_3,p_4$ have degree $2$.
  The graph has a planar embedding in which all four vertices $p_i$ lie on the outer face, as shown in \cref{fig:planar-core}.

  \begin{figure}[H]
    \centering
    \begin{tikzpicture}[x=\figureedgelength,y=\figureedgelength]
      \node[figure vertex,label=left:$p_1$] (p1) at (0,0) {};
      \node[figure vertex,label=above:$u_0$] (u0) at ({sqrt(3)/2},.5) {};
      \node[figure vertex,label=below:$u_3$] (u3) at ({sqrt(3)/2},-.5) {};
      \node[figure vertex,label=above:$u_1$] (u1) at ({sqrt(3)},0) {};
      \node[figure vertex,label=above:$u_2$] (u2) at ({sqrt(3)+1},0) {};
      \node[figure vertex,label=above:$p_2$] (p2) at ({1+3*sqrt(3)/2},.5) {};
      \node[figure vertex,label=right:$p_3$] (p3) at ({1+2*sqrt(3)},0) {};
      \node[figure vertex,label=below:$p_4$] (p4) at ({1+3*sqrt(3)/2},-.5) {};
      \draw[figure edge] (u0)--(u1)--(u3)--(p1)--(u0);
      \draw[figure edge] (u0)--(u3);
      \draw[figure edge] (u1)--(u2);
      \draw[figure edge] (u2)--(p2)--(p3)--(p4)--(u2);
    \end{tikzpicture}
    \caption{The graph $H$.}
    \label{fig:planar-core}
  \end{figure}

  Attach one edge at each $p_i$.
  Let its weight be $s_i\in\{0,1\}$, and suppose that its endpoint outside $H$ has label $3s_i$.

  \begin{claim}\label{claim:planar-core}
    The edges of $H$ and the four attached edges can all be proper if and only if
    \begin{equation}\label{eq:planar-core-relation}
      s_1+s_3=s_2+s_4.
    \end{equation}
  \end{claim}
  \begin{proof}
    Put $y_{12}=w(u_1u_2)$.
    First consider the subgraph induced by $p_1,u_0,u_1,u_3$.
    For fixed values of $s_1$ and $y_{12}$, enumerate the weights of its five edges in the order $u_0u_1,u_0u_3,u_0p_1,u_1u_3,u_3p_1$.
    The vertex labels in the table include the contributions of the attached edge at $p_1$ and of the edge $u_1u_2$.
    The table lists precisely the choices for which the five displayed edges and the attached edge at $p_1$ are proper; the edge $u_1u_2$ will be checked after the other side of $H$ has been weighted.
    \[
      \begin{array}{@{}w{c}{2.2em}w{c}{2.2em}|ccccc|cccc@{}}
        \multicolumn{2}{@{}c|}{\text{fixed weights}}
        & \multicolumn{5}{c|}{\text{edge weights}}
        & \multicolumn{4}{c@{}}{\text{vertex labels}}\\
        s_1 & y_{12}
        & u_0u_1 & u_0u_3 & u_0p_1 & u_1u_3 & u_3p_1
        & p_1 & u_0 & u_1 & u_3\\
        \hline
        0 & 0 & 0 & 0 & 0 & 1 & 1 & 1 & 0 & 1 & 2\\
        0 & 0 & 1 & 0 & 1 & 0 & 0 & 1 & 2 & 1 & 0\\
        1 & 1 & 0 & 1 & 0 & 1 & 1 & 2 & 1 & 2 & 3\\
        1 & 1 & 1 & 1 & 1 & 0 & 0 & 2 & 3 & 2 & 1
      \end{array}
    \]
    The table shows that the six specified edges can all be proper only when $y_{12}=s_1$, and then $z(u_1)=1+s_1$.

    We next consider the cycle $u_2p_2p_3p_4u_2$.
    Using $y_{12}=s_1$, enumerate its four edge weights in the cyclic order $u_2p_2,p_2p_3,p_3p_4,p_4u_2$.
    The following table lists precisely the weight assignments to the cycle and attached edges for which the four cycle edges and the attached edges at $p_2,p_3,p_4$ are proper.
    The displayed value of $z(u_2)$ includes the contribution $y_{12}=s_1$, and the last column checks the edge $u_1u_2$.
    \[
      \begin{array}{@{}w{c}{1.8em}w{c}{1.8em}w{c}{1.8em}w{c}{1.8em}|cccc|cc@{}}
        \multicolumn{4}{@{}c|}{\text{attached-edge weights}}
        & \multicolumn{4}{c|}{\text{cycle-edge weights}}
        & \multicolumn{2}{c@{}}{\text{check at }u_2}\\
        s_1 & s_2 & s_3 & s_4
        & u_2p_2 & p_2p_3 & p_3p_4 & p_4u_2
        & z(u_2) & z(u_2)\neq 1+s_1\\
        \hline
        0 & 0 & 0 & 0 & 0 & 1 & 1 & 0 & 0 & \text{yes}\\
        0 & 0 & 1 & 1 & 0 & 1 & 0 & 0 & 0 & \text{yes}\\
        0 & 1 & 0 & 1 & 0 & 1 & 0 & 1 & 1 & \text{no}\\
        0 & 1 & 0 & 1 & 1 & 0 & 1 & 0 & 1 & \text{no}\\
        0 & 1 & 1 & 0 & 0 & 0 & 1 & 0 & 0 & \text{yes}\\
        1 & 0 & 0 & 1 & 1 & 1 & 0 & 1 & 3 & \text{yes}\\
        1 & 0 & 1 & 0 & 0 & 1 & 0 & 1 & 2 & \text{no}\\
        1 & 0 & 1 & 0 & 1 & 0 & 1 & 0 & 2 & \text{no}\\
        1 & 1 & 0 & 0 & 1 & 0 & 1 & 1 & 3 & \text{yes}\\
        1 & 1 & 1 & 1 & 1 & 0 & 0 & 1 & 3 & \text{yes}
      \end{array}
    \]
    For the eight weight patterns on the attached edges not displayed in the table, no assignment of the four cycle edges makes all the cycle edges and attached edges proper.
    Both assignments to the cycle edges for each of the patterns $(0,1,0,1)$ and $(1,0,1,0)$ fail on $u_1u_2$.
    The weight patterns on the attached edges that admit such an assignment are therefore
    \[
      (0,0,0,0),\ (0,0,1,1),\ (0,1,1,0),\ (1,0,0,1),\ (1,1,0,0),\ (1,1,1,1),
    \]
    which are precisely the solutions of \cref{eq:planar-core-relation}.
  \end{proof}

  The next gadget relates an input edge of weight $t$ to an output edge of weight $1-t$.
  An \emph{inverting gadget} is constructed from a $3$-port synchronizer $T$ by identifying the degree-one endpoints of two port edges with one new vertex $c$.
  The third port edge is the output edge, and $c$ is the input vertex.
  In the final construction, one additional input edge is incident with $c$.
  Suppose that the input edge has weight $t$ and that its other endpoint has label $3t$.
  If every edge of the inverting gadget except its output edge is proper, then \cref{claim:port-synchronizer} implies that the three port edges of $T$ have a common weight $s$ and that their junction vertices have label $3s$.
  The vertex $c$ then has label $t+2s$, while its three neighbors have labels $3t,3s,3s$.
  Since $s,t\in\{0,1\}$, all three edges incident with $c$ are proper if and only if $s=1-t$.
  Conversely, assign common weight $1-t$ to the port edges of $T$ and use \cref{claim:port-synchronizer} to complete the synchronizer; then every edge of the inverting gadget except its output edge is proper.
  In this completion, the output edge has weight $1-t$, and its endpoint inside $T$ has label $3(1-t)$.
  The gadget can be embedded in a disk with its input vertex and the degree-one endpoint of its output edge on the boundary.

  The clause gadget is obtained from $H$ and two disjoint inverting gadgets.
  Identify the degree-one endpoint of the output edge of the first inverting gadget with $p_2$, and identify the corresponding endpoint of the second inverting gadget with $p_4$.
  The four input vertices of the clause gadget are $p_1$, the input vertex of the first inverting gadget, $p_3$, and the input vertex of the second inverting gadget, in this cyclic order.
  When the clause gadget is used in the reduction, one input edge is incident with each input vertex.

  \begin{claim}\label{claim:planar-clause-gadget}
    Let the four input-edge weights be $t_1,t_2,t_3,t_4\in\{0,1\}$ in the cyclic order just specified, and suppose that the endpoint outside the clause gadget of the $i$-th input edge has label $3t_i$.
    The remaining edges can be weighted so that every edge of the clause gadget and all four input edges are proper if and only if $t_1+t_2+t_3+t_4=2$.
  \end{claim}
  \begin{proof}
    By the preceding analysis of the inverting gadget, the two output edges must have weights $1-t_2$ and $1-t_4$, and their endpoints inside the synchronizers then have labels $3(1-t_2)$ and $3(1-t_4)$.
    Conversely, with these output-edge weights, all remaining edges of the two inverting gadgets, together with their input edges, can be made proper.
    The four edges attached to $p_1,p_2,p_3,p_4$ therefore have weights $t_1,1-t_2,t_3,1-t_4$, and in each case the endpoint outside $H$ has label three times the weight of that edge.
    By \cref{claim:planar-core}, all edges of $H$ and the four attached edges can be proper if and only if $t_1+t_3=(1-t_2)+(1-t_4)$, which is equivalent to $t_1+t_2+t_3+t_4=2$.
  \end{proof}

  The clause gadget is simple and can be embedded in a disk with its four input vertices on the boundary.
  To obtain such an embedding, start with the embedding in \cref{fig:planar-core} and place the two inverting gadgets in disjoint neighborhoods of $p_2$ and $p_4$, choosing their embeddings so that their input vertices lie on the boundary of the disk.
  Since the condition $t_1+t_2+t_3+t_4=2$ is symmetric in the four inputs, any cyclic correspondence between the input vertices and the four clause occurrences may be used.

  We construct $G_\Phi$ by replacing each variable vertex with a $3$-port synchronizer and each clause vertex with a clause gadget, while preserving the fixed planar embedding of the formula graph.
  Choose pairwise disjoint disks around its variable and clause vertices.
  For each variable $x$, place the rhombus gadgets and junction vertices of a $3$-port synchronizer $S_x$ in the variable disk, and associate its three port edges with the three occurrences of $x$ in their cyclic order.
  In each clause disk, place a clause gadget so that its four input vertices correspond, in cyclic order, to the four incident edges of the formula graph.
  For every variable occurrence, draw the corresponding port edge of the variable synchronizer along the associated edge of the formula graph and identify its degree-one endpoint with the corresponding input vertex of the clause gadget.
  This edge is then both a port edge of the variable synchronizer and an input edge of the clause gadget.
  Because each variable synchronizer and each clause gadget lies in its assigned disk and every shared port/input edge follows the corresponding embedded edge of the formula graph, the resulting graph $G_\Phi$ is planar.

  Every vertex of $G_\Phi$ has degree $3$.
  The internal vertices of the rhombus gadgets and the vertices $u_i$ have degree $3$ by construction.
  Every junction vertex is incident with two terminal edges and one port edge, while every vertex $p_i$ or $c$ is incident with two edges inside its gadget and one input or output edge.
  The graph is simple because each variable synchronizer and each clause gadget is simple and distinct occurrences use distinct junction vertices and distinct input vertices of their clause gadget.
  In particular, repeated occurrences of one variable in one clause do not create parallel edges in $G_\Phi$.
  Since each variable synchronizer and each clause gadget has constant size, the construction has linear size and can be carried out in polynomial time.

  Suppose first that $\Phi$ has a $2$-in-$4$ assignment.
  Give all three port edges of a variable synchronizer $S_x$ weight $1$ when $x$ is true and weight $0$ when $x$ is false.
  By \cref{claim:port-synchronizer}, the remaining edges of the rhombus gadgets in $S_x$ can be weighted so that they are proper, and every junction vertex then has label three times the common weight of the port edges.
  For every clause, the endpoint outside the clause gadget of each input edge is therefore a junction vertex of a variable synchronizer whose label is three times the input-edge weight.
  Exactly two input edges of the clause have weight $1$, so \cref{claim:planar-clause-gadget} gives weights for all remaining edges that make every edge of the clause gadget, including its input edges, proper.
  Repeating this for every variable and clause yields a proper $\{0,1\}$-edge-weighting of $G_\Phi$.

  Conversely, suppose that $G_\Phi$ has a proper $\{0,1\}$-edge-weighting.
  Every edge lying in the rhombus gadgets of a variable synchronizer is proper, so \cref{claim:port-synchronizer} shows that its three port edges have the same weight.
  Assign the corresponding variable the value true when this weight is $1$ and false when it is $0$.
  Each junction vertex of a variable synchronizer has label three times the common weight of the port edges.
  Hence every input edge of a clause gadget has weight $t_i$ and has an endpoint outside the clause gadget with label $3t_i$.
  Since all edges of the clause gadget and its input edges are proper, \cref{claim:planar-clause-gadget} gives $t_1+t_2+t_3+t_4=2$.
  Exactly two occurrences in every clause are therefore assigned true, so the resulting assignment is a $2$-in-$4$ assignment of $\Phi$.

  We have constructed, in polynomial time, a simple cubic planar graph $G_\Phi$ that has the $\{0,1\}$-property if and only if $\Phi$ has a $2$-in-$4$ assignment.
  Thus the problem is NP-hard for the weight set $\{0,1\}$.
  \Cref{prop:regular-weight-normalization} transfers this result to every fixed pair of distinct integers $a$ and $b$.
  Together with membership in NP, this proves the theorem.
\end{proof}

\begin{corollary}\label{cor:liec-planar-cubic-hardness}
  It is NP-complete to decide whether a given simple cubic planar graph admits a locally irregular $2$-edge-coloring.
\end{corollary}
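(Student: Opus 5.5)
The corollary follows directly from \cref{thm:planar-ab-weighting} combined with the equivalence in \cref{prop:regular-weight-normalization}.

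For membership in NP, a proposed $2$-edge-coloring serves as the certificate. To verify it, we compute, for each edge $uv$ of color $c$, the degrees of $u$ and $v$ in the color-$c$ subgraph and compare them. This takes linear time.

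For NP-hardness, we use the trivial reduction (the identity map) from the problem of deciding the $\{0,1\}$-property on simple cubic planar graphs. That problem is NP-complete by \cref{thm:planar-ab-weighting} applied with $a=0$ and $b=1$. Let $X$ be a simple cubic planar graph. Since $X$ is $3$-regular, \cref{prop:regular-weight-normalization} with $r=3$ states that a $2$-edge-coloring of $X$, read as a $\{0,1\}$-edge-weighting, is locally irregular if and only if that weighting is proper. Hence $X$ admits a locally irregular $2$-edge-coloring if and only if $X$ has the $\{0,1\}$-property. The instance is unchanged, so it remains simple, cubic and planar, and the reduction runs in polynomial time.

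The only point needing care is that the equivalence in \cref{prop:regular-weight-normalization} is used as a statement about the whole graph rather than edge by edge. This is already settled in its proof: for each edge $uv$, the condition $d_0(u)\neq d_0(v)$ is equivalent to $d_1(u)\neq d_1(v)$, so the properness condition does not depend on which color class contains $uv$. No further obstacle arises.
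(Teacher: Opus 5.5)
Your proposal is correct and matches the paper's proof. Membership in NP follows by direct verification, and NP-hardness follows from the identity reduction from the $\{0,1\}$-property, using \cref{prop:regular-weight-normalization} together with \cref{thm:planar-ab-weighting} for $a=0$ and $b=1$.
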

\begin{proof}
  Membership in NP follows because local irregularity of the two color classes can be checked in polynomial time.
  By \cref{prop:regular-weight-normalization}, a cubic graph admits a locally irregular $2$-edge-coloring if and only if it has a proper $\{0,1\}$-edge-weighting.
  The NP-hardness therefore follows from \cref{thm:planar-ab-weighting} with $a=0$ and $b=1$.
\end{proof}

\subsection{Exact subexponential algorithm with a matching lower bound}\label{sec:planar-exact}

For the upper bound, we allow parallel edges but exclude loops.
Parallel edges are treated as distinct and represented explicitly.
The connected graph with no edges has one vertex and the empty weighting is proper, so we assume that $m\geq 1$.

\begin{theorem}\label{thm:planar-exact}
  Fix distinct integers $a$ and $b$.
  Given a connected loopless planar multigraph $G$ with $m\geq 1$ edges, one can either find a proper $\{a,b\}$-edge-weighting of $G$ or conclude that none exists in deterministic time $2^{O(\sqrt m)}$.
\end{theorem}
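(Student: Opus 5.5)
The plan is a dynamic program over a tree decomposition of width $O(\sqrt m)$ of a bounded-degree planar graph derived from $G$. The main difficulty is that vertex labels in $G$ can take up to $d_G(v)+1$ values. First, normalise the weights: replace $w$ by $x(e)=(w(e)-a)/(b-a)\in\{0,1\}$, so that $z_w(v)=a\,d_G(v)+(b-a)\,k(v)$, where $k(v)$ is the number of edges at $v$ with weight $b$. Then an edge $uv$ is proper if and only if $a\bigl(d_G(u)-d_G(v)\bigr)+(b-a)\bigl(k(u)-k(v)\bigr)\neq 0$. This is a condition on the two counters $k(u),k(v)$ and the fixed degrees only.

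\emph{Step 1 (bounded-degree planar model).} Using the fixed rotation system of a planar embedding of $G$, replace each vertex $v$ of degree $d$ by a path $P_v$ of $d$ new vertices. Each vertex of $P_v$ receives one of the original edges at $v$, in rotation order. The resulting graph $G'$ is planar, has maximum degree $3$, has $2m$ vertices, and each edge of $G$ becomes exactly one edge of $G'$ (parallel edges become distinct edges between distinct path vertices). Since $G'$ is planar with $O(m)$ vertices, it has treewidth $O(\sqrt m)$. A tree decomposition of width $O(\sqrt m)$ can be computed deterministically in $2^{O(\sqrt m)}$ time, or in polynomial time with a constant-factor approximation. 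I would then convert it into a nice tree decomposition.

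\emph{Step 2 (DP).} Orient each path $P_v$ and view the path edges as carrying prefix counters: the path edge after the $i$-th vertex carries the number of $b$-weights among the first $i$ original edges at $v$. The last vertex of $P_v$ then holds $k(v)$. The state of a bag records the weight of each original edge meeting the bag and the prefix counter on each path edge meeting the bag. It also records, for every original edge $uv$ with an endpoint vertex in the bag, the two final counters $k(u),k(v)$ that the endpoints are \emph{promised} to reach. Local consistency is checked at each bag: a prefix counter increases by the incident edge weight along $P_v$, the promised $k(v)$ must equal the counter at the end of $P_v$, and the promised pairs must satisfy the properness inequality. Each local check involves $O(1)$ items because $G'$ has maximum degree $3$. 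Introduce, forget and join nodes are handled as usual, and a witness weighting is recovered by backtracking.

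\emph{Main obstacle: the state count.} Done naively, each counter ranges over $\{0,\dots,d_G(v)\}$. That gives $m^{O(\sqrt m)}=2^{O(\sqrt m\log m)}$ states, off by a logarithmic factor. To remove it, I would first try to replace the promised values $k(u),k(v)$ on an edge $uv$ by the bounded information actually needed. The only requirement is that $k(u)-k(v)$ avoid the single forbidden value $\delta_{uv}=a\bigl(d_G(v)-d_G(u)\bigr)/(b-a)$, if that is an integer. I would try to propagate, along each $P_v$, only a residue or a bounded ``distance to the forbidden value'' for each cut edge, and argue that a constant number of bits per bag element suffices.

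If that fails, the fallback is a weighted balanced-separator argument in the style of Lipton--Tarjan with vertex weights $\log(d_G(v)+1)$. The point is that $\sum_v \log(d_G(v)+1)$ over any separator crossing $P_v$ only $O(1)$ times is $O(\sqrt m)$. The justification is that a high-degree vertex $v$ contributes about $\log d_G(v)$ bits but occupies $d_G(v)$ vertices of $G'$, so a balanced separator of size $O(\sqrt m)$ in $G'$ can afford only a few such vertices. Making this accounting precise, via a sphere-cut decomposition of $G'$ whose noose meets each path $P_v$ a bounded number of times, is the step I expect to be hardest. It is the step that determines whether the bound is truly $2^{O(\sqrt m)}$.
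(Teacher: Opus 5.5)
\textbf{There is a genuine gap.} Your normalisation and the reduction of properness to a condition on the counts $k(u),k(v)$ are right. Your DP over a width-$O(\sqrt m)$ decomposition of $G'$ is also correct. But, as you say yourself, it only gives $2^{O(\sqrt m\log m)}$. Removing the logarithm is the whole content of the theorem beyond the routine bound, and the proposal does not supply it. Neither suggestion closes the gap.

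\emph{The bounded-summary idea.} Properness of $uv$ is the exact condition $k(u)-k(v)\neq\delta_{uv}$. Residues modulo a constant can certify only the stronger condition $k(u)-k(v)\not\equiv\delta_{uv}$, so such a DP is sound but rejects feasible instances. That trick yields only a lower bound on the number of proper edges, which is how the paper uses it for the EPTAS, not for the exact algorithm. You give no other mechanism.

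\emph{The fallback heuristic.} It is wrong as stated. A separator of $G'$ cuts through $P_v$ using one vertex of $P_v$, not $d_G(v)$ of them. So nothing prevents a separator of size $O(\sqrt m)$ from meeting $\Theta(\sqrt m)$ paths of length $\Theta(\sqrt m)$, each carrying a counter of $\Theta(\log m)$ bits. Vertex weights in a Lipton--Tarjan-type theorem only control balance; they do not bound any cost of the separator. Finally, no standard result gives sphere-cut decompositions whose nooses meet every $P_v$ boundedly often.

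\textbf{The missing idea} is a planar separator theorem with separate vertex \emph{costs} and \emph{weights} (Aleksandrov--Djidjev). It balances with respect to the weights and returns a separator of total cost $O\bigl(\sqrt{\sum_v c(v)^2}\bigr)$. The paper applies it to the underlying simple graph of $G$ itself, not to a degree-reduced graph, with
\[
c(v)=1+\lceil\log_2(d_G(v)+1)\rceil \quad\text{and weight } c(v)^2 .
\]
\begin{itemize}
\item The decisive accounting is $c(v)^2=O(d_G(v)+1)$, so $\sigma(V)=\sum_v c(v)^2=O(n+m)=O(m)$.
\item Balancing by $c(v)^2$ halves $\sigma$ at each recursion level.
\item Every bag is a union of separators along a root-to-leaf path of the recursion, so its total cost is $O\bigl(\sum_i\sqrt{2^{-i}\sigma(V)}\bigr)=O(\sqrt m)$.
\item The DP stores one pair $(p_v,q_v)$ per bag vertex, which gives $\prod_{v\in X}(d_G(v)+1)^2=2^{O(\sum_{v\in X}c(v))}=2^{O(\sqrt m)}$ states.
\end{itemize}

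Your detour through $G'$ works against this. In $G'$ every vertex of $P_v$ carries a counter of $\Theta(\log d_G(v))$ bits, so the natural cost function there has $\sum_v d_G(v)\log^2(d_G(v)+1)$ as its sum of squared costs. This is $\Theta(m\log^2 m)$, for example for $K_{2,n}$ or two vertices joined by $m$ parallel edges, and the cost-separator bound would again lose a logarithm. The counter for $v$ has to be charged once, to $v$.
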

\begin{proof}
  Let $n=|V(G)|$, and let $H$ be the underlying simple graph of $G$.
  A tree decomposition is a tree whose nodes carry bags $X\subseteq V(H)$ such that every vertex appears in a bag, the bags containing any fixed vertex induce a connected subtree, and every edge has both endpoints in some bag.
  Every tree decomposition of $H$ is also a tree decomposition of $G$, since $H$ and $G$ have the same pairs of adjacent vertices.

  Write every edge weight as $w(e)=a+(b-a)x(e)$ with $x(e)\in\{0,1\}$, and for $v\in V(G)$ let $q_x(v)=\sum_{e\in E_G(v)}x(e)$.
  For $v\in V(G)$ and $r\in\{0,\dots,d_G(v)\}$, put $\lambda_v(r)=d_G(v)a+(b-a)r$.
  Then $z_w(v)=\lambda_v(q_x(v))$, so an edge $e=uv$ is proper exactly when $\lambda_u(q_x(u))\neq\lambda_v(q_x(v))$.

  We first construct a tree decomposition of $H$ using degree-dependent vertex costs.
  For $v\in V(G)$, set $c(v)=1+\left\lceil\log_2(d_G(v)+1)\right\rceil$, and for $U\subseteq V(G)$, set $\sigma(U)=\sum_{u\in U}c(u)^2$.
  We construct the decomposition recursively, starting with $U=V(G)$.
  At every recursive call, $H[U]$ is connected because $H$ is connected initially and recursive calls are made on components.
  If the current vertex set $U$ has at most one element, use $U$ as a single bag.
  Otherwise, apply the cost-and-weight separator theorem~\cite[Theorem~2]{aleksandrov2002partitioning} to the planar graph $H[U]$ with positive vertex weight $c(v)^2$, positive vertex cost $c(v)$, and $t=1/2$.
  It gives in polynomial time a separator $S\subseteq U$ such that every component $C$ of $H[U\setminus S]$ satisfies $\sigma(V(C))\leq \sigma(U)/2$ and $\sum_{v\in S}c(v)\leq 8\sqrt{\sigma(U)}$.
  Create a root bag $S$.
  For every component $C$ of $H[U\setminus S]$, recursively decompose $V(C)$, replace each bag $X$ of the resulting decomposition by $X\cup S$, and attach its root to the bag $S$.
  The root bag covers the edges within $S$, the augmented child bags cover the edges between $S$ and each component, and the remaining edges are covered recursively.
  A vertex outside $S$ occurs in one child decomposition, while a vertex in $S$ occurs in the root bag and every bag of every child decomposition, so the bags containing each vertex induce a connected subtree.
  If $|U|>1$, then $S\neq\emptyset$, since otherwise $H[U]$ would be a component of $H[U\setminus S]$ with weight $\sigma(U)>\sigma(U)/2$.
  The separator sets and base-case singletons from distinct recursive calls are pairwise disjoint, so there are $O(n)$ calls and hence $O(n)$ bags.

  Every bag is the union of the separators from an initial segment of a root-to-leaf path in the recursion tree, together with at most one base-case vertex.
  If a recursive call at depth $i$ has vertex set $U$, then $\sigma(U)\leq 2^{-i}\sigma(V(G))$, so every resulting bag $X$ satisfies
  \begin{equation}\label{eq:weighted-bag-cost}
    \sum_{v\in X}c(v)
    =
    O\left(
      \sum_{i=0}^{\infty}\sqrt{2^{-i}\sigma(V(G))}
    \right)
    =
    O\left(\sqrt{\sigma(V(G))}\right).
  \end{equation}
  Convert this decomposition into a nice tree decomposition with an empty root and empty leaves, introduce-vertex, introduce-edge, forget-vertex, and join nodes.
  This uses only duplicate bags and subsets of existing bags, so it does not increase the maximum value of $\sum_{v\in X}c(v)$.
  For every edge $e\in E(G)$, choose a bag containing its endpoints and introduce $e$ exactly once at a new introduce-edge node.

  We now run a dynamic program over the nice tree decomposition.
  For a node $t$, let $X_t$ be its bag, let $V_t$ be the set of vertices appearing in bags of its subtree, and let $E_t$ be the edges introduced in that subtree.
  Since the bags containing any fixed vertex induce a connected subtree, all bags containing a vertex in $V_t\setminus X_t$ lie in the subtree of $t$; hence every edge incident with such a vertex belongs to $E_t$.
  A state assigns a pair $(p_v,q_v)$ to every $v\in X_t$, where $0\leq p_v\leq q_v\leq d_G(v)$.
  The value $p_v$ is the number of edges of $E_t$ incident with $v$ that have weight $b$, while $q_v$ is the total number of edges incident with $v$ that will have weight $b$ in the completed weighting.
  A state is feasible if one can choose $q_v\in\{0,\dots,d_G(v)\}$ for every $v\in V_t\setminus X_t$ and weight all edges of $E_t$ so that the following conditions hold.
  For every $v\in X_t$, exactly $p_v$ edges of $E_t$ incident with $v$ have weight $b$.
  For every $v\in V_t\setminus X_t$, exactly $q_v$ incident edges have weight $b$.
  Finally, every edge $e=uv\in E_t$ satisfies $\lambda_u(q_u)\neq\lambda_v(q_v)$.

  At an empty leaf, the unique state on the empty bag is feasible.
  At an introduce-vertex node for $v$, choose $q_v\in\{0,\dots,d_G(v)\}$ and set $p_v=0$.
  At an introduce-edge node for an edge $e=uv$, discard a state unless $\lambda_u(q_u)\neq\lambda_v(q_v)$.
  For each remaining state, consider both possible weights of $e$.
  If $e$ receives weight $a$, leave $p_u$ and $p_v$ unchanged.
  If $e$ receives weight $b$, increase both partial counts by one, discarding the result if either count exceeds the corresponding $q$-value.
  At a forget-vertex node for $v$, retain a child state only if $p_v=q_v$, and then delete the pair for $v$.
  At a join node, combine two child states only when their $q_v$-values agree for every bag vertex, and set each $p_v$ to the sum of the two child $p_v$-values.
  The two child subtrees have no common vertex outside the bag, and their introduced edge sets are disjoint, so these sums are correct.
  Discard a combination if $p_v>q_v$ for some $v$.

  Induction on the decomposition shows that these transitions compute exactly the feasible states.
  At the empty root, the unique state is feasible exactly when $G$ has a proper weighting.

  It remains to bound the number of states.
  For a vertex of degree $d$, there are
  \[
    \sum_{q=0}^{d}(q+1)
    =
    \frac{(d+1)(d+2)}{2}
    \leq
    (d+1)^2
  \]
  possible pairs $(p_v,q_v)$.
  Hence a bag $X$ has at most
  \[
    \prod_{v\in X}(d_G(v)+1)^2
    =
    2^{O\left(\sum_{v\in X}c(v)\right)}
    =
    2^{O(\sqrt{\sigma(V(G))})}
  \]
  states by \cref{eq:weighted-bag-cost}.
  Since $c(v)^2=O(d_G(v)+1)$ for every $v$,
  \[
    \sigma(V(G))
    =
    O\Bigg(n+\sum_{v\in V(G)}d_G(v)\Bigg)
    =
    O(n+m).
  \]
  Connectedness and $m\geq 1$ give $n\leq m+1$, so $\sigma(V(G))=O(m)$.
  The nice decomposition has polynomially many nodes, and a naive join transition is quadratic in the number of states, so the total running time is $2^{O(\sqrt m)}$.
  For each feasible state, store the child state or states used to obtain it and, at an introduce-edge node, the chosen edge weight.
  If the root state is feasible, these records recover a proper weighting within the same time bound.
\end{proof}

Next we show that the dependence on $\sqrt m$ in the exponent is optimal under the Exponential Time Hypothesis, even for connected simple cubic planar graphs.

\begin{theorem}\label{thm:planar-eth-lower-bound}
  Fix distinct integers $a$ and $b$.
  Assuming the Exponential Time Hypothesis, there is no $2^{o(\sqrt m)}$-time algorithm for deciding whether a connected simple cubic planar graph with $m$ edges has a proper $\{a,b\}$-edge-weighting.
\end{theorem}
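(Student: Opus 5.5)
We obtain the lower bound by tracking sizes in the reduction of \cref{thm:planar-ab-weighting}. The reduction maps an instance $\Phi$ of planar \textsc{Monotone 2-in-4-SAT} with $N$ variables and $M$ clauses to a simple cubic planar graph $G_\Phi$. Every variable synchronizer and every clause gadget has constant size, so $G_\Phi$ has $O(N+M)$ vertices and $m=O(N+M)$ edges. Since every variable occurs exactly three times and every clause has four occurrences, $3N=4M$, so $m=\Theta(N)$. By \cref{prop:regular-weight-normalization}, the $\{0,1\}$ and $\{a,b\}$ versions coincide on $G_\Phi$.

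A $2^{o(\sqrt m)}$-time algorithm for the $\{a,b\}$-property on connected simple cubic planar graphs would therefore decide this restricted SAT problem in time $2^{o(\sqrt N)}$. The remaining task is to show that, under ETH, the restricted planar problem admits no $2^{o(\sqrt N)}$-time algorithm. First I would check whether \cite{scheffler2022distance} already states such a bound. Hardness results in that framework are usually proved by linear-size reductions from \textsc{Planar 3-SAT}, whose ETH lower bound is $2^{o(\sqrt{n})}$ via the linear-size planarization of 3-SAT (Lichtenstein) combined with the sparsification lemma. If the cited reduction is linear in the size of the planar source instance, the bound transfers directly. Otherwise I would reprove it: start from planar 3-SAT with $O(n)$ size and no $2^{o(\sqrt n)}$ algorithm, and trace each step of the chain to \textsc{Monotone 2-in-4-SAT} with exactly three occurrences per variable, checking that every step uses constant-size gadgets and so preserves linear size and planarity. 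This verification is the main obstacle, because the ETH bound needs linear blow-up, while NP-completeness alone permits any polynomial blow-up.

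Connectedness also needs handling. $G_\Phi$ is connected exactly when the formula graph is connected. If it is not, the components of the formula graph can be decided independently, and the sizes of the corresponding subinstances add up to the total. So we can assume the formula graph is connected: apply the hypothetical algorithm to each component $G_i$ with $m_i$ edges and use $\sum_i 2^{o(\sqrt{m_i})}\le m\cdot 2^{o(\sqrt m)}$. Alternatively, run the reduction on each connected component of the formula graph separately. Either way, the hypothetical algorithm yields a $2^{o(\sqrt N)}$ algorithm for the restricted SAT problem, which contradicts ETH.
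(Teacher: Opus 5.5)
Your first step is sound. The construction in the proof of \cref{thm:planar-ab-weighting} has linear size, so $m=\Theta(N)$. Your component-wise treatment of connectivity also works, since each component of the formula graph is itself a valid instance. The gap is the step you flag as the main obstacle and then leave open. You never prove that planar \textsc{Monotone 2-in-4-SAT} with three occurrences per variable has no $2^{o(\sqrt N)}$-time algorithm under ETH, and the route you sketch does not deliver it.

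First, Lichtenstein's planarization is not linear. It replaces each of up to quadratically many crossings by a constant-size crossover gadget, and this quadratic blow-up is exactly where the square root comes from. A linear-size planarization, combined with the $2^{O(\sqrt n)}$-time algorithm for planar \textsc{3-SAT}, would refute ETH.

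Second, you cannot expect to start from a planar instance and preserve planarity step by step along the natural chain. The passage from \textsc{3-SAT} to the self-complementary $2$-in-$4$ constraint goes through $\operatorname{NAE}(\ell_1,\ell_2,\ell_3,f)$, with one global variable $f$ in every clause, which destroys planarity. Moreover, planar \textsc{NAE-3-SAT} is polynomial-time solvable, so no planarity-preserving route through it exists unless $\mathrm{P}=\mathrm{NP}$. Scheffler's hardness proof planarizes a drawing of a general formula graph with crossover gadgets. It is not a linear reduction from a planar source, so the check you propose to make on the cited result comes out negative.

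The missing idea is that the source need not be planar: the quadratic blow-up should be accepted, not avoided. The paper proceeds as follows.
\begin{enumerate}
\item Start from sparse \textsc{3-SAT} with $N$ variables and $O(N)$ clauses, which has no $2^{o(N)}$-time algorithm by the sparsification lemma.
\item Reduce it linearly to a nonplanar \textsc{Monotone 2-in-4-SAT} instance: use the NAE trick with $f$, split the four-argument NAE constraints, and add the constraints $R(x^0,x^0,x^1,x^1)$ and $R(y_1,y_2,y_3,t)$.
\item Draw the formula graph, which has $O(N)$ edges, so that every pair of edges crosses at most once; this gives $O(N^2)$ crossings.
\item Remove each crossing with a constant-size crossover gadget, then apply the linear variable-replacement step to reach exactly three occurrences per variable.
\end{enumerate}
This yields $m_\Phi=O(N^2)$. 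A $2^{o(\sqrt m)}$-time algorithm would then solve sparse \textsc{3-SAT} in time $2^{o(N)}$, contradicting ETH. The paper also tracks connectivity through these transformations, but your component-wise argument would serve equally well there.
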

\begin{proof}
  By the sparsification lemma, the Exponential Time Hypothesis implies that \textsc{3-SAT} with $N$ variables and $O(N)$ clauses cannot be solved in time $2^{o(N)}$~\cite{impagliazzo2001which}.
  We first give a linear-size reduction from this sparse problem to \textsc{Monotone 2-in-4-SAT}.
  Write $R(y_1,y_2,y_3,y_4)$ for the constraint that exactly two of its four arguments are true.

  Introduce a global variable $f$, and replace every clause $\ell_1\lor\ell_2\lor\ell_3$ by $\operatorname{NAE}(\ell_1,\ell_2,\ell_3,f)$, where $\operatorname{NAE}$ means that its arguments are not all equal.
  Every satisfying assignment of the original formula extends to these constraints by setting $f=0$.
  Conversely, let an assignment satisfy all these constraints.
  If $f=0$, its restriction to the original variables satisfies the original formula.
  If $f=1$, complementing all original variables gives a satisfying assignment of the original formula.
  Replace every four-argument NAE constraint by two three-argument constraints with a fresh variable $u$, using $\operatorname{NAE}(p,q,r,s)$ if and only if $\exists u\,\bigl(\operatorname{NAE}(p,q,u)\mathbin{\wedge}\operatorname{NAE}(\neg u,r,s)\bigr)$.
  For fixed values of $p,q,r,s$, such a value of $u$ exists exactly when the four values are not all equal.

  For every variable $x$ of the resulting \textsc{NAE-3-SAT} instance, introduce variables $x^0$ and $x^1$ and add the constraint $R(x^0,x^0,x^1,x^1)$.
  Represent the literal $x$ by $x^0$ and the literal $\neg x$ by $x^1$.
  For every constraint $\operatorname{NAE}(y_1,y_2,y_3)$, introduce a fresh variable $t$ and add $R(y_1,y_2,y_3,t)$.
  The consistency constraint enforces $x^0\neq x^1$, and the constraint involving $t$ admits a value of $t$ exactly when one or two of $y_1,y_2,y_3$ are true.
  The resulting monotone $2$-in-$4$ formula $\Psi$ is equisatisfiable with the original formula and has $O(N)$ variable occurrences.
  Its formula graph is connected.
  For each original clause, the two resulting constraints are both connected to the constraint $R(u^0,u^0,u^1,u^1)$ for the variable $u$ introduced when the corresponding four-argument NAE constraint is split, and one of them contains $f^0$.

  We next track the size of the planarization used in the proof of \cref{thm:planar-ab-weighting}.
  In Scheffler's planarization construction~\cite[Lemma~20]{scheffler2022distance}, every clause is copied twice, and the formula graph, which still has $O(N)$ edges, is drawn so that every pair of edges crosses at most once.
  The drawing therefore has $O(N^2)$ crossings.
  Each crossing-elimination step adds a constant number of variables and clauses and reduces the number of crossings by one without creating new crossings.
  It follows that the resulting planar formula has size $O(N^2)$ and every variable occurs a multiple of three times.
  The variable-replacement construction~\cite[Theorem~21]{scheffler2022distance} replaces a variable with $3k$ occurrences by $O(k)$ variables and clauses while preserving planarity.
  Summing over all variables preserves the $O(N^2)$ size bound, so we obtain an equivalent planar \textsc{Monotone 2-in-4-SAT} instance $\Phi$ in which every variable occurs exactly three times.
  Duplicating clauses preserves connectedness.
  In the crossing-elimination construction~\cite[Lemma~20]{scheffler2022distance}, each crossing is replaced by a connected subgraph attached to the endpoints of the two crossing edges.
  In the variable-replacement construction~\cite[Theorem~21]{scheffler2022distance}, the subgraph replacing a variable is connected and contains all of its clause incidences.
  Hence the formula graph remains connected throughout these transformations, and the formula graph of $\Phi$ is connected.

  Apply the construction from the proof of \cref{thm:planar-ab-weighting} to $\Phi$.
  The construction is linear in the size of $\Phi$, so the resulting simple cubic planar graph $G_\Phi$ has $m_\Phi=O(N^2)$ edges.
  Each variable synchronizer and clause gadget is connected, and every edge of the formula graph is represented by an edge joining the corresponding pair, so $G_\Phi$ is connected.
  Moreover, $G_\Phi$ has a proper $\{0,1\}$-edge-weighting if and only if $\Phi$ has a $2$-in-$4$ assignment.

  A $2^{o(\sqrt m)}$-time algorithm for the $\{0,1\}$-edge-weighting problem on connected simple cubic planar graphs would decide $G_\Phi$ in time $2^{o(\sqrt{m_\Phi})}=2^{o(N)}$, and thus solve sparse \textsc{3-SAT} in time $2^{o(N)}$, contradicting the Exponential Time Hypothesis.

  This proves the lower bound for the weight set $\{0,1\}$.
  Finally, \cref{prop:regular-weight-normalization} transfers it to every fixed pair of distinct integers $a$ and $b$.
\end{proof}

\begin{corollary}\label{cor:liec-planar-cubic-subexp}
  Given a simple cubic planar graph with $n$ vertices, one can decide whether it admits a locally irregular $2$-edge-coloring in deterministic time $2^{O(\sqrt n)}$.
  Assuming the Exponential Time Hypothesis, no $2^{o(\sqrt n)}$-time algorithm exists for this problem.
\end{corollary}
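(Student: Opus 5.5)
The plan is to reduce the corollary to the two preceding theorems via \cref{prop:regular-weight-normalization}, and then translate the parameter $m$ into $n$ using cubicity.

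By \cref{prop:regular-weight-normalization} with $a=0$ and $b=1$, a simple cubic planar graph $G$ admits a locally irregular $2$-edge-coloring if and only if it has a proper $\{0,1\}$-edge-weighting. Moreover, a proper $\{0,1\}$-edge-weighting, read as a $2$-edge-coloring, is itself a locally irregular coloring. So both parts of the statement concern exactly the decision problem addressed by \cref{thm:planar-exact,thm:planar-eth-lower-bound}.

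For the upper bound, connectedness is needed by \cref{thm:planar-exact}, so I first split $G$ into its connected components. A $2$-edge-coloring of $G$ is locally irregular exactly when its restriction to each component is, because local irregularity only involves adjacent vertices. Each component $C$ is a connected simple (hence loopless) cubic planar graph. It has $m_C=3n_C/2\ge 1$ edges, since a cubic component has at least four vertices. Running the algorithm of \cref{thm:planar-exact} with $a=0$, $b=1$ on each component takes time $2^{O(\sqrt{m_C})}=2^{O(\sqrt{n})}$. There are at most $n$ components, so the total time is $n\cdot 2^{O(\sqrt n)}=2^{O(\sqrt n)}$. If every component has a proper weighting, their union is a proper weighting of $G$; otherwise $G$ has none.

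For the lower bound, I argue by contradiction. Suppose some algorithm decides locally irregular $2$-edge-colorability of $n$-vertex simple cubic planar graphs in time $2^{o(\sqrt n)}$. On a connected simple cubic planar graph with $m$ edges we have $n=2m/3$, so $2^{o(\sqrt n)}=2^{o(\sqrt m)}$. By the equivalence above, this algorithm would decide the $\{0,1\}$-property for such graphs in time $2^{o(\sqrt m)}$, contradicting \cref{thm:planar-eth-lower-bound} under the Exponential Time Hypothesis.

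The only mildly delicate point is the connectedness hypothesis in \cref{thm:planar-exact}, which the component decomposition handles. Otherwise the argument is bookkeeping: the edge–vertex conversion $m=3n/2$ and the fact that running over components multiplies the time by only a polynomial factor.
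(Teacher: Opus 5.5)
Your proposal is correct and follows essentially the same route as the paper. You split into components, apply \cref{thm:planar-exact} with $a=0$, $b=1$ to each component (using $m_C=3n_C/2$) for a total time of $n2^{O(\sqrt n)}=2^{O(\sqrt n)}$, and derive the lower bound from \cref{thm:planar-eth-lower-bound} via \cref{prop:regular-weight-normalization}; your explicit conversion $n=2m/3$ for connected inputs just spells out a step the paper leaves implicit.
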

\begin{proof}
  Let the connected components have $n_1,\dots,n_k$ vertices.
  Each component is cubic and has $3n_i/2$ edges, and by \cref{prop:regular-weight-normalization} it admits a locally irregular $2$-edge-coloring if and only if it has a proper $\{0,1\}$-edge-weighting.
  Applying \cref{thm:planar-exact} with $a=0$ and $b=1$ to each component takes total time at most $n2^{O(\sqrt n)}=2^{O(\sqrt n)}$.
  The lower bound follows from \cref{thm:planar-eth-lower-bound}, which already holds for connected simple cubic planar graphs.
\end{proof}

\subsection{An EPTAS for the maximization problem}\label{sec:planar-eptas}

In this subsection, we allow parallel edges but exclude loops.
Parallel edges are treated as distinct, so each copy contributes separately to the incident-edge sets, the vertex labels, and the objective value.

The problem \textsc{Max $\{a,b\}$-edge-weighting} asks for an $\{a,b\}$-edge-weighting that maximizes the number of proper edges.
For a graph $G$, let $\OPT_{\mathrm{EW}}(G)$ denote this maximum.
\Cref{thm:planar-ab-weighting} implies that computing $\OPT_{\mathrm{EW}}(G)$ is NP-hard even for simple cubic planar graphs, because such a graph has the $\{a,b\}$-property exactly when $\OPT_{\mathrm{EW}}(G)=|E(G)|$.
We nevertheless obtain a deterministic EPTAS on connected loopless planar multigraphs.
We first bound the running time in the arithmetic model, in which every arithmetic operation on integers takes unit time.

The scheme constructs one candidate weighting for each choice of a congruence class of breadth-first search (BFS) layers and a prime modulus $p$.
Deleting the chosen layers leaves components contained in a bounded number of consecutive layers.
For each remaining component, a dynamic program chooses the weights of its edges and, for every vertex, a residue representing the total weight of incident edges joining it to deleted vertices.
Consider an edge that is proper under an optimal weighting.
It is absent from the remaining graph only if one of its endpoints is deleted; otherwise, the residues modulo $p$ of its endpoint labels are equal only if $p$ divides the nonzero difference of its endpoint labels.
Averaging over the layer classes and the selected primes yields a pair for which at least a $(1-\varepsilon)$-fraction of the optimal proper edges have neither endpoint deleted and have endpoint labels with distinct residues modulo $p$.
The modular subproblem is solved by a dynamic program on a tree derived from a planar embedding.
Every state records modular information for only $O(k)$ vertices.

To formulate the dynamic program, suppose that $q$ edges incident with a vertex join it to deleted layers.
If exactly $t$ of these edges have weight $b$, then their total weight is $qa+t(b-a)$.
For $q\geq 0$ and $p\geq 2$, let $\mathcal R_p(q)\subseteq\{0,\dots,p-1\}$ be the set of residues modulo $p$ of $qa+t(b-a)$ as $t$ ranges over $\{0,\dots,q\}$.
A choice $r(v)\in\mathcal R_p(q(v))$ can therefore represent the contribution to the label of $v$ from the $q(v)$ omitted incident edges.
For a subgraph $C$, a modulus $p$, an edge-weighting $x:E(C)\rightarrow\{a,b\}$, and residues $r(v)\in\{0,\dots,p-1\}$ for $v\in V(C)$, let $\bar z_{x,r}(v)$ denote the residue modulo $p$ of $r(v)+\sum_{e\in E_C(v)}x(e)$.
The following lemma solves the resulting modular optimization problem for a component contained in at most $k-1$ consecutive BFS layers.

\begin{lemma}\label{lem:planar-modular-strip}
  Let $k\geq 2$, let $G$ be a connected loopless planar multigraph, and fix a root $s\in V(G)$.
  For $i\geq 0$, let $L_i=\{v\in V(G):\dist_G(s,v)=i\}$ be the $i$-th BFS layer rooted at $s$.
  Let $0\leq i_0\leq i_1$ satisfy $i_1-i_0\leq k-2$, and let $C$ be any connected component of $G[L_{i_0}\cup\dots\cup L_{i_1}]$ that meets both $L_{i_0}$ and $L_{i_1}$.
  For each $v\in V(C)$, let $q(v)$ be an integer with $0\leq q(v)\leq d_G(v)$.
  Finally, let $p\geq 2$ be a modulus.

  After an $O(|V(G)|+|E(G)|)$-time preprocessing step independent of $p$, one can compute an edge-weighting $x:E(C)\rightarrow\{a,b\}$ and residues $r(v)\in\mathcal R_p(q(v))$ that maximize $\bigl|\{e=vv'\in E(C):\bar z_{x,r}(v)\neq\bar z_{x,r}(v')\}\bigr|$ using $O\bigl(p^{13k+12}(|E(C)|+1)\bigr)$ arithmetic operations.
  The algorithm also returns, for each $v\in V(C)$, an integer $t(v)\in\{0,\dots,q(v)\}$ satisfying $r(v)\equiv q(v)a+t(v)(b-a)\pmod p$.
\end{lemma}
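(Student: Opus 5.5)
The approach is a Baker-style dynamic program: reduce $C$ to a bounded-treewidth structure coming from the BFS layering, then run a dynamic program over it whose states are residues modulo $p$. The key structural fact is that a connected planar graph contained in $k-1$ consecutive BFS layers has treewidth $O(k)$ once we account for the layers above it. Concretely, contract $L_0\cup\dots\cup L_{i_0-1}$ together with the BFS tree edges into a single vertex. Then delete the layers beyond $i_1$, or more carefully keep only the part of $G$ relevant to $C$. The result is a planar graph of radius $O(k)$ containing $C$ as a subgraph, so it has treewidth at most $3(k-1)+O(1)$.

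The preprocessing, independent of $p$, proceeds as follows. Compute the BFS layers and a planar embedding. Form the auxiliary graph $G'$ by contracting the inner layers to a root and attaching $C$'s vertices via BFS parent paths. Build a tree decomposition of width at most $3k+3$ of the underlying simple graph of $G'$, using the standard radius-based construction: a BFS spanning tree of depth $\le k$, a triangulation of the embedding, and the dual tree of the triangulation with bags formed by fundamental cycles. This gives bags of size at most $3(k+1)$, and everything runs in linear time. Restrict the bags to $V(C)$ (restriction preserves the decomposition properties for $C$). Finally, make the decomposition nice, with introduce-edge nodes for each parallel edge of $C$. The number of nodes is $O(|E(C)|+1)$ after discarding redundant bags, since $C$ is connected and $|V(C)|\le|E(C)|+1$.

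The dynamic program mirrors the one in the proof of \cref{thm:planar-exact}, but with residues in place of exact counts. A state assigns to each bag vertex $v$ a pair $(\pi_v,\zeta_v)\in\{0,\dots,p-1\}^2$. Here $\zeta_v$ is the guessed final residue $\bar z_{x,r}(v)$, and $\pi_v$ is the residue of $r(v)$ plus the weights of the already introduced incident edges. The value of a state is the maximum number of already introduced proper edges.

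The transitions are as follows.
\begin{itemize}
\item \emph{Introduce vertex $v$:} choose $\zeta_v$ arbitrarily and $\pi_v=r(v)\in\mathcal R_p(q(v))$.
\item \emph{Introduce edge $e=vv'$:} choose $x(e)\in\{a,b\}$, add $x(e)$ to $\pi_v$ and $\pi_{v'}$ modulo $p$, and gain $1$ if $\zeta_v\ne\zeta_{v'}$.
\item \emph{Forget $v$:} keep the state only if $\pi_v\equiv\zeta_v$.
\item \emph{Join:} require equal $\zeta$-values, sum the $\pi$-values modulo $p$ while subtracting one copy of the $r$-contribution, and add the objective values.
\end{itemize}
The double-counting of $r$ at a join is the delicate point. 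I would handle it by adding $r(v)$ only at the forget node rather than at introduction, and forget each vertex exactly once.

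A bag has at most $3k+3$ vertices, so there are at most $p^{2(3k+3)}$ states per bag. A join over pairs of states costs $p^{O(k)}$; arranged as a convolution over the $\pi$-coordinates, it fits within $p^{13k+12}$ per node, giving the stated total. Correctness follows by induction on the nice decomposition, exactly as in \cref{thm:planar-exact}. For the witnesses $t(v)$, precompute a table mapping each residue in $\mathcal R_p(q)$ to a witness $t$, at cost $O(q)\le O(d_G(v))$. Back-pointers then recover $x$ and $r$.

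I expect the main obstacle to be the treewidth bound with linear preprocessing independent of $p$. Specifically, one must justify that the components of the strip between $L_{i_0}$ and $L_{i_1}$ have $O(k)$-width decompositions computable in total linear time, when the outer vertices of $C$ do not lie on a common face. The first thing I would try is the contraction-to-root argument giving radius at most $k-1$, followed by the Baker/Eppstein fundamental-cycle decomposition. I would then verify that the degree of the contracted root does not break the linear-time bound.
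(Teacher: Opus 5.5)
Your proposal is correct, but it reaches the bounded-width structure by a different route than the paper. Both arguments start from the same contracted graph: the layers below $L_{i_0}$ are collapsed to a root $s_0$, giving a spanning tree of depth at most $k-1$.

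\textbf{Your route.} You then materialize Eppstein's tree decomposition of the triangulated underlying simple graph. Its bags are unions of three root paths, so its width is $3(k-1)$. You restrict it to $V(C)$, make it nice with one introduce-edge node per parallel copy, and run the standard introduce/forget/join dynamic program. Your choice to add the contribution $r(v)$ only at the forget node, i.e.\ to test $\zeta_v-\pi_v\in\mathcal R_p(q(v))$ there, is exactly the paper's congruence check at the moment a vertex leaves the interface.

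\textbf{The paper's route.} The paper never builds a tree decomposition. It adds a new vertex in every face, joined to all boundary occurrences; this sidesteps loops and parallel edges without passing to the simple graph, at the cost of raising the tree depth to $k$. It then takes the dual spanning tree $D$ of the non-tree edges, which has maximum degree $3$, and assigns each edge of $C$ to an incident face. The DP runs directly on $D$, and the state at $u$ records $(\bar z_v,y_v)$ only for the interface set $I_u$. The edges separating the faces of $D_u$ from the rest form a single fundamental cycle of $T$, which gives $|I_u|\le 2k+1$.

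\textbf{What each buys.} Your route is more modular and reuses textbook machinery. If joins enumerate only pairs with matching $\zeta$-coordinates, it costs about $p^{9k}$ per node, comfortably inside $p^{13k+12}$. The paper's route is self-contained and has smaller interfaces. It writes nothing of size $\Theta(k)$ per node, so its $p$-independent preprocessing is linear with an absolute constant. There, combining up to three children at once ($p^{12k+6}$ child-state choices, times $p^6$ endpoint residues, times $O(k)\le p^k$) is what produces the exponent $13k+12$.

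\textbf{Accounting fixes.} A few details need tightening; none affects correctness.
\begin{itemize}
\item Build the witness table by iterating $t$ only over $\{0,\dots,\min\{q(v),p-1\}\}$. Since $q(v)$ counts edges leaving $C$, your stated $O(q(v))$ per-vertex cost, summed over $V(C)$, is not bounded by $O\bigl(p^{13k+12}(|E(C)|+1)\bigr)$. The table depends on $p$, so it cannot be moved into the preprocessing.
\item A nice decomposition of width $O(k)$ has $O(k|V(C)|+|E(C)|)$ nodes, not $O(|E(C)|+1)$.
\item Writing bags of size $\Theta(k)$ costs $\Theta(k)$ each. Both this and the node count are absorbed by $k\le p^k$ if you keep only the two trees in the $p$-independent preprocessing and generate the bags in the $p$-dependent phase.
\end{itemize}
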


\begin{proof}
  If $C$ has no edges, then it consists of one vertex $v$.
  Set $t(v)=0$, let $r(v)$ be the residue of $q(v)a$ modulo $p$, and return the empty edge-weighting.
  Henceforth assume that $C$ has at least one edge.

  Fix a planar embedding of $G$, and use the embeddings induced by the constructions below.
  We first construct a loopless planar multigraph $J$, embedded in the plane, that contains $C$ and has a spanning tree of depth at most $k-1$.
  Suppose that $i_0>0$, and let $G_{<i_0}=G[L_0\cup\dots\cup L_{i_0-1}]$.
  The graph $G_{<i_0}$ is connected because every vertex in $G_{<i_0}$ has a shortest path to $s$ that remains in $G_{<i_0}$.
  For each $v\in C\cap L_{i_0}$, choose one edge from $v$ to $L_{i_0-1}$ on a shortest path from $v$ to $s$.
  In the embedded subgraph consisting of $C$, $G_{<i_0}$, and these chosen edges, contract a spanning tree of $G_{<i_0}$ to a single vertex $s_0$ and delete every remaining edge of $G_{<i_0}$, which becomes a loop after the contraction.
  Let $J$ be the resulting loopless planar multigraph, with the embedding inherited from the contraction.
  It contains $C$ and one chosen edge $s_0v$ for every $v\in C\cap L_{i_0}$.

  If $i_0=0$, then $C\cap L_0=\{s\}$.
  In this case, let $J$ be obtained from $C$ by adding a new vertex $s_0$ in a face incident with $s$ and drawing the edge $s_0s$ inside that face.

  For every $v\in C\cap L_i$ with $i>i_0$, choose an edge from $v$ to $L_{i-1}$ on a shortest path from $v$ to $s$.
  Its endpoint in $L_{i-1}$ belongs to $C$, because $C$ is a component of the graph induced by $L_{i_0},\dots,L_{i_1}$.
  Following the chosen edges toward lower-indexed layers eventually reaches $L_{i_0}$ and then $s_0$, and the strict decrease in layer index prevents a cycle.
  Together with the edges from $s_0$ to $C\cap L_{i_0}$, the chosen edges therefore form a spanning tree $T_0$ of $J$.
  For $v\in C\cap L_i$, the unique path in $T_0$ from $s_0$ to $v$ has length $1+i-i_0$, so $\dist_{T_0}(s_0,v)=1+i-i_0\leq k-1$.

  We next augment the embedded graph $J$ to a loopless planar multigraph $K$ whose faces are triangular.
  View the embedding on the sphere, so that no face is distinguished as the outer face.
  A facial boundary walk need not be a cycle and may visit the same vertex more than once, for example when that vertex is a cutvertex.
  Treat each visit as a separate occurrence.
  In each face, add one new vertex and join it to every vertex occurrence on the boundary walk, drawing the new edges inside the face in the cyclic order of the occurrences.
  Repeated occurrences may create parallel edges, but they cannot create a loop because the face vertex is new.
  The new edges divide the original face into triangular faces.
  For each new face vertex, choose one incident new edge.
  These selected edges together with $T_0$ form a spanning tree $T$ of $K$.
  Every new face vertex is adjacent in $T$ to a vertex at distance at most $k-1$ from $s_0$, so $T$ has depth at most $k$.
  Apart from the vertices and edges of $C$, the graph $J$ contains only $s_0$ and at most one additional edge for each vertex in $C\cap L_{i_0}$.
  Hence $|V(J)|+|E(J)|=O(|V(C)|+|E(C)|)$.
  Euler's formula bounds the number of faces of $J$, and the total number of occurrences on all facial boundary walks is $2|E(J)|$.
  Since the augmentation adds one vertex per face and one edge per boundary occurrence, $|V(K)|+|E(K)|=O(|V(C)|+|E(C)|)$.

  We define the dynamic program on a tree whose vertices correspond to the faces of $K$.
  Let $K^*$ be the geometric dual of $K$, and let $D$ be the spanning subgraph of $K^*$ consisting of the duals of the edges in $E(K)\setminus E(T)$.
  Euler's formula gives $|E(K)\setminus E(T)|=|V(K^*)|-1$.
  No edge of $D$ is a loop, because a loop in the dual corresponds to a bridge of $K$, and every bridge of $K$ belongs to the spanning tree $T$.
  To see that $D$ is acyclic, suppose that it contains a cycle.
  In the dual embedding, this cycle gives a simple closed curve that crosses a nonempty cut of $K$ only in edges outside $T$.
  This is impossible because the spanning tree $T$ contains an edge of every nonempty cut.
  Hence $D$ is acyclic, and since it is a spanning subgraph with $|V(K^*)|-1$ edges, it is a spanning tree of $K^*$.
  Because every face of $K$ is triangular, $D$ has maximum degree at most three.

  Assign each edge of $C$ to one of its incident faces in $K$, regard each vertex of $D$ as its corresponding face, and root $D$ at an arbitrary vertex $u_0$.
  At most three edges of $C$ are assigned to any vertex of $D$.
  The planar embedding, contractions, augmentation, dual construction, and edge assignment can all be performed in $O(|V(G)|+|E(G)|)$ time, and none of these operations depends on $p$.
  For $u\in V(D)$, let $D_u$ be the subtree rooted at $u$, and let $E_u$ be the set of edges of $C$ assigned to the faces corresponding to vertices of $D_u$.
  We show that only a bounded number of vertices of $C$ are incident with both an edge of $E_u$ and an edge outside $E_u$.

  Let $u\in V(D)\setminus\{u_0\}$.
  Write $g_u^*$ for the unique edge of $D$ joining $D_u$ to the rest of $D$, and let $g_u$ be the corresponding edge of $K$, with endpoints $v_{u,1}$ and $v_{u,2}$.
  Let $Q_u$ be the set of edges of $K$ with one incident face corresponding to a vertex of $D_u$ and the other corresponding to a vertex outside $D_u$.
  The only edge of $Q_u$ outside $T$ is $g_u$, because $g_u^*$ is the only dual-tree edge between the two sets of faces.
  Moreover, every vertex of $K$ has even degree in the subgraph with edge set $Q_u$.
  To see this, traverse the incident face occurrences cyclically around a vertex and mark whether the corresponding vertex of $D$ belongs to $D_u$.
  An incident edge lies in $Q_u$ exactly at a change between the two marks, and every cyclic binary sequence has an even number of changes.
  The graph obtained from $T$ by adding $g_u$ has a unique cycle.
  The subgraph with edge set $Q_u$ is nonempty, contains $g_u$, and has even degree at every vertex.
  Since $T$ together with $g_u$ contains exactly one cycle, $Q_u$ is exactly the edge set of that cycle.
  Thus $Q_u=\{g_u\}\cup E(P_T(v_{u,1},v_{u,2}))$, where $P_T(v_{u,1},v_{u,2})$ is the unique $v_{u,1}$--$v_{u,2}$ path in $T$.
  Therefore
  \[
    |Q_u|
    \leq
    1+\dist_T(s_0,v_{u,1})+\dist_T(s_0,v_{u,2})
    \leq 2k+1.
  \]

  Let $I_u$ be the set of vertices of $C$ incident with both an edge of $E_u$ and an edge of $E(C)\setminus E_u$.
  If $v\in I_u$, then among the face occurrences around $v$ there is a transition from a face corresponding to a vertex of $D_u$ to one corresponding to a vertex outside $D_u$.
  The edge between these two consecutive face occurrences belongs to $Q_u$ and is incident with $v$.
  Since $Q_u$ is the edge set of a cycle, it is incident with exactly $|Q_u|$ vertices.
  Hence $|I_u|\leq |Q_u|\leq 2k+1$.
  At the root $u_0$, we have $E_{u_0}=E(C)$ and $I_{u_0}=\emptyset$.

  Before processing the vertices of $D$, construct for every $v\in V(C)$ a lookup table indexed by $r\in\{0,\dots,p-1\}$.
  For every $t\in\{0,\dots,\min\{q(v),p-1\}\}$, let $r$ be the residue satisfying $r\equiv q(v)a+t(b-a)\pmod p$ and store one such value $t$ in the entry indexed by $r$.
  Leave every index not obtained in this way undefined.
  Since the residue depends only on $t$ modulo $p$, an entry indexed by $r$ is defined exactly when $r\in\mathcal R_p(q(v))$.
  The lookup tables require $O(p|V(C)|)$ arithmetic operations and allow membership in every set $\mathcal R_p(q(v))$ to be tested in $O(1)$ arithmetic operations.

  Process the vertices of $D$ from the leaves to the root.
  A state $S$ at $u$ assigns to each $v\in I_u$ a pair $(\bar z_v,y_v)$ with $\bar z_v,y_v\in\{0,\dots,p-1\}$.
  The coordinate $\bar z_v$ is the residue modulo $p$ of the final label at $v$, and $y_v$ is the residue of the partial sum $\sum_{e\in E_u\cap E_C(v)}x(e)$.
  For every state $S$ at $u$, the table stores the maximum value of
  \[
    \bigl|\{e=vv'\in E_u:\bar z_v\neq\bar z_{v'}\}\bigr|
  \]
  over all weight assignments $x:E_u\rightarrow\{a,b\}$ and all choices $\bar z_v\in\{0,\dots,p-1\}$ for vertices $v$ incident with an edge of $E_u$ that satisfy the following conditions.
  For every $v\in I_u$, the pair $(\bar z_v,y_v)$ equals the pair assigned to $v$ by $S$, and
  \[
    y_v\equiv\sum_{e\in E_u\cap E_C(v)}x(e)\pmod p.
  \]
  For every vertex $v$ incident with an edge of $E_u$ but not belonging to $I_u$, there exists $r\in\mathcal R_p(q(v))$ such that
  \[
    \bar z_v-\sum_{e\in E_C(v)}x(e)\equiv r\pmod p.
  \]
  Every edge of $C$ incident with such a vertex belongs to $E_u$ by the definition of $I_u$, so the sum in this congruence is determined by $x$.
  If no choices of $x$ and the values $\bar z_v$ satisfy these conditions, the state is omitted from the table.

  We prove by induction on $D_u$ that the tables have these values while describing how to compute them.
  Let $u_1,\dots,u_h$ be the children of $u$.
  The sets $E_{u_1},\dots,E_{u_h}$ together with the at most three edges of $C$ assigned directly to $u$ form a partition of $E_u$.
  Choose one state from each child table and enumerate the weights of the edges assigned directly to $u$.
  If $v\in I_{u_i}\cap I_{u_j}$ for two selected child states, require the two states to assign the same value $\bar z_v$.
  If a vertex $v$ is incident with an edge of $E_{u_i}$ and $v\notin I_{u_i}$, then every edge of $C$ incident with $v$ lies in $E_{u_i}$, so the congruence condition for $v$ has already been checked in the table of $u_i$.
  For each endpoint $v$ of an edge assigned directly to $u$, use the common value $\bar z_v$ from the selected child states if $v\in I_{u_i}$ for at least one child $u_i$, and otherwise enumerate $\bar z_v\in\{0,\dots,p-1\}$.
  Every vertex $v\in I_u$ either is an endpoint of an edge assigned directly to $u$ or belongs to $I_{u_i}$ for some child $u_i$.
  To verify this statement, choose an edge of $E_u$ incident with $v$.
  If that edge belongs to $E_{u_i}$, then an edge incident with $v$ outside $E_u$ also lies outside $E_{u_i}$, and hence $v\in I_{u_i}$.
  For every vertex $v$ such that $v\in I_{u_i}$ for at least one child $u_i$ or $v$ is an endpoint of an edge assigned directly to $u$, let $\widehat y_v$ be the sum modulo $p$ of the $y_v$-coordinates from all selected child states with $v\in I_{u_i}$ and the weights of all edges assigned directly to $u$ that are incident with $v$.
  If $v\in I_u$, assign the pair $(\bar z_v,\widehat y_v)$ to $v$ in the parent state.
  If $v\notin I_u$, require that there exists $r\in\mathcal R_p(q(v))$ such that
  \[
    \bar z_v-\widehat y_v\equiv r\pmod p.
  \]
  This condition is tested by the lookup table for $v$.
  For every parent state obtained in this way, compare its current table value with the sum of the values stored for the selected child states plus the number of edges $e=vv'$ assigned directly to $u$ for which $\bar z_v\neq\bar z_{v'}$, and keep the larger value.

  We verify that the transition computes the stated table values.
  By the induction hypothesis, for each selected child state choose weights on its edge set and values $\bar z_v$ that attain the value stored for that state.
  If a vertex $v$ is incident with edges in both $E_{u_i}$ and $E_{u_j}$, then $v\in I_{u_i}\cap I_{u_j}$, and the transition requires the same value $\bar z_v$ in both child states.
  The same value is used when $v\in I_{u_i}$ for some child $u_i$ and $v$ is incident with an edge assigned directly to $u$.
  If $v$ is incident with an edge of $E_{u_i}$ and $v\notin I_{u_i}$, then every edge of $C$ incident with $v$ lies in $E_{u_i}$, so the child table has already checked the congruence for $v$ and no other child edge set or edge assigned directly to $u$ is incident with $v$.
  For every remaining vertex $v\notin I_u$, the transition checks the congruence after all edges of $C$ incident with $v$ have been included in $E_u$.
  Thus the combined choices satisfy the conditions for the parent state, and, since the child edge sets together with the edges assigned directly to $u$ partition $E_u$, the value computed by the transition is exactly their objective value.

  Conversely, fix an edge-weighting $x:E_u\rightarrow\{a,b\}$ and values $\bar z_v$ satisfying the conditions for a state at $u$.
  For each child $u_i$, restrict $x$ to $E_{u_i}$ and, for every $v\in I_{u_i}$, use the same value $\bar z_v$ and set
  \[
    y_v\equiv\sum_{e\in E_{u_i}\cap E_C(v)}x(e)\pmod p.
  \]
  If a vertex $v$ is incident with an edge of $E_{u_i}$ but does not belong to $I_{u_i}$, then all edges of $C$ incident with $v$ lie in $E_{u_i}$, so $v\notin I_u$ and the congruence in the child state follows from the corresponding condition for the parent state.
  Hence these restrictions define valid child states, and the transition considers them together with the given weights of the edges assigned directly to $u$ and, for every endpoint $v$ of a directly assigned edge that belongs to no $I_{u_i}$, the given value $\bar z_v$.
  By the induction hypothesis, the value stored for each selected child state is at least the objective value of the corresponding restriction, so the transition obtains the parent state with value at least the objective value of the fixed choices of $x$ and $\bar z_v$.
  Together with the preceding direction, this proves the stated table invariant, with the case $h=0$ giving the induction base.

  At the root $u_0$, we have $I_{u_0}=\emptyset$ and $E_{u_0}=E(C)$.
  For every edge-weighting $x$ and values $\bar z_v$ counted by the root table, the root conditions give, for each $v\in V(C)$, an $r(v)\in\mathcal R_p(q(v))$ satisfying
  \[
    \bar z_v-\sum_{e\in E_C(v)}x(e)\equiv r(v)\pmod p.
  \]
  Hence $\bar z_{x,r}(v)=\bar z_v$ for every $v\in V(C)$, so the value stored in the root table is at most the optimum in the statement of the lemma.
  Conversely, every feasible choice of $x$ and $r$ in the modular optimization problem satisfies the root conditions when $\bar z_v=\bar z_{x,r}(v)$ for every $v\in V(C)$.
  The value stored in the root table is therefore exactly the optimum in the statement of the lemma.

  Whenever a table value is updated, store the selected child states, the weights of the edges assigned directly to the current vertex of $D$, and every value $\bar z_v$ enumerated at that transition.
  Backtracking from the root determines $x$ and the value $\bar z_v$ for every vertex of $C$ incident with an edge.
  For each such vertex $v$, let $r(v)$ be the unique residue in $\{0,\dots,p-1\}$ satisfying
  \[
    \bar z_v-\sum_{e\in E_C(v)}x(e)\equiv r(v)\pmod p.
  \]
  The lookup entry indexed by $r(v)$ is defined by the table conditions and gives a value $t(v)\in\{0,\dots,q(v)\}$ satisfying $r(v)\equiv q(v)a+t(v)(b-a)\pmod p$.

  For each $v\in I_u$, there are at most $p^2$ possible pairs $(\bar z_v,y_v)$.
  Since $|I_u|\leq 2k+1$, each table has at most $p^{4k+2}$ states.
  A vertex of $D$ has at most three children, so there are at most $p^{12k+6}$ choices of child states.
  The edges assigned directly to a vertex $u$ of $D$ have at most six endpoints, and for each such endpoint $v$ satisfying $v\notin I_{u_i}$ for every child $u_i$ there are at most $p$ choices for $\bar z_v$.
  There are at most eight weight assignments to the at most three edges assigned directly to the vertex of $D$.
  For a fixed combination, every vertex that must be examined belongs to $I_{u_i}$ for some child $u_i$ or is an endpoint of an edge assigned directly to $u$, so there are at most $3(2k+1)+6=6k+9$ such vertices.
  Each lookup-table test uses $O(1)$ arithmetic operations, so processing one combination uses $O(k)$ arithmetic operations.
  Since $k\leq p^k$ for $k\geq 2$ and $p\geq 2$, each vertex of $D$ can be processed using $O(p^{13k+12})$ arithmetic operations.
  The tree $D$ has $O(|V(C)|+|E(C)|)$ vertices.
  Since $C$ is connected and has at least one edge, $|V(C)|\leq |E(C)|+1$.
  The dynamic program and the lookup-table construction therefore use $O\bigl(p^{13k+12}(|E(C)|+1)\bigr)$ arithmetic operations after the preprocessing independent of $p$.
\end{proof}

\begin{theorem}\label{thm:planar-eptas}
  For every fixed pair of distinct integers $a$ and $b$, \textnormal{\textsc{Max $\{a,b\}$-edge-weighting}} admits a deterministic EPTAS on connected loopless planar multigraphs.
  Given $0<\varepsilon<1$ and a connected loopless planar multigraph $G$ with $m\geq 1$ edges, the algorithm returns an $\{a,b\}$-edge-weighting of $G$ with at least $(1-\varepsilon)\OPT_{\mathrm{EW}}(G)$ proper edges.
  Its running time is $2^{O((1/\varepsilon)\log(2/\varepsilon))}m^{O(1)}$.
\end{theorem}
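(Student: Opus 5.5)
We follow the Baker-style plan sketched before \cref{lem:planar-modular-strip}. Set $k=\lceil 4/\varepsilon\rceil$ and fix a root $s$ with BFS layers $L_i$. For each shift $j\in\{0,\dots,k-1\}$, delete the layers $L_i$ with $i\equiv j\pmod k$. Every component $C$ of what remains lies in at most $k-1$ consecutive layers, so we may apply \cref{lem:planar-modular-strip} to $C$ with the smallest and largest layers it meets. For $v\in V(C)$, let $q(v)$ be the number of edges of $G$ joining $v$ to deleted vertices. For each component and each prime $p$ from a fixed set $P$ of primes, the lemma gives weights on $E(C)$ and integers $t(v)$. We weight the edges to deleted vertices so that exactly $t(v)$ of the $q(v)$ such edges at $v$ receive $b$. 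Each such edge has exactly one endpoint $v$ that survives, since an edge joining two deleted vertices has no surviving endpoint and an edge with both endpoints surviving lies in $C$. So this is consistent, and the remaining edges get arbitrary weights. The actual label of each $v\in V(C)$ is then congruent to $\bar z_{x,r}(v)$ modulo $p$, so every edge counted by the lemma is genuinely proper. Among all candidates over $(j,p)$, output the best one by actual count.

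Next comes the averaging argument. Fix an optimal weighting $w^*$ with proper edge set $E^*$. Each edge joins layers whose indices differ by at most one, so it has an endpoint deleted for at most two of the $k$ shifts. Averaging over $j$, some shift loses at most $(2/k)|E^*|$ of $E^*$ to deletion. For that shift and each surviving component $C$, take the restriction of $w^*$ to $E(C)$. For $v\in V(C)$, take $r(v)$ as the residue of the $w^*$-weight of the omitted edges at $v$; this lies in $\mathcal R_p(q(v))$. The resulting $\bar z$ values are the true $w^*$-labels modulo $p$. So an edge of $E^*$ inside $C$ fails to be counted only if $p$ divides its nonzero label difference, whose absolute value is at most $D=2m\max(|a|,|b|)$.

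Let $P$ consist of the first $M$ primes exceeding $\log_2 D$. Each nonzero difference is divisible by fewer than $\log_2 D/\log_2\log_2 D\le \log_2 D$ of them. Averaging over $p\in P$ loses at most a $(\log_2 D)/M$ fraction, so we choose $M=\lceil 2\log_2 D/\varepsilon\rceil$. The total loss is then at most $(2/k+\varepsilon/2)\OPT_{\mathrm{EW}}(G)\le\varepsilon\OPT_{\mathrm{EW}}(G)$.

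The main obstacle is the running time, because the lemma costs $p^{13k+12}$ and here $p$ is about $\log m/\varepsilon$ up to logarithmic factors. This gives roughly $(\log m/\varepsilon)^{O(1/\varepsilon)}$. We would handle it with the standard dichotomy $(\log m)^{O(1/\varepsilon)}\le 2^{O((1/\varepsilon)\log(1/\varepsilon))}+m^{O(1)}$: either $\log m\le (1/\varepsilon)^2$, or $(1/\varepsilon)\log\log m\le \log m$. Combined with $O(k|P|)$ candidates, the linear preprocessing, and the polynomial cost of generating the primes $P$ by a sieve, this gives $2^{O((1/\varepsilon)\log(2/\varepsilon))}m^{O(1)}$.

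Finally, we convert from the arithmetic model to bit complexity. All numbers in the computation are residues below $p$ or labels of magnitude $O(m)$, so each operation costs polylogarithmic time. The edge case where the chosen primes must be small is absorbed into the same bound.
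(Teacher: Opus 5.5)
Your proposal is correct and follows the paper's proof essentially step for step. It uses the same shifted BFS-layer deletion with $k=\lceil 4/\varepsilon\rceil$ and the same extension of the modular solutions from \cref{lem:planar-modular-strip} via the integers $t(v)$. It also uses the same averaging over shifts and $O((1+\log m)/\varepsilon)$ prime moduli; you average sequentially rather than over pairs $(j,p)$ jointly, which is equivalent. Finally, it absorbs $(\log m)^{O(1/\varepsilon)}$ into $2^{O((1/\varepsilon)\log(2/\varepsilon))}m^{O(1)}$ in the same way, except that the paper uses $(1+\log m)^\ell\le(2\ell)^\ell m$ where you use a dichotomy.

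There are two harmless slips. First, restricting to primes above $\log_2 D$ is unnecessary: any $1\le h_e\le D$ has at most $\log_2 D$ distinct prime divisors, since each is at least $2$, and this also repairs your divisor bound when $D$ is tiny. Second, the lemma's preprocessing is linear in $|V(G)|+|E(G)|$ per component, so it totals $O(km^2)$ rather than linear time, which still fits the stated bound.
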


\begin{proof}
  Fix $0<\varepsilon<1$ and a connected loopless planar multigraph $G$ with $m\geq 1$ edges.
  Since $G$ is connected, it has at most $m+1$ vertices.
  Define
  \[
    W_0=\max\{1,|a|,|b|\},\qquad
    k=\left\lceil\frac{4}{\varepsilon}\right\rceil,\qquad
    N=\left\lceil\log_2(2W_0 m)\right\rceil,\qquad
    M=\left\lceil\frac{2N}{\varepsilon}\right\rceil.
  \]
  Use the first $M$ primes as moduli.

  Compute a planar embedding of $G$, choose a root $s$, and let $L_i=\{v\in V(G):\dist_G(s,v)=i\}$ be the $i$-th BFS layer rooted at $s$.
  Every edge of $G$ has endpoints in the same layer or in two consecutive layers.
  For each $j\in\{0,\dots,k-1\}$, define
  \[
    B_j=\bigcup_{i\equiv j\pmod k}L_i,
    \qquad
    H_j=G-B_j.
  \]

  Let $C$ be a component of $H_j$, and let $i_C^{\min}$ and $i_C^{\max}$ be the smallest and largest layer indices met by $C$.
  No deleted layer has index between $i_C^{\min}$ and $i_C^{\max}$, so $i_C^{\max}-i_C^{\min}\leq k-2$.
  Every layer from $L_{i_C^{\min}}$ through $L_{i_C^{\max}}$ is contained in $H_j$, and any edge of $G[L_{i_C^{\min}}\cup\dots\cup L_{i_C^{\max}}]$ from $C$ to another vertex would also be an edge of $H_j$.
  Thus $C$ is a connected component of $G[L_{i_C^{\min}}\cup\dots\cup L_{i_C^{\max}}]$, and \cref{lem:planar-modular-strip} applies to $C$.
  For $v\in V(H_j)$, let $q_j(v)$ be the number of edges of $G$ joining $v$ to a vertex of $B_j$.

  For every $j\in\{0,\dots,k-1\}$, perform the preprocessing from \cref{lem:planar-modular-strip} for each component of $H_j$ that contains an edge.
  For every one of the first $M$ primes $p$, run the lemma's modular dynamic program on each such component using $q(v)=q_j(v)$.
  Combining the component solutions gives an edge-weighting $x_{j,p}:E(H_j)\rightarrow\{a,b\}$.
  For every vertex $v$ of $H_j$ that is not isolated, it also gives a residue $r_{j,p}(v)\in\mathcal R_p(q_j(v))$ and an integer $t_{j,p}(v)\in\{0,\dots,q_j(v)\}$ satisfying $r_{j,p}(v)\equiv q_j(v)a+t_{j,p}(v)(b-a)\pmod p$.
  If $v$ is isolated in $H_j$, set $t_{j,p}(v)=0$ and let $r_{j,p}(v)$ be the residue of $q_j(v)a$ modulo $p$.

  Extend $x_{j,p}$ to an edge-weighting $\widehat w_{j,p}$ of $G$.
  Keep the weights $x_{j,p}$ on $E(H_j)$.
  For each $v\in V(H_j)$, assign weight $b$ to exactly $t_{j,p}(v)$ of the edges from $v$ to $B_j$ and weight $a$ to the remaining such edges.
  These choices are independent for different vertices of $H_j$, because every edge between $H_j$ and $B_j$ has exactly one endpoint in $H_j$.
  Assign weight $a$ to every edge with both endpoints in $B_j$.
  By the definition of $r_{j,p}(v)$ and $t_{j,p}(v)$, every $v\in V(H_j)$ satisfies
  \[
    z_{\widehat w_{j,p}}(v)
    \equiv
    r_{j,p}(v)
    +\sum_{e\in E_{H_j}(v)}x_{j,p}(e)
    \pmod p.
  \]
  Therefore every edge counted by a modular dynamic program is proper under $\widehat w_{j,p}$, because equal integer labels would have equal residues modulo $p$.
  The algorithm constructs all $kM$ candidate weightings $\widehat w_{j,p}$ and returns one with the largest number of proper edges in $G$.

  We next prove that one candidate attains the claimed approximation ratio.
  Let $w^*$ be an optimal edge-weighting of $G$, and let $F^*$ be its set of proper edges.
  For $e=uv\in F^*$, define $h_e=|z_{w^*}(u)-z_{w^*}(v)|$.
  Since $e$ is proper and every edge weight has absolute value at most $W_0$,
  \[
    1\leq h_e
    \leq |z_{w^*}(u)|+|z_{w^*}(v)|
    \leq W_0\bigl(d_G(u)+d_G(v)\bigr)
    \leq 2W_0 m.
  \]

  Fix $e\in F^*$.
  Each endpoint of $e$ belongs to exactly one of $B_0,\dots,B_{k-1}$, so at most two values of $j$ delete an endpoint of $e$.
  These values account for at most $2M$ of the $kM$ pairs $(j,p)$.
  Moreover, $h_e$ has at most $N$ distinct prime divisors, since the product of any $N+1$ distinct prime divisors would divide $h_e$ and be at least $2^{N+1}>2W_0m$, contradicting $h_e\leq 2W_0m$.
  The primes dividing $h_e$ therefore account for at most another $kN$ pairs.
  Consequently, for at least $kM-2M-kN$ pairs $(j,p)$, both endpoints of $e$ lie in $H_j$ and $p$ does not divide $h_e$.

  Count the triples $(e,j,p)$ with $e\in F^*$ for which these two conditions hold.
  Summing the preceding bound over $e\in F^*$ and then averaging over the $kM$ pairs gives a pair $(j,p)$ for which at least
  \[
    \left(1-\frac{2}{k}-\frac{N}{M}\right)|F^*|
    \geq
    (1-\varepsilon)\OPT_{\mathrm{EW}}(G)
  \]
  edges of $F^*$ have both endpoints in $H_j$ and satisfy $p\nmid h_e$.
  The inequality uses $2/k\leq\varepsilon/2$ and $N/M\leq\varepsilon/2$.

  Fix such a pair $(j,p)$.
  On each component $C$ of $H_j$ that contains an edge, use the restriction of $w^*$ as the internal edge-weighting.
  For every vertex $v$ of $C$, let $t^*(v)$ be the number of edges from $v$ to $B_j$ that have weight $b$ under $w^*$, and let $r^*(v)$ be the residue in $\{0,\dots,p-1\}$ satisfying $r^*(v)\equiv q_j(v)a+t^*(v)(b-a)\pmod p$.
  Then $r^*(v)\in\mathcal R_p(q_j(v))$ and
  \[
    r^*(v)+\sum_{e\in E_{H_j}(v)}w^*(e)
    \equiv z_{w^*}(v)\pmod p.
  \]
  On each component, these choices form a feasible solution of its modular subproblem.
  If $e=uv\in F^*$ has both endpoints in $H_j$ and $p\nmid h_e$, then $z_{w^*}(u)\not\equiv z_{w^*}(v)\pmod p$, so the feasible solution for its component counts $e$.
  The exact dynamic program for each component counts at least as many edges as this feasible solution, and every edge counted by a dynamic program is proper after extension to $\widehat w_{j,p}$.
  Hence the weighting returned by the algorithm has at least $(1-\varepsilon)\OPT_{\mathrm{EW}}(G)$ proper edges.

  We finish with the running-time bound.
  For each $j$, there are at most $m$ components of $H_j$ that contain an edge.
  Since $G$ is connected and has at most $m+1$ vertices, all preprocessing steps from \cref{lem:planar-modular-strip}, over all $k$ choices of $j$, take $O(km^2)$ time.
  For a fixed choice of $j$ and $p$, the sum of $|E(C)|+1$ over the components $C$ of $H_j$ that contain an edge is at most $2m$.
  The lemma therefore bounds the total number of arithmetic operations for their dynamic programs by $O(p^{13k+12}m)$.
  The standard estimate that the $n$-th prime is $O(n\log(2n))$ implies that each of the first $M$ primes is $O(M\log(2M))$~\cite{rosser1962approximate}.
  The sieve of Eratosthenes generates these primes within the final running-time bound.
  Since there are $kM$ pairs $(j,p)$, the complete algorithm uses $O\bigl(kM\bigl(M\log(2M)\bigr)^{13k+12}m\bigr)$ arithmetic operations.
  Because $a$ and $b$ are fixed, $k=O(1/\varepsilon)$ and $M=O((1/\varepsilon)(1+\log m))$.
  Moreover, $\log(2M)=O(\log(2/\varepsilon)+\log(1+\log m))$.
  The preceding bound is therefore $2^{O((1/\varepsilon)\log(2/\varepsilon))}(1+\log m)^{O(1/\varepsilon)}m$.
  For all $m\geq 1$ and $\ell\geq 1$, the elementary bound $(1+\log m)^\ell\leq (2\ell)^\ell m$ holds.
  Consequently, the factor $(1+\log m)^{O(1/\varepsilon)}$ is at most $2^{O((1/\varepsilon)\log(2/\varepsilon))}m$, so the algorithm uses $2^{O((1/\varepsilon)\log(2/\varepsilon))}m^2$ arithmetic operations.
  This bound also covers the $O(km^2)$ preprocessing term.
  Since $a$ and $b$ are fixed and all integers used by the algorithm have polynomial bit length, this gives a running time of $2^{O((1/\varepsilon)\log(2/\varepsilon))}m^{O(1)}$ in the standard bit model.
\end{proof}

A fully polynomial-time approximation scheme (FPTAS) is a polynomial-time approximation scheme (PTAS) whose running time is polynomial in both the input size and $1/\varepsilon$.
Unless $\mathrm{P}=\mathrm{NP}$, no such scheme exists for \textsc{Max $\{a,b\}$-edge-weighting} on simple cubic planar graphs.
Since the objective counts proper edges, all feasible objective values, including $\OPT_{\mathrm{EW}}(G)$, are integers.
Run a hypothetical FPTAS on a graph with $m\geq 1$ edges using $\varepsilon=1/(m+1)$.
It would return a weighting with an integer number $A$ of proper edges satisfying $\OPT_{\mathrm{EW}}(G)-1<A\leq\OPT_{\mathrm{EW}}(G)$, and hence $A=\OPT_{\mathrm{EW}}(G)$.
Testing whether $A=m$ would then decide the NP-complete problem from \cref{thm:planar-ab-weighting} in polynomial time.

\section{APX-completeness for simple cubic graphs}\label{sec:apx-completeness}

We first record a simple approximation algorithm for loopless multigraphs, and then prove APX-completeness for simple cubic graphs.
In the approximation theorem, parallel edges are treated as distinct.

\begin{theorem}\label{thm:half-approx-ew}
  For every fixed pair of distinct integers $a$ and $b$, there is a polynomial-time $1/2$-approximation algorithm for \textnormal{\textsc{Max $\{a,b\}$-edge-weighting}} on loopless multigraphs.
\end{theorem}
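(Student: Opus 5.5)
The approach is a greedy sweep over a vertex ordering, in which each vertex's label becomes final at the moment the vertex is processed and is chosen to make at least half of its edges to earlier vertices proper. Since $\OPT_{\mathrm{EW}}(G)\leq |E(G)|=m$, it suffices to output an $\{a,b\}$-edge-weighting with at least $m/2$ proper edges. Parallel edges are distinct, and we may treat each connected component separately, so assume $G$ is connected. Fix an ordering $v_1,\dots,v_n$ of $V(G)$ in which every $v_i$ with $i<n$ has a neighbour later in the ordering; the reverse of a BFS order from any root $v_n$ has this property. For each $i$, let $\beta_i$ be the number of edges from $v_i$ to earlier vertices and $f_i$ the number to later vertices, counted with multiplicity. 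Thus $f_i\geq 1$ for $i<n$, $f_n=0$, and $\sum_i\beta_i=m$.

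For $i=1,2,\dots,n$ in turn, I weight all forward edges of $v_i$ at step $i$. At that moment every backward edge of $v_i$ has already been weighted at the step of its earlier endpoint, so the label $z(v_i)$ becomes final. The labels of earlier vertices are already final. If exactly $t$ of the $f_i$ forward edges receive $b$, then $z(v_i)=\text{const}+t(b-a)$ with $t\in\{0,\dots,f_i\}$. Hence $v_i$ can realise at least two distinct labels whenever $f_i\geq 1$.

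Each backward neighbour $v_j$ (counted with edge multiplicity) makes exactly one of these values bad, namely the value $z(v_i)=z(v_j)$. With at least two candidate values, some value is bad for at most $\lfloor\beta_i/2\rfloor$ of the backward edges, and I choose it. So for $i<n$ at least $\lceil\beta_i/2\rceil$ backward edges of $v_i$ end up proper, and each edge is judged exactly once, at its later endpoint. Every step is an explicit minimisation over at most $d(v_i)+1$ values, so the algorithm runs in polynomial time.

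The main obstacle is the last vertex $v_n$, which has no freedom: its edges are all backward, and up to $\beta_n=d(v_n)$ of them could be lost. I would try first to merge the last two steps. Take $v_{n-1}$ adjacent to $v_n$, and choose jointly the weights of all edges between $v_{n-1}$ and $v_n$ together with the other forward edges of $v_{n-1}$. In the reverse BFS order I expect $v_{n-1}$ to have $v_n$ as its only later neighbour. Changing one $v_{n-1}v_n$ edge shifts $z(v_{n-1})$ and $z(v_n)$ by the same amount. So the pair $(z(v_{n-1}),z(v_n))$ ranges over at least two translates of one fixed pair. I would then argue by the same pigeonhole count, applied to the union of backward edges of $v_{n-1}$ and $v_n$ that do not join the two, that one translate keeps at least half of those edges proper. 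The edges between $v_{n-1}$ and $v_n$ have a fixed status and remain a gap.

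If the merged step does not close that gap, the fallback is a final local-improvement phase. In it I would repeatedly flip the weight of a single edge whenever this strictly increases the number of proper edges; this terminates within $m$ improvements. Then I would argue via the same pigeonhole count at every vertex that a local optimum, started from the greedy weighting, still has at least $m/2$ proper edges. Verifying that this last step genuinely yields the factor $1/2$ is where I expect the real work.
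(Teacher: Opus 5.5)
Your proposal has a genuine gap, and it is exactly where you located it. The vertex sweep secures $\lceil\beta_i/2\rceil$ backward edges at each $v_i$ with $i<n$ but nothing at $v_n$. Neither repair closes this.

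\emph{The merged step fails on multigraphs.} Take $\{a,b\}=\{0,1\}$, vertices $x,u,v$, one edge $xu$ and $k$ parallel edges $uv$, with root $v$. The order is $x,u,v$, and $v_{n-1}=u$. Since $\beta_1=0$, your rule leaves $w(xu)$ unconstrained. If it is $0$, then $z(u)-z(v)=w(xu)=0$ whatever happens later, so all $k$ parallel edges are improper. The output has one proper edge, while $\OPT_{\mathrm{EW}}(G)=k+1$.

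\emph{Local optimality alone cannot give the factor $1/2$.} Let $x$ be adjacent to $u_1,\dots,u_r$ with $r\geq 3$, and join each $u_i$ to a private vertex $y_i$ by $r-1$ parallel edges. Set $w(xu_i)=0$, and give exactly one edge between $u_i$ and $y_i$ weight $1$. Then $z(x)=0$ and $z(u_i)=z(y_i)=1$, so only $r$ of the $r^2$ edges are proper.
\begin{itemize}
\item Flipping $xu_i$ repairs the $r-1$ parallel edges at $u_i$ but spoils the $r-1$ edges $xu_j$ with $j\neq i$.
\item Flipping a parallel edge repairs nothing.
\end{itemize}
So this weighting is a single-flip local optimum. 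Yet setting $w(xu_i)=1$ and all parallel weights to $0$ makes all $r^2$ edges proper.

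\emph{Your benchmark also needs a correction.} The target $m/2$ is unattainable on a two-vertex component, where every edge is improper under every weighting. Such components must be set aside; their optimum is $0$.

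The missing idea is to decide weights one edge at a time and to charge each edge to the step that actually decides it.
\begin{itemize}
\item For $e=uv$, properness depends only on the weights in $R(e)=E_G(u)\mathbin{\triangle}E_G(v)$. Moreover, $R(e)=\emptyset$ exactly for edges of two-vertex components.
\item Fix an edge order, and charge every $e$ with $R(e)\neq\emptyset$ to the last edge $f\in R(e)$.
\item When $f$ is weighted, $z(u)-z(v)$ is determined except for the weight of $f$, which enters on one side only. Hence at most one of the two choices makes $e$ improper.
\item Choosing the value that is good for the majority of the edges charged to $f$ secures half of them, and later choices cannot change their status.
\end{itemize}
Every edge that can ever be proper is charged exactly once, so no vertex is left without a free choice. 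The result has at least $|E'|/2\geq\OPT_{\mathrm{EW}}(G)/2$ proper edges, where $E'=\{e:R(e)\neq\emptyset\}$. In your block formulation, the edges at $v_n$ are precisely those whose deciding edge was weighted at an earlier step whose rule ignored them.
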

\begin{proof}
  Fix distinct integers $a$ and $b$, and let $G$ be a loopless multigraph.
  Order its edges arbitrarily as $e_1,\dots,e_m$.
  For an edge $e=uv\in E(G)$, define $R(e)=E_G(u)\mathbin{\triangle}E_G(v)$, where $\triangle$ denotes symmetric difference.
  The edges joining $u$ and $v$, including all copies parallel to $e$, belong to both incident-edge sets and therefore do not belong to $R(e)$.
  Their weights contribute equally to $z_w(u)$ and $z_w(v)$, so the status of $e$ depends only on the weights of the edges in $R(e)$.
  In particular, if $R(e)=\emptyset$, then $e$ is improper under every weighting.
  Let $E'=\{e\in E(G):R(e)\neq\emptyset\}$.
  Since no edge outside $E'$ can be proper, $\OPT_{\mathrm{EW}}(G)\leq |E'|$.

  For each $e\in E'$, let $j(e)$ be the largest index $i$ such that $e_i\in R(e)$.
  We construct a weighting by processing $e_1,\dots,e_m$ in this order.
  Suppose weights have already been chosen on $e_1,\dots,e_{i-1}$, and let $A_i=\{e\in E':j(e)=i\}$.
  For each $c\in\{a,b\}$, let $N_i(c)$ be the number of edges in $A_i$ that would be proper if $w(e_i)=c$.
  We choose a value of $c$ that maximizes $N_i(c)$.

  This choice makes at least $|A_i|/2$ edges of $A_i$ proper.
  To see this, fix an edge $e=uv\in A_i$ and put $f=e_i$.
  Since $f\in R(e)$, it is incident with exactly one of $u$ and $v$.
  All other edges in $R(e)$ have already been assigned weights.
  Write
  \[
    X=\sum_{g\in E_G(u)\setminus E_G(v)}w(g),
    \qquad
    Y=\sum_{g\in E_G(v)\setminus E_G(u)}w(g).
  \]
  The edge $e$ is proper exactly when $X\neq Y$.
  Since $f$ occurs in exactly one of these two sums, at most one of the choices $w(f)=a$ and $w(f)=b$ makes $X=Y$.
  Thus $e$ is proper for at least one choice, and hence $N_i(a)+N_i(b)\geq |A_i|$.
  Later choices cannot change the status of an edge $e\in A_i$, because all edges in $R(e)$ have index at most $i$.

  The sets $A_1,\dots,A_m$ partition $E'$.
  Therefore the final weighting has at least
  \[
    \sum_{i=1}^m\frac{|A_i|}{2}
    =
    \frac{|E'|}{2}
    \geq
    \frac{\OPT_{\mathrm{EW}}(G)}{2}
  \]
  proper edges.
  The sets $R(e)$, the indices $j(e)$, and the values $N_i(a)$ and $N_i(b)$ can all be computed directly in polynomial time.
\end{proof}

\begin{theorem}\label{thm:apx-complete-cubic}
  For every fixed pair of distinct integers $a$ and $b$, the problem \textnormal{\textsc{Max $\{a,b\}$-edge-weighting}} is APX-complete even when restricted to simple cubic graphs.
\end{theorem}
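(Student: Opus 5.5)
Membership in APX follows immediately from \cref{thm:half-approx-ew}, which gives a polynomial-time $1/2$-approximation on loopless multigraphs and in particular on simple cubic graphs. By \cref{prop:regular-weight-normalization}, on cubic graphs the set of proper edges is invariant under replacing $\{a,b\}$ by $\{0,1\}$ through $w(e)=a+(b-a)x(e)$, so the objective values coincide. It therefore suffices to prove APX-hardness for $\{0,1\}$-edge-weightings of simple cubic graphs.

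For APX-hardness I will give an L-reduction from an APX-complete bounded-occurrence constraint problem, namely \textsc{Max 2-in-4-SAT}, or equivalently \textsc{Max NAE-3-SAT}, with every variable occurring a bounded number of times. For these bounded-occurrence variants APX-completeness is standard. I will reuse the gadgets from the proof of \cref{thm:planar-ab-weighting}: $3$-port synchronizers for variables and clause gadgets built from $H$ and two inverting gadgets. Planarity is not needed, so I can take general bounded-occurrence instances and pad occurrence counts to multiples of three by chaining synchronizers. Concretely, a variable with $3k$ occurrences becomes a cycle of synchronizers linked by inverter pairs or $2$-port synchronizers, which again forces equal weights. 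Every gadget has constant size, so $|E(G_\Phi)|=O(\#\text{clauses})$. Since each clause is satisfiable in isolation and a random assignment satisfies a constant fraction of clauses, $\OPT$ of the formula is $\Omega(\#\text{clauses})$. This gives the first L-reduction inequality, $\OPT_{\mathrm{EW}}(G_\Phi)\le |E(G_\Phi)|=O(\OPT(\Phi))$.

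For the forward direction, an assignment satisfying $s$ clauses yields a weighting in which every edge is proper except for at most a constant number $c_0$ of edges inside each unsatisfied clause gadget. Here I fix the clause gadget's internal edges to the best possible choice. Hence $\OPT_{\mathrm{EW}}\ge |E|-c_0(\#\text{clauses}-\OPT(\Phi))$.

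The main obstacle is the backward direction: from an arbitrary weighting with $|E|-\ell$ proper edges, I must extract in polynomial time an assignment with at most $O(\ell)$ unsatisfied clauses. The plan is to decode each variable by majority over the port weights of its synchronizer chain, and then charge each unsatisfied clause to an improper edge within bounded distance. The charging rests on three observations. If all edges inside a variable's synchronizers are proper, then \cref{claim:port-synchronizer} forces all port weights equal and all junction labels equal to $3s$. If a clause gadget and its input edges are all proper and the input endpoints carry labels $3t_i$, then \cref{claim:planar-clause-gadget} forces the clause to be satisfied. Each improper edge lies in or adjacent to $O(1)$ gadgets.

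The delicate point is that a single improper edge inside a long synchronizer chain could corrupt a variable whose occurrences feed many clauses. I will address this by recoding the chain so that one defect flips only $O(1)$ ports relative to the majority. Alternatively, I will bound occurrences by a constant, say five, so that a corrupted variable affects only $O(1)$ clauses. Either way, the number of unsatisfied clauses under the decoded assignment is at most $c_1\ell$, which gives the second L-reduction inequality and completes APX-hardness for simple cubic graphs.
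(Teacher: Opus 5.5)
Your membership argument and the normalization to $\{0,1\}$ are fine, but the hardness part fails at its final step. Write $\ell(w)$ for the number of improper edges of a weighting $w$ of $G_\Phi$, $\ell^*=|E(G_\Phi)|-\OPT_{\mathrm{EW}}(G_\Phi)$, $u(\sigma_w)$ for the number of clauses violated by the decoded assignment, and $u^*$ for the minimum number of violated clauses. You establish $\ell^*\le c_0u^*$ and $u(\sigma_w)\le c_1\ell(w)$. The second L-reduction inequality, however, is the additive statement $u(\sigma_w)-u^*\le\beta\bigl(\ell(w)-\ell^*\bigr)$. For an optimal $w$ its right-hand side is $0$, whereas your bounds only give $u(\sigma_w)-u^*\le c_1\ell^*-u^*\le(c_0c_1-1)u^*$, so the inequality does not follow. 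For the same reason your two bounds yield no AP- or PTAS-reduction either. At best they give a gap reduction from a perfect-completeness gap version of the source problem, which rules out a PTAS unless $\mathrm{P}=\mathrm{NP}$ but is not an approximation-preserving reduction. An L-reduction needs matching constants. If the forward construction pays $c_0$ improper edges per violated clause, the backward analysis must exhibit at least $c_0$ improper edges for every clause violated by $\sigma_w$, with no improper edge used for two clauses, i.e.\ $\ell(w)\ge c_0u(\sigma_w)$.

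Your gadgets do not obviously supply this. \Cref{claim:planar-clause-gadget} presupposes that each outside endpoint has label $3t_i$. Once a rhombus of a variable synchronizer is defective, a junction can carry two terminal edges of different weights, and hence label $1$ plus its port weight, even when that port weight agrees with the majority. The claim then no longer excludes a fully proper clause gadget for a violated clause. You therefore charge such clauses to nearby defects, which lets one improper edge absorb several charges. Bounding occurrences only caps this multiplicity by a constant, and that constant is exactly the $c_1$ that breaks the inequality; the proposed recoding of the chain is not specified.

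The paper sidesteps this by reducing from \textsc{Max Cut} on simple cubic graphs. Every vertex becomes a $3$-port synchronizer $A_v$, and every edge $e=uv$ becomes two vertices $x_e,y_e$ joined by a middle edge and attached to a $2$-port synchronizer $B_e$. The middle edge's own weight cancels, so its properness depends only on the weights of the four port edges at $x_e$ and $y_e$ and never on junction labels; only the contrapositive of \cref{claim:port-synchronizer} is needed. At most one of three ports deviates from the majority, so each $A_v$ absorbs at most one charge. This gives an injection from uncut edges into improper edges. Hence $c_{\mathrm{EW}}(w)\le K+|\delta_G(S_w)|$ for every $w$, together with the exact identity $\OPT_{\mathrm{EW}}(G')=K+\OPT_{\mathrm{CUT}}(G)$, and thus an L-reduction with $\alpha=65$ and $\beta=1$.
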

\begin{proof}
  Fix distinct integers $a$ and $b$.
  Membership in APX follows from \cref{thm:half-approx-ew}.

  By \cref{prop:regular-weight-normalization}, it remains to prove APX-hardness for \textsc{Max $\{0,1\}$-edge-weighting} on simple cubic graphs.

  We reduce from \textsc{Max Cut} restricted to simple cubic graphs, which is APX-hard~\cite{alimonti1997hardness}.
  Let $G=(V,E)$ be a simple cubic graph, and write $\OPT_{\mathrm{CUT}}(G)$ for the maximum cut value of $G$.
  For $S\subseteq V$, let $\delta_G(S)$ denote the set of edges of $G$ with exactly one endpoint in $S$.
  We construct a cubic graph $G'$ by replacing every vertex and edge of $G$ with constant-size gadgets.
  From any cut of $G$, we construct a $\{0,1\}$-edge-weighting of $G'$.
  Conversely, from any weighting of $G'$, we construct a cut of $G$ and associate every original edge that does not cross this cut with a distinct improper edge of $G'$.

  We reuse the rhombus gadget and the $2$- and $3$-port synchronizers from \cref{sec:abWeightingNPC}, shown in \cref{fig:rhombus-gadget,fig:cubic-port-synchronizers}.
  By \cref{claim:port-synchronizer}, if all port edges of either synchronizer are assigned a common weight $s\in\{0,1\}$, then the remaining edges can be weighted so that every edge lying in a rhombus gadget is proper and every junction vertex has label $3s$.
  Conversely, if the weights of the port edges are not all equal, then some edge lying in a rhombus gadget is improper.

  We now construct $G'$ from $G$.
  For every vertex $v\in V$, insert a disjoint copy $A_v$ of a $3$-port synchronizer, called the \emph{vertex gadget} for $v$.
  Associate the three port edges of $A_v$ with the three edges of $G$ incident with $v$.

  For every original edge $e=uv\in E$, let $p_{u,e}$ and $p_{v,e}$ be the junction vertices of $A_u$ and $A_v$, respectively, whose port edges correspond to $e$.
  Add two new vertices $x_e,y_e$.
  Use the two corresponding port edges of $A_u$ and $A_v$, identifying their degree-one endpoints with $x_e$ and $y_e$, as the edges $p_{u,e}x_e$ and $y_ep_{v,e}$.
  Add the edge $x_ey_e$ between the two new vertices.
  We call $x_ey_e$ the \emph{middle edge} corresponding to $e$.
  To make $x_e$ and $y_e$ have degree $3$, add a disjoint copy $B_e$ of a $2$-port synchronizer.
  Let $q_{x,e}$ and $q_{y,e}$ be the two junction vertices of $B_e$.
  Use the two port edges of $B_e$, identifying their degree-one endpoints with $x_e$ and $y_e$, as the edges $x_eq_{x,e}$ and $y_eq_{y,e}$.
  After these identifications, both $x_e$ and $y_e$ have degree $3$.
  By \cref{claim:port-synchronizer}, if $w(x_eq_{x,e})\neq w(y_eq_{y,e})$, then some edge in a rhombus gadget of $B_e$ is improper.
  If these two edges have the same weight, the remaining edges of $B_e$ can be weighted so that every edge in its rhombus gadgets is proper.
  See \cref{fig:cubic-edge-replacement}.

  \begin{figure}[H]
    \centering
    \begin{tikzpicture}[scale=1]
      \tikzset{gadgetbox/.style={draw,dashed,rounded corners,line width=.4pt}}
      \node[figure vertex,label=above:$p_{u,e}$] (pu) at (-3.55,1.15) {};
      \node[figure vertex,label=above:$p_{v,e}$] (pv) at (3.55,1.15) {};
      \node[figure vertex,label=above:$x_e$] (x) at (-1.35,1.15) {};
      \node[figure vertex,label=above:$y_e$] (y) at (1.35,1.15) {};
      \node[figure vertex,label=below:$q_{x,e}$] (qx) at (-.7,-0.9518) {};
      \node[figure vertex,label=below:$q_{y,e}$] (qy) at (.7,-0.9518) {};
      \draw[figure edge] (pu)--(x);
      \draw[figure edge] (x)--(y) node[midway,above] {middle edge};
      \draw[figure edge] (y)--(pv);
      \draw[figure edge] (x)--(qx);
      \draw[figure edge] (y)--(qy);
      \draw[gadgetbox] (-4.75,.35) -- (-4.75,1.85) -- (-2.85,1.85) -- (-1.75,1.4) -- (-1.75,.9) -- (-2.85,.35) -- cycle;
      \draw[gadgetbox] (4.75,.35) -- (4.75,1.85) -- (2.85,1.85) -- (1.75,1.4) -- (1.75,.9) -- (2.85,.35) -- cycle;
      \draw[gadgetbox] (-1.85,-1.7) -- (1.85,-1.7) -- (1.8,-.95) -- (1.55,.6) -- (.95,.85) -- (-.95,.85) -- (-1.55,.6) -- (-1.8,-.95) -- cycle;
      \node at (-4.35,.7) {$A_u$};
      \node at (4.35,.7) {$A_v$};
      \node at (0,-.25) {$B_e$};
    \end{tikzpicture}
    \caption{The part of $G'$ replacing an edge $e=uv$ of $G$.
    The dashed regions indicate the vertex gadgets $A_u,A_v$ and the $2$-port synchronizer $B_e$; the displayed edges incident with $p_{u,e}$, $p_{v,e}$, $q_{x,e}$, and $q_{y,e}$ are their port edges.}
    \label{fig:cubic-edge-replacement}
  \end{figure}

  This completes the construction of $G'$.
  The graph $G'$ is simple and cubic.
  It is simple because the gadget copies are otherwise vertex-disjoint and the three neighbors of each vertex $x_e$ or $y_e$ are distinct.
  Every internal vertex of a rhombus gadget has degree $3$, every junction vertex has two terminal edges of rhombus gadgets and one port edge, and each of $x_e$ and $y_e$ is incident with one edge to a vertex gadget, the middle edge $x_ey_e$, and one port edge of $B_e$.

  Let $K=\bigl|E(G')\setminus\{x_ey_e:e\in E\}\bigr|$ be the number of edges of $G'$ other than the middle edges.
  We now prove the value identity
  \[
    \OPT_{\mathrm{EW}}(G')=K+\OPT_{\mathrm{CUT}}(G).
  \]
  For the lower bound, we construct a $\{0,1\}$-edge-weighting of $G'$ from any cut of $G$.
  For the reverse inequality, we construct a cut from any weighting and associate each original edge that does not cross this cut with a distinct improper edge of $G'$.

  We first show that every cut $S\subseteq V$ gives a weighting $w_S$ of $G'$ with exactly $K+|\delta_G(S)|$ proper edges.
  Define $s_v=0$ if $v\in S$, and define $s_v=1$ otherwise.
  In every vertex gadget $A_v$, assign weight $s_v$ to all three port edges and apply \cref{claim:port-synchronizer} to weight the remaining edges.
  For every $e\in E$, assign weight $0$ to both port edges of $B_e$ and apply \cref{claim:port-synchronizer} to weight its remaining edges.
  Finally, set every middle edge $x_ey_e$ to have weight $1$.

  For an edge $e=uv\in E$, the labels under $w_S$ of the two new vertices are $z_{w_S}(x_e)=s_u+1$ and $z_{w_S}(y_e)=s_v+1$.
  The junction vertex of $A_u$ adjacent to $x_e$ has label $3s_u$, and the junction vertex $q_{x,e}$ of $B_e$ has label $0$.
  Therefore the two edges incident with $x_e$ other than $x_ey_e$ are proper.
  The same argument applies at $y_e$.
  All edges in the rhombus gadgets of the vertex gadgets and the synchronizers $B_e$ are proper by construction.
  Hence the only edge whose properness depends on $S$ is the middle edge $x_ey_e$.
  It is proper exactly when $s_u+1\neq s_v+1$, which is equivalent to $uv\in\delta_G(S)$.
  Therefore $w_S$ has exactly $K+|\delta_G(S)|$ proper edges.
  Taking $S$ to be a maximum cut gives $\OPT_{\mathrm{EW}}(G')\geq K+\OPT_{\mathrm{CUT}}(G)$.

  We now prove the reverse inequality.
  Starting from an arbitrary weighting of $G'$, we recover a cut of $G$ from the majority values of the vertex gadgets.
  Let $w$ be an arbitrary $\{0,1\}$-edge-weighting of $G'$.
  For every vertex $v\in V$, consider the weights of the three port edges of the vertex gadget $A_v$.
  Let $s_v$ be their majority value, which is unique because there are three binary weights, and define $S_w=\{v\in V:s_v=0\}$.
  We prove that the number of proper edges under $w$ is at most $K+|\delta_G(S_w)|$.
  Equivalently, it is enough to prove that $w$ has at least $|E|-|\delta_G(S_w)|$ improper edges.
  Since $S_w=\{v\in V:s_v=0\}$, an original edge $e=uv$ crosses $S_w$ exactly when $s_u\neq s_v$.
  Thus $|E|-|\delta_G(S_w)|$ is the number of original edges $e=uv$ that do not cross $S_w$, namely those with $s_u=s_v$.
  We therefore construct an injection from these original edges to improper edges of $G'$.
  For each such edge, either its middle edge is improper or, if the middle edge is proper, one of the adjacent synchronizers contains an improper edge.

  Fix an edge $e=uv\in E$ with $s_u=s_v$.
  The labels of $x_e$ and $y_e$ are
  \[
    z_w(x_e)=w(p_{u,e}x_e)+w(x_eq_{x,e})+w(x_ey_e),\qquad
    z_w(y_e)=w(y_ep_{v,e})+w(y_eq_{y,e})+w(x_ey_e).
  \]
  If $x_ey_e$ is improper under $w$, associate $e$ with the improper edge $x_ey_e$.

  Assume now that $x_ey_e$ is proper.
  Since the weight of $x_ey_e$ appears in both labels, we have
  \[
    w(p_{u,e}x_e)+w(x_eq_{x,e})
    \neq
    w(y_ep_{v,e})+w(y_eq_{y,e}).
  \]
  Since $s_u=s_v$, this inequality is impossible if the three equalities
  \[
    w(p_{u,e}x_e)=s_u,
    \qquad
    w(y_ep_{v,e})=s_v,
    \qquad
    w(x_eq_{x,e})=w(y_eq_{y,e})
  \]
  all hold.
  Hence at least one of
  \[
    w(p_{u,e}x_e)\neq s_u,
    \qquad
    w(y_ep_{v,e})\neq s_v,
    \qquad
    w(x_eq_{x,e})\neq w(y_eq_{y,e})
  \]
  holds.
  If $w(p_{u,e}x_e)\neq s_u$, then the weights of the port edges of $A_u$ are not all equal.
  By \cref{claim:port-synchronizer}, $A_u$ contains an improper edge in one of its rhombus gadgets; associate $e$ with one such edge.
  Otherwise, if $w(y_ep_{v,e})\neq s_v$, then the weights of the port edges of $A_v$ are not all equal, so \cref{claim:port-synchronizer} gives an improper edge in one of its rhombus gadgets; associate $e$ with one such edge.
  In the remaining case, $w(x_eq_{x,e})\neq w(y_eq_{y,e})$, so \cref{claim:port-synchronizer} gives an improper edge in a rhombus gadget of $B_e$; associate $e$ with one such edge.

  These associations can be chosen injectively.
  A middle edge $x_ey_e$ can be associated only with its own original edge $e$, and an edge in a rhombus gadget of $B_e$ can be used only for $e$.
  The only possible conflict is therefore inside a vertex gadget.
  For a fixed vertex gadget $A_v$, an original edge can be associated with an improper edge in $A_v$ only when its corresponding port is the unique one whose weight differs from the majority value $s_v$.
  Of the three binary weights on the port edges, at most one differs from the value that occurs at least twice.
  Thus at most one original edge is associated with any fixed vertex gadget, and the choices inside vertex gadgets can be made without conflict.
  Since the middle edges, the edges in the rhombus gadgets of the synchronizers $B_e$, and the edges in the rhombus gadgets of the vertex gadgets form pairwise disjoint sets, distinct edges $e=uv$ with $s_u=s_v$ are associated with distinct improper edges of $G'$.

  It follows that $w$ has at least $|E|-|\delta_G(S_w)|$ improper edges.
  Consequently, the number of proper edges under $w$ is at most
  \[
    |E(G')|-\bigl(|E|-|\delta_G(S_w)|\bigr)
    =K+|\delta_G(S_w)|.
  \]
  Apply this bound to an optimal weighting $w$ of $G'$.
  Then
  \[
    \OPT_{\mathrm{EW}}(G')\leq K+|\delta_G(S_w)|\leq K+\OPT_{\mathrm{CUT}}(G).
  \]
  Together with the lower bound proved above, this gives
  \[
    \OPT_{\mathrm{EW}}(G')=K+\OPT_{\mathrm{CUT}}(G).
  \]

  Both the construction of $G'$ and the recovery of $S_w$ can be carried out in polynomial time.
  To complete the $L$-reduction, it remains to verify its two inequalities with constants $\alpha=65$ and $\beta=1$.
  For the first inequality, we bound $\OPT_{\mathrm{EW}}(G')$ in terms of $\OPT_{\mathrm{CUT}}(G)$.
  The vertex gadgets contribute $3\cdot 7|V|=21|V|$ edges from rhombus gadgets.
  For every original edge $e\in E$, the construction contributes the two edges $p_{u,e}x_e$ and $y_ep_{v,e}$, the middle edge $x_ey_e$, the fourteen edges from rhombus gadgets in $B_e$, and the two edges $x_eq_{x,e}$ and $y_eq_{y,e}$, for a total of $19$ edges.
  Since $G$ is cubic, $3|V|=2|E|$, and hence $|E(G')|=21|V|+19|E|=33|E|$.
  There is one middle edge for each edge of $G$, so $K=32|E|$.
  Every graph has a cut containing at least half of its edges, and therefore $|E|\leq 2\OPT_{\mathrm{CUT}}(G)$.
  The value identity now gives
  \[
    \OPT_{\mathrm{EW}}(G')
    =32|E|+\OPT_{\mathrm{CUT}}(G)
    \leq 65\OPT_{\mathrm{CUT}}(G).
  \]

  For the second $L$-reduction inequality, let $w$ be any $\{0,1\}$-edge-weighting of $G'$, let $c_{\mathrm{EW}}(w)$ be its number of proper edges, and let $S_w$ be the cut obtained from the majority values of the vertex gadgets.
  The injection argument gives $c_{\mathrm{EW}}(w)\leq K+|\delta_G(S_w)|$.
  Together with the value identity, this implies
  \[
    \OPT_{\mathrm{CUT}}(G)-|\delta_G(S_w)|
    \leq
    \OPT_{\mathrm{EW}}(G')-c_{\mathrm{EW}}(w).
  \]
  Suppose that $w$ has at least $(1-\varepsilon')\OPT_{\mathrm{EW}}(G')$ proper edges for some $\varepsilon'>0$.
  Combining the last two inequalities yields
  \[
    |\delta_G(S_w)|
    \geq
    \OPT_{\mathrm{CUT}}(G)-\varepsilon'\OPT_{\mathrm{EW}}(G')
    \geq
    (1-65\varepsilon')\OPT_{\mathrm{CUT}}(G).
  \]
  Thus, for any $0<\varepsilon<1$, choosing $\varepsilon'=\varepsilon/65$ transforms a $(1-\varepsilon/65)$-approximate weighting of $G'$ into a $(1-\varepsilon)$-approximate cut of $G$.
  Since \textsc{Max Cut} on simple cubic graphs is APX-hard, this proves APX-hardness for \textsc{Max $\{0,1\}$-edge-weighting} on simple cubic graphs.
  The bijection in \cref{prop:regular-weight-normalization} preserves the number of proper edges, so APX-hardness also holds for \textsc{Max $\{a,b\}$-edge-weighting} for every fixed pair of distinct integers $a$ and $b$.
  Together with membership in APX, this completes the proof.
\end{proof}

\section{Extending partial edge-weightings}\label{sec:partialWeightings}
Thomassen, Wu, and Zhang proved in 2016 that the $\{1,2\}$-property can be decided in polynomial time for bipartite graphs~\cite{thomassen20163}, whereas the problem is NP-complete for arbitrary graphs~\cite{dudek2011complexity}.
Motivated by this result, we investigate whether a \emph{partial} $\{a,b\}$-edge-weighting, which prescribes the weights on a subset of the edges, can be extended to a proper $\{a,b\}$-edge-weighting in bipartite graphs.
We first show that this problem is NP-complete for every fixed pair of distinct integers $a$ and $b$, even for simple cubic planar bipartite graphs under the additional restriction that each component of the subgraph formed by the prescribed edges is a path of length $6$.
We then give a polynomial-time algorithm when the graph is a tree.
When $a=0$ and $b=1$, this gives an alternative to the polynomial-time decision algorithm obtained from Lyngsie's characterization of all trees without a proper $\{0,1\}$-edge-weighting~\cite{lyngsie2018neighbour}.
The tree algorithm also gives an alternative polynomial-time algorithm for the antifactor problem on trees, introduced by Lov\'{a}sz~\cite{lovasz1973antifactors}.

\begin{theorem}\label{thm:extProbHard}
  For every fixed pair of distinct integers $a$ and $b$, deciding whether a partial $\{a,b\}$-edge-weighting can be extended to a proper $\{a,b\}$-edge-weighting is NP-complete, even for simple cubic planar bipartite graphs.
  Moreover, NP-hardness holds when each component of the subgraph formed by the prescribed edges is a path of length $6$, all of whose edges have the same prescribed weight.
\end{theorem}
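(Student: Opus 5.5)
We may assume $a=0$ and $b=1$, since the bijection of \cref{prop:regular-weight-normalization} maps partial weightings to partial weightings. On a regular graph it preserves properness edge by edge, and it maps an all-$a$ or all-$b$ prescribed path to a constant prescribed path. Membership in NP is immediate. For NP-hardness I would reduce from planar \textsc{Monotone 2-in-4-SAT} with every variable occurring exactly three times~\cite[Theorem~21]{scheffler2022distance}, following the architecture of \cref{thm:planar-ab-weighting}: a variable gadget with three ports that must carry a common weight $s$ and whose port vertices have label $3s$, and a clause gadget accepting exactly the input patterns with two ones.

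The rhombus gadget, and hence the synchronizers and $H$, contain triangles. So the work is to rebuild these gadgets as bipartite cubic planar pieces, using prescribed paths as a replacement for odd cycles as a source of rigidity. The basic tool is a \emph{forced path}: a prescribed path $P=v_0v_1\dots v_6$ with all edges of weight $s$. Each internal vertex $v_i$ has exactly one free edge, so its label is $2s+w(f_i)$, where $f_i$ is that free edge. Properness along $P$ then forces the free edges at consecutive internal vertices to have different weights, so the five free edges alternate.

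\textbf{Step 1.} I would design a bipartite \emph{wire} gadget from such paths and a few free edges. Its two ends must have equal weight, and its end vertices must have label $3s$ (or a fixed affine function of $s$), playing the role of \cref{claim:rhombus-gadget}. The alternation forced along each prescribed path gives exactly this kind of parity-transmitting behaviour. Using even-length connections keeps every cycle even.

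\textbf{Step 2.} From wires, I would build bipartite $2$- and $3$-port synchronizers and an inverter as in \cref{claim:port-synchronizer}.

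\textbf{Step 3.} I would build a bipartite replacement for $H$ realizing $s_1+s_3=s_2+s_4$, as in \cref{claim:planar-core}. Alternatively, I would realize the $2$-in-$4$ constraint directly by a vertex of degree four split into prescribed paths, since the label comparisons there are essentially counting constraints. Each claim is verified by a finite enumeration table, as earlier in the paper.

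\textbf{Step 4.} I would assemble the gadgets along the fixed planar embedding of the formula graph exactly as in the proof of \cref{thm:planar-ab-weighting}, and check four things. The result must be cubic: every endpoint $v_0,v_6$ of a prescribed path needs two more edges, to be arranged inside the gadgets. It must be simple and planar. It must be bipartite: I would fix a proper $2$-coloring of each gadget so that port vertices receive prescribed colors, then match the colors across each shared port edge; if necessary I would insert a length-two wire to fix the parity. Finally, the prescribed edges must form vertex-disjoint paths of length $6$, each with a constant weight. Both directions of correctness then follow exactly as in the final two paragraphs of the proof of \cref{thm:planar-ab-weighting}.

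The main obstacle is Step~1, together with the bipartite clause core. In the earlier proof, rigidity came from triangles: the rhombus forces equal terminal weights and labels $1+s$. In a bipartite graph without prescriptions, every cubic bipartite graph has a proper weighting~\cite{baudon2015decomposing}, so all rigidity must come from the prescribed paths. I would first try a gadget in which two forced paths of opposite weight are linked by a ladder of free edges, and check by enumeration that the only extensions transmit the port weight unchanged. If that fails, I would let the six-edge paths themselves serve as the wires between synchronizer junctions.
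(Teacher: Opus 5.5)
\textbf{There is a genuine gap.} Your reduction to $\{0,1\}$ is correct, and so is the forced-path observation: in a cubic graph, each internal vertex of a constant prescribed path has exactly one free edge, and properness along the path forces the five free edges to alternate. Beyond that, the proposal constructs nothing. Step~1 is an experiment (``I would first try \dots; if that fails \dots''), Step~3 offers two alternatives without realizing either, and Steps~2 and~4 depend on them.

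For this theorem the gadgets \emph{are} the proof. You must exhibit cubic, bipartite, planar pieces with the following properties:
\begin{itemize}
\item their prescribed edges form vertex-disjoint constant paths of length $6$, including closing off the path ends, which need two further edges each;
\item their ports lie on one face and on a consistent side of the bipartition;
\item they force the intended constraint;
\item their remaining edges can be completed properly in every allowed case. This is the property most easily broken, and it is what the explicit completions in the paper's claims check.
\end{itemize}
None of this is established, so NP-hardness is not proved.

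Parts of the plan are also doubtful as stated. A ``length-two wire'' in a cubic graph has a middle vertex needing a third edge, so it is itself a gadget that must transmit both weight and label. The planar $2$-in-$4$ source also forces you to build $4$-terminal clause gadgets and to handle repeated occurrences, which is more work than necessary.

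\textbf{The missing idea:} your forced path already does all the synchronization you planned to rebuild from rhombus emulations, synchronizers and inverters. Put the three ports at the internal positions $1,3,5$ of a weight-$0$ path. Alternation then forces the three port weights to be equal, and each port label simply equals its weight. No label $3s$ is needed.

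The paper exploits this by reducing from \textsc{Cubic Planar Monotone 1-in-3-SAT}, whose formula graph is itself simple, cubic, planar and bipartite. It replaces every vertex by the same $11$-vertex graph $J$ (the hexagonal prism minus one vertex). The three degree-two terminals $u_0,u_1,u_2$ of $J$ lie on one face and on one side of its bipartition. Reversing the bipartition in the clause copies then makes the assembled graph bipartite, with no parity-fixing wires.
\begin{itemize}
\item Variable copies prescribe $0$ on the path $v_1u_2v_3u_0v_4u_1v_2$. The forcing is exactly your alternation argument, with the occurrence edges as the free edges at positions $1,3,5$.
\item Clause copies prescribe $1$ on the path $v_2u_3v_4u_0v_3u_4v_0$. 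Alternation gives $y_{30}=y_{41}=x_0$ and $y_{14}=y_{23}=1-x_0$. A further case analysis, plus three explicit completions, shows that a proper extension exists if and only if $x_0+x_1+x_2=1$.
\end{itemize}
A concrete gadget of this kind, together with its verification and completion tables, is what your proposal still has to supply.
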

\begin{proof}
  The problem is in NP because a proposed extension can be checked in polynomial time.
  For NP-hardness, we reduce from \textsc{Cubic Planar Monotone 1-in-3-SAT}.
  An instance is a CNF formula $\Phi$ in which every clause contains exactly three distinct unnegated variables and every variable occurs exactly three times.
  The \emph{formula graph} of $\Phi$ is the bipartite graph with one vertex for each variable and each clause and one edge for each variable occurrence, joining the corresponding variable and clause vertices.
  Only instances with a planar formula graph are allowed.
  The question is whether there is a truth assignment under which exactly one variable in every clause is true.
  We call such an assignment a $1$-in-$3$ assignment.
  This restricted problem is NP-complete~\cite{moore2001hard}.
  Let $G$ be the formula graph, and fix a planar embedding of $G$.
  The graph $G$ is simple, cubic, planar, and bipartite.

  By \cref{prop:regular-weight-normalization}, it is enough to prove NP-hardness for $\{0,1\}$-edge-weightings.
  For any cubic partial $\{0,1\}$-edge-weighting, the bijection in that proposition maps proper extensions exactly to proper extensions after every prescribed $0$ is replaced by $a$ and every prescribed $1$ by $b$.
  It maps a prescribed path whose edges all have weight $0$ or all have weight $1$ to one whose edges all have weight $a$ or all have weight $b$, respectively.
  All edge weights in the remainder of the proof belong to $\{0,1\}$.

  The reduction replaces every variable vertex and every clause vertex of $G$ by a copy of the same graph $J$ and joins the copies by unprescribed occurrence edges corresponding to the edges of $G$.
  In a variable gadget, the prescribed weights force the three occurrence edges to have the same weight and make the label at each terminal equal to that weight.
  In a clause gadget, the prescribed weights permit a proper weighting if and only if exactly one of the three occurrence edges has weight $1$, provided that the label at the other endpoint of each occurrence edge equals its weight.
  We define the two gadgets and establish these statements before assembling the graph.

  Let $J$ be the bipartite graph with bipartition $U_J=\{u_0,u_1,u_2,u_3,u_4,u_5\}$ and $V_J=\{v_0,v_1,v_2,v_3,v_4\}$.
  Its edges consist of the cycle $u_0v_3u_4v_0u_3v_4u_0$, the path $u_2v_1u_5v_2u_1$, and the five additional edges $v_3u_2$, $u_4v_1$, $v_0u_5$, $u_3v_2$, and $v_4u_1$.
  The vertices $u_0,u_1,u_2$ are the terminals of $J$.
  They have degree two, while every other vertex has degree three.
  Adding a new vertex $v_5$ adjacent to $u_0,u_1,u_2$ produces the hexagonal prism.
  Deleting $v_5$ from a planar embedding of the prism merges the faces incident with $v_5$ into one face containing $u_0,u_1,u_2$.
  Hence $J$ has a planar embedding in which its three terminals lie on one face.

  For the local analysis, attach one unprescribed occurrence edge to each terminal of $J$.
  The variable gadget $\mathcal V$ is obtained by prescribing weight $0$ on $u_0v_3$, $u_0v_4$, $u_1v_2$, $u_1v_4$, $u_2v_1$, and $u_2v_3$.
  The clause gadget $\mathcal C$ is obtained by prescribing weight $1$ on $u_0v_3$, $u_0v_4$, $u_3v_2$, $u_3v_4$, $u_4v_0$, and $u_4v_3$.
  All other edges of $J$ remain unprescribed in both gadgets.

  \begin{claim}\label{claim:extension-variable-gadget}
    Let $x_i\in\{0,1\}$ be the weight of the occurrence edge incident with $u_i$ in $\mathcal V$.
    The unprescribed edges of $J$ can be weighted so that every edge of $J$ is proper if and only if $x_0=x_1=x_2$.
    In every such weighting, $z(u_i)=x_i$ for $i\in\{0,1,2\}$.
  \end{claim}
  \begin{proof}
    Both edges of $J$ incident with each terminal have prescribed weight $0$, so $z(u_i)=x_i$ for $i\in\{0,1,2\}$.
    Put $y_{43}=w(u_4v_3)$ and $y_{34}=w(u_3v_4)$.
    The labels of $v_3$ and $v_4$ are $y_{43}$ and $y_{34}$, respectively.
    Properness of $u_0v_3$ and $u_2v_3$ gives $x_0\neq y_{43}$ and $x_2\neq y_{43}$.
    Since these three values belong to $\{0,1\}$, we obtain $x_0=x_2$.
    The same argument applied to $u_0v_4$ and $u_1v_4$ gives $x_0=x_1$.

    It remains to weight the unprescribed edges when the three occurrence edges have a common weight.
    If $x_0=x_1=x_2=0$, assign weight $1$ to $u_3v_0$, $u_3v_2$, $u_3v_4$, $u_4v_0$, $u_4v_1$, and $u_4v_3$, and weight $0$ to every other unprescribed edge of $J$.
    The labels of $u_0,\dots,u_5$ and $v_0,\dots,v_4$ are then $(0,0,0,3,3,0)$ and $(2,1,1,1,1)$, respectively.
    If $x_0=x_1=x_2=1$, assign weight $1$ to $u_3v_0$, $u_4v_0$, and $u_5v_0$, and weight $0$ to every other unprescribed edge of $J$.
    The labels of $u_0,\dots,u_5$ and $v_0,\dots,v_4$ are then $(1,1,1,1,1,1)$ and $(3,0,0,0,0)$, respectively.
    In each case, no label on $U_J$ appears on $V_J$, so every edge of $J$ is proper.
  \end{proof}

  \begin{claim}\label{claim:extension-clause-gadget}
    Let $x_i\in\{0,1\}$ be the weight of the occurrence edge incident with $u_i$ in $\mathcal C$.
    Suppose that the other endpoint of the occurrence edge at $u_i$ has label $x_i$ for each $i\in\{0,1,2\}$.
    The unprescribed edges of $J$ can be weighted so that every edge of $J$ and all three occurrence edges are proper if and only if $x_0+x_1+x_2=1$.
  \end{claim}
  \begin{proof}
    For every unprescribed edge $u_iv_j$ of $J$, write $y_{ij}=w(u_iv_j)$.
    Since $u_3v_2$, $u_3v_4$, and $u_4v_0$ have prescribed weight $1$, the endpoints of $u_3v_0$ have labels $y_{30}+2$ and $y_{30}+1+y_{50}$.
    Properness of $u_3v_0$ therefore gives $y_{50}=0$.

    The vertex labels are
    \[
    \begin{array}{llll}
      z(u_0)=x_0+2,
      & z(u_1)=x_1+y_{12}+y_{14},
      & z(u_2)=x_2+y_{21}+y_{23},
      & z(v_3)=y_{23}+2,\\
      z(u_3)=y_{30}+2,
      & z(u_4)=y_{41}+2,
      & z(u_5)=y_{51}+y_{52},
      & z(v_4)=y_{14}+2,\\
      z(v_0)=y_{30}+1,
      & z(v_1)=y_{21}+y_{41}+y_{51},
      & z(v_2)=y_{12}+1+y_{52}.
    \end{array}
    \]
    Properness of $u_0v_3$, $u_0v_4$, $u_3v_4$, and $u_4v_3$ gives $y_{23}=y_{14}=1-x_0$ and $y_{30}=y_{41}=x_0$.

    Suppose first that $x_0=1$.
    Then $y_{14}=y_{23}=0$.
    The occurrence edge at $u_1$ has endpoint labels $x_1$ and $x_1+y_{12}$, so its properness forces $y_{12}=1$.
    The edge $u_1v_4$ then has endpoint labels $x_1+1$ and $2$, and hence $x_1=0$.
    The same argument at $u_2$ gives $y_{21}=1$ and $x_2=0$.

    Suppose now that $x_0=0$.
    We have $y_{14}=y_{23}=1$ and $y_{30}=y_{41}=0$.
    The endpoints of $u_1v_2$ have labels $x_1+y_{12}+1$ and $y_{12}+1+y_{52}$, so properness gives $y_{52}=1-x_1$.
    The endpoints of $u_3v_2$ have labels $2$ and $y_{12}+1+y_{52}$.
    Since $y_{12},y_{52}\in\{0,1\}$, properness of this edge gives $y_{12}=y_{52}$.
    If $y_{12}=0$, properness of $u_5v_2$ forces $y_{51}=0$, while if $y_{12}=1$, properness of $u_5v_0$ forces $y_{51}=1$.
    Hence $y_{51}=y_{12}$.
    The edge $u_5v_1$ now has endpoint labels $2y_{12}$ and $y_{21}+y_{12}$, so properness gives $y_{21}=1-y_{12}=x_1$.
    Consequently, $z(u_2)=x_1+x_2+1$, $z(v_3)=3$, and $z(v_1)=1$.
    The edge $u_2v_3$ excludes $x_1=x_2=1$, while $u_2v_1$ excludes $x_1=x_2=0$.
    Thus $x_1+x_2=1$.

    Conversely, suppose that $x_0+x_1+x_2=1$.
    Set $y_{50}=0$ and use the following values on the other unprescribed edges.
    \[
    \begin{array}{c|c|c|c}
      (x_0,x_1,x_2)&(y_{12},y_{14},y_{21},y_{23},y_{30},y_{41},y_{51},y_{52})&(z(u_0),\dots,z(u_5))&(z(v_0),\dots,z(v_4))\\ \hline
      (1,0,0)&(1,0,1,0,1,1,0,0)&(3,1,1,3,3,0)&(2,2,2,2,2)\\
      (0,1,0)&(0,1,1,1,0,0,0,0)&(2,2,2,2,2,0)&(1,1,1,3,3)\\
      (0,0,1)&(1,1,0,1,0,0,1,1)&(2,2,2,2,2,2)&(1,1,3,3,3)
    \end{array}
    \]
    In each row, no label on $U_J$ appears on $V_J$, so every edge of $J$ is proper.
    The first three entries of the $U_J$-label vector differ coordinatewise from $(x_0,x_1,x_2)$, so the three occurrence edges are proper as well.
  \end{proof}

  We now assemble the graph produced by the reduction.
  Choose pairwise disjoint disks around the vertices of the fixed planar embedding of $G$.
  Place a copy of $\mathcal V$ in each variable disk and a copy of $\mathcal C$ in each clause disk.
  For each copy of $J$, choose an embedding whose cyclic order of the three terminals on the boundary of the disk agrees with the cyclic order of the three incidences at the corresponding vertex of $G$.
  For every edge of $G$, join the corresponding variable and clause terminals by one unprescribed occurrence edge drawn along that edge.
  Since the terminals of $J$ lie on one face, the resulting graph is planar.
  Use the bipartition $(U_J,V_J)$ in the variable gadgets and its reverse in the clause gadgets.
  Every occurrence edge joins terminals $u_i$ in a variable copy and a clause copy, which belong to opposite parts under these choices, so the resulting graph is bipartite.
  Every vertex of $J$ other than the terminals has degree three, and every terminal has two incident edges in $J$ and one occurrence edge.
  Hence the resulting graph is cubic.
  The graph $J$ is simple, and distinct incidences use distinct terminals at both ends, so the resulting graph is simple.

  In every variable gadget, the prescribed edges form the path $v_1u_2v_3u_0v_4u_1v_2$, and all six edges have prescribed weight $0$.
  In every clause gadget, the prescribed edges form the path $v_2u_3v_4u_0v_3u_4v_0$, and all six edges have prescribed weight $1$.
  The gadget copies are vertex-disjoint and the occurrence edges are unprescribed.
  Consequently, each component of the subgraph formed by the prescribed edges is a path of length $6$, all of whose edges have the same prescribed weight.
  The construction has linear size and can be carried out in polynomial time.

  Suppose first that $\Phi$ has a $1$-in-$3$ assignment.
  Give an occurrence edge weight $1$ when its variable is true and weight $0$ otherwise.
  At every variable gadget, the three occurrence edges have the same weight, so \cref{claim:extension-variable-gadget} gives a proper weighting of all edges of $J$ and makes the label at each variable terminal equal to the weight of its occurrence edge.
  Every clause is incident with exactly one occurrence edge of weight $1$.
  \Cref{claim:extension-clause-gadget} therefore supplies weights for the remaining edges of every clause gadget and makes its three occurrence edges proper.
  This gives a proper extension of the partial weighting.

  Conversely, suppose that the constructed partial weighting has a proper extension.
  \Cref{claim:extension-variable-gadget} shows that the three occurrence edges at every variable gadget have the same weight.
  Assign the variable the value true when this common weight is $1$ and false when it is $0$.
  The label at the variable endpoint of each occurrence edge equals its weight.
  \Cref{claim:extension-clause-gadget} then shows that exactly one occurrence edge incident with every clause has weight $1$.
  Hence exactly one variable in every clause is true under the resulting assignment.

  This polynomial-time construction produces a partial $\{0,1\}$-edge-weighting of a simple cubic planar bipartite graph that has a proper extension if and only if $\Phi$ has a $1$-in-$3$ assignment.
  Hence the problem is NP-hard for $\{0,1\}$.
  By \cref{prop:regular-weight-normalization}, the same NP-hardness result holds for every fixed pair of distinct integers $a$ and $b$, and all edges of each prescribed path still have the same weight.
  Membership in NP completes the proof.
\end{proof}

\begin{corollary}\label{cor:liec-extension-hardness}
  It is NP-complete to decide whether a partial $2$-edge-coloring of a simple cubic planar bipartite graph can be extended to a locally irregular $2$-edge-coloring.
  Moreover, NP-hardness holds when each component of the subgraph formed by the prescribed edges is a path of length $6$, all of whose edges have the same prescribed color.
\end{corollary}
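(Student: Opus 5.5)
The proof is a direct translation of \cref{thm:extProbHard} with $a=0$ and $b=1$ through \cref{prop:regular-weight-normalization}. Identify the two colors with the weights $0$ and $1$. Under this identification, a partial $2$-edge-coloring of a cubic graph is the same object as a partial $\{0,1\}$-edge-weighting, and a full $2$-edge-coloring extends the partial coloring exactly when the corresponding $\{0,1\}$-edge-weighting extends the partial weighting. By \cref{prop:regular-weight-normalization}, applied with $r=3$, a $2$-edge-coloring is locally irregular if and only if the corresponding $\{0,1\}$-edge-weighting is proper. Hence, for every cubic graph, a partial $2$-edge-coloring extends to a locally irregular $2$-edge-coloring if and only if the corresponding partial $\{0,1\}$-edge-weighting extends to a proper $\{0,1\}$-edge-weighting.

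Membership in NP is immediate, since a proposed extension can be checked edge by edge for local irregularity in polynomial time. For NP-hardness, I will take the instances produced in the proof of \cref{thm:extProbHard} for the weight set $\{0,1\}$ and read each prescribed weight as a prescribed color. These are simple cubic planar bipartite graphs in which the prescribed edges form disjoint paths of length $6$, each path carrying a single prescribed weight. Under the reading, each such path carries a single prescribed color, so the structural restriction on the prescribed edges transfers unchanged. The translation is trivially polynomial.

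I expect no real obstacle. The one point to state explicitly is that the equivalence in \cref{prop:regular-weight-normalization} is pointwise: it holds for each individual coloring and weighting, not merely for existence. This is what lets it apply to extensions with some edges fixed in advance, rather than only to the unconstrained existence question.
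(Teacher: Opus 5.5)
Your proposal is correct and follows essentially the same route as the paper. Identify the colors with $0$ and $1$, use the pointwise equivalence from \cref{prop:regular-weight-normalization} on cubic graphs to turn extensions of partial colorings into extensions of partial $\{0,1\}$-edge-weightings, and apply \cref{thm:extProbHard} with $a=0$ and $b=1$, with the path-of-length-$6$ restriction carrying over unchanged.
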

\begin{proof}
  Identify the two colors with $0$ and $1$.
  By \cref{prop:regular-weight-normalization}, a full $2$-edge-coloring of a cubic graph is locally irregular if and only if its $\{0,1\}$-edge-weighting is proper.
  Thus a partial $2$-edge-coloring extends to a locally irregular one if and only if the same edge assignment, viewed as a partial $\{0,1\}$-edge-weighting, extends to a proper weighting.
  The result follows from \cref{thm:extProbHard} with $a=0$ and $b=1$.
\end{proof}

Every cubic bipartite graph admits a locally irregular $2$-edge-coloring~\cite{baudon2015decomposing}.
In contrast, \cref{cor:liec-extension-hardness} shows that prescribing colors on disjoint monochromatic paths of length $6$ already makes the extension problem NP-complete on simple cubic planar bipartite graphs.

\subsection{Extending on trees}\label{sec:extTree}

\Cref{thm:extProbHard} shows that extending partial edge-weightings is NP-complete even for simple cubic planar bipartite graphs.
In this subsection, we give a polynomial-time algorithm that, for a tree and distinct integers $a$ and $b$, either extends a partial $\{a,b\}$-edge-weighting to a proper one or concludes that no extension exists.
For the unprescribed existence problem on trees, Madarasi recently gave a structural characterization for every pair of distinct real weights and a linear-time algorithm for every fixed pair~\cite{madarasi2026trees}.
The result below instead treats the extension problem, in which a partial $\{a,b\}$-edge-weighting may prescribe the weights of an arbitrary subset of edges.
For $a=0$ and $b=1$, the algorithm gives an alternative to the polynomial-time decision algorithm obtained from Lyngsie's characterization of all trees without a proper $\{0,1\}$-edge-weighting~\cite{lyngsie2018neighbour}.

\begin{theorem}\label{thm:treet}
  For every fixed pair of distinct integers $a$ and $b$, extending a partial $\{a,b\}$-edge-weighting to a proper $\{a,b\}$-edge-weighting is solvable in polynomial time for trees.
\end{theorem}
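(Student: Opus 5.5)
We root the tree $T$ at an arbitrary vertex $r$ and run a bottom-up dynamic program. Write every weight as $w(e)=a+(b-a)x(e)$ with $x(e)\in\{0,1\}$, and let $\lambda_v(k)=d_T(v)a+(b-a)k$ as in the proof of \cref{thm:planar-exact}. Then the label of $v$ is determined by the number $k_v\in\{0,\dots,d_T(v)\}$ of incident edges of weight $b$. For a non-root vertex $v$ with parent edge $e_v$, the states are the pairs $(\beta,k)$, where $\beta\in\{0,1\}$ is the value $x(e_v)$ (restricted to the prescribed value if $e_v$ is prescribed) and $k=k_v$ is the final count at $v$. A state is \emph{feasible} if the edges below $v$ can be weighted consistently with the prescription so that:
\begin{itemize}[nosep]
\item exactly $k-\beta$ child edges of $v$ have weight $b$;
\item every edge of the subtree $T_v$ is proper;
\item the value $k$ is realized.
\end{itemize}
Each vertex has at most $2(d_T(v)+1)$ states, so the total number of states is $O(|V(T)|)$.

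To compute the feasible states at $v$ from its children $c_1,\dots,c_h$, fix the state $(\beta,k)$ of $v$. For each child $c_i$ and each $\gamma\in\{0,1\}$, child $c_i$ is \emph{usable with value $\gamma$} if it has a feasible state $(\gamma,k')$ with $\lambda_{c_i}(k')\neq\lambda_v(k)$. This condition depends only on the label of $v$, not on the other children, and this independence is the key point. The question is then whether we can choose $\gamma_i$ for each child, with $c_i$ usable with value $\gamma_i$, so that $\sum_i\gamma_i=k-\beta$. Let $F_1$ be the set of children usable only with $1$, $F_0$ those usable only with $0$, and $B$ those usable with both. If some child is usable with neither value, the state is infeasible. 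Otherwise the state is feasible exactly when
\[
|F_1|\le k-\beta\le |F_1|+|B|.
\]
This is a simple interval test and costs $O(h)$ per state. To enable backtracking, we record for each feasible state one witness child state per child value together with the chosen values $\gamma_i$. At the root there is no parent edge, so we test the same inequality with $\beta=0$ over all $k\in\{0,\dots,d_T(r)\}$. An extension exists iff some root state is feasible, and backtracking recovers it.

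Correctness follows by induction on subtree height. The main point to verify is that the properness of the edge $vc_i$ depends only on $k_v$ and $k_{c_i}$. Once the label of $v$ is fixed through $k$, the subproblems of the different children interact only through the single counting constraint at $v$. Given that, the interval condition is exact, because any number of children from $B$ may be switched between the values $0$ and $1$. I expect the main obstacle to be bookkeeping rather than mathematics:
\begin{itemize}[nosep]
\item handling prescribed edges, which simply forbid one value of $\gamma_i$ or $\beta$;
\item making sure that the state $k$ is not confused with a count that excludes the parent edge.
\end{itemize}
The total running time is $O\bigl(\sum_v d_T(v)^2\bigr)=O(|V(T)|^2)$, which is polynomial.
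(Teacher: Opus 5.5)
Your proposal is correct and follows essentially the same route as the paper. Both are a bottom-up dynamic program whose states pair the parent-edge weight with the final label at the vertex, use a per-child set of usable weights that depends only on the parent's label, and finish with a counting test; your interval condition $|F_1|\le k-\beta\le |F_1|+|B|$ is exactly equivalent to the paper's inequalities $|\{i:a\in W_{e_i}(z)\}|\geq n_a$ and $|\{i:b\in W_{e_i}(z)\}|\geq n_b$. The remaining differences are cosmetic: you index states by the count $k$ instead of the label $z=\lambda_v(k)$, and you root at an arbitrary vertex with a separate root test where the paper roots at a leaf; your $O(|V|^2)$ bound is also slightly sharper than the paper's $O(|V|^3)$.
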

\begin{proof}
  Let $T=(V,E)$ be a tree with a partial $\{a,b\}$-edge-weighting.
  If $E=\emptyset$, then the empty weighting is a proper extension, so assume that $E\neq\emptyset$.
  For every edge $e$, let $W_e\subseteq\{a,b\}$ be its set of allowed weights, so $W_e$ is a singleton when the partial weighting prescribes the weight of $e$ and $W_e=\{a,b\}$ otherwise.
  Choose a leaf $v_0$ as the root and orient every edge away from $v_0$.
  For a vertex $v$, let $T_v$ be the subtree rooted at $v$.

  For each oriented edge $e=uv$, where $u$ is the parent of $v$, define a table $\mathcal T_e$.
  A pair $(z,x)\in\Z\times\{a,b\}$ belongs to $\mathcal T_e$ if $x\in W_e$ and all edges with both endpoints in $T_v$ can be weighted consistently with the partial weighting so that every such edge is proper and, after assigning weight $x$ to the parent edge $e=uv$, the full label of $v$ is $z$.
  The edge $e$ itself is not checked in this table because the label of its parent is determined higher in the rooted tree.

  Let $e_1,\dots,e_{d_T(v)-1}$ be the edges from $v$ to its children.
  A pair $(z,x)$ belongs to $\mathcal T_e$ if and only if one can choose $(z_i,x_i)\in\mathcal T_{e_i}$ for every child edge so that $z_i\neq z$ for all $i$ and $\sum_{i=1}^{d_T(v)-1}x_i=z-x$.
  For a fixed value of $z$, let $W_{e_i}(z)\subseteq W_{e_i}$ be the set of weights $x_i$ for which the child table $\mathcal T_{e_i}$ contains an entry $(z_i,x_i)$ with $z_i\neq z$.
  If some $W_{e_i}(z)$ is empty, then $(z,x)\notin\mathcal T_e$.
  Otherwise, solve
  \begin{align*}
    n_a+n_b&=d_T(v)-1,\\
    a n_a+b n_b&=z-x.
  \end{align*}
  Because $a\neq b$, this system has at most one integer solution $(n_a,n_b)$.
  The pair $(z,x)$ belongs to $\mathcal T_e$ exactly when this solution is nonnegative and
  \[
    |\{i:a\in W_{e_i}(z)\}|\geq n_a,
    \qquad
    |\{i:b\in W_{e_i}(z)\}|\geq n_b.
  \]
  To see that the inequalities are sufficient, first assign every child edge with $|W_{e_i}(z)|=1$ its unique available weight.
  The first inequality ensures that at most $n_b$ of these edges receive weight $b$, and the second ensures that at most $n_a$ receive weight $a$.
  Assign weight $a$ to enough of the remaining child edges with $W_{e_i}(z)=\{a,b\}$ to obtain exactly $n_a$ edges of weight $a$, and assign weight $b$ to the others.

  If $v$ is a leaf, then $\mathcal T_e=\{(x,x):x\in W_e\}$.
  Processing the remaining tables from the leaves toward the root therefore computes all tables.
  For a vertex $v$, its full label has the form $d_T(v)a+t(b-a)$ for some $t\in\{0,\dots,d_T(v)\}$, so each table has $O(|V|)$ candidate pairs.
  For every child edge $e_i$, all sets $W_{e_i}(z)$ can be computed for all $z\in\{d_T(v)a+t(b-a):0\leq t\leq d_T(v)\}$ by scanning $\mathcal T_{e_i}$, using $O(|V|^2)$ time per edge.
  Once these sets are available, all membership tests at a vertex take $O(|V|d_T(v))$ time.
  Summing over the tree gives an $O(|V|^3)$ algorithm.

  Let $e_0=v_0v_1$ be the unique edge incident with the root.
  The root has label $w(e_0)=x$, while a pair $(z,x)\in\mathcal T_{e_0}$ records the label $z$ of $v_1$.
  Hence a proper extension exists if and only if $\mathcal T_{e_0}$ contains a pair $(z,x)$ with $z\neq x$.
  Backtracking through the selected table entries recovers such an extension.
\end{proof}

The same dynamic program handles a minimum-cost version in which assigning an allowed weight to an edge has a cost.
For each entry $(z,x)\in\mathcal T_e$, store the minimum cost of a realization, including the cost of assigning weight $x$ to $e$.
For a fixed candidate pair $(z,x)$ at a vertex $v$, let $n_a$ and $n_b$ be the unique integers, if they exist, satisfying $n_a+n_b=d_T(v)-1$ and $a n_a+b n_b=z-x$.
If no such nonnegative integers exist, discard the candidate pair.
If some $W_{e_i}(z)$ is empty or either of the two availability inequalities above fails, discard the candidate pair.
For a child edge $e_i$ and an available weight $x_i\in W_{e_i}(z)$, let $c_i(x_i)$ be the minimum stored cost of an entry $(z_i,x_i)\in\mathcal T_{e_i}$ with $z_i\neq z$.
Assign the unique available weight to every child edge with $|W_{e_i}(z)|=1$.
Among the remaining child edges, all of which satisfy $W_{e_i}(z)=\{a,b\}$, assign weight $a$ to as many edges as needed to obtain $n_a$ child edges of weight $a$ in total, choosing those with the smallest values of $c_i(a)-c_i(b)$, and assign weight $b$ to the others.

\subsection{The antifactor problem}

The antifactor problem is the special case of the degree-prescribed subgraph problem in which exactly one degree is forbidden at each vertex.
An instance consists of a connected graph $G=(V,E)$ and a function $f$ satisfying $f(v)\in\{0,\dots,d_G(v)\}$ for every $v\in V$; the task is to find a spanning subgraph $H$ of $G$ such that $d_H(v)\neq f(v)$ for every $v\in V$.
Lov\'{a}sz gave a polynomial-time algorithm for this problem~\cite{lovasz1973antifactors}; later generalizations appear in~\cite{generalFactorsOfGraphs,FrankDegreeConstrained}.
He also proved that every instance whose graph contains a cycle is feasible~\cite{lovasz1973antifactors}.
We therefore focus on trees.

The next reduction shows that the antifactor problem on trees can also be solved by the dynamic program from \cref{thm:treet}.

\begin{theorem}\label{thm:antifactor}
  The antifactor problem for trees can be reduced in polynomial time to deciding whether a given tree has a proper $\{0,1\}$-edge-weighting.
\end{theorem}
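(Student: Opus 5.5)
The plan is to replace each vertex of the antifactor instance $(T,f)$ by a "forcing" gadget built from pendant paths, so that in every proper $\{0,1\}$-edge-weighting of the new tree $T'$ the original edges encode a spanning subgraph $H$ and properness at one extra edge per vertex encodes $d_H(v)\neq f(v)$. The basic tool is a \emph{pendant $P_2$}: a path $v\,x\,y$ attached at $v$, with $x,y$ new.

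\textbf{Step 1 (pendant $P_2$ forces weight $1$).} I will check that a pendant $P_2$ forces $w(vx)=1$. Indeed $z(y)=w(xy)$ and $z(x)=w(vx)+w(xy)$, so the edge $xy$ is proper exactly when $w(vx)=1$. In that case $z(x)\in\{1,2\}$, and we may pick $w(xy)$ freely. If $z(v)\geq 3$, the edge $vx$ is proper automatically. So attaching $k$ pendant $P_2$'s to a vertex adds exactly $k$ to its label and costs nothing, as long as that label is at least $3$.

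\textbf{Step 2 (construction).} Let $\Delta$ be the maximum degree of $T$, and $2$-color $T$ with classes $A$ and $B$. Put $K_v=3$ for $v\in A$ and $K_v=3+\Delta+2$ for $v\in B$. Attach $K_v$ pendant $P_2$'s to every $v$. In addition, add a new vertex $s_v$ adjacent to $v$, and attach $r_v=K_v+f(v)$ pendant $P_2$'s to $s_v$. Then $T'$ is a tree of polynomial size, since $K_v=O(n)$.

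For any weighting in which all gadget edges are proper, write $H$ for the set of original edges of weight $1$. The labels are then
\[
z(v)=K_v+d_H(v)+w(vs_v),\qquad z(s_v)=r_v+w(vs_v).
\]

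\textbf{Step 3 (correctness).} I will go through the edges of $T'$ by type.
\begin{itemize}
\item Original edges $uv$ join $A$ to $B$. Their labels lie in the disjoint ranges $[3,\,4+\Delta]$ and $[5+\Delta,\ldots]$, so they are always proper.
\item Gadget edges at $v$ and at $s_v$ are proper by Step 1, because both $z(v)\geq 3$ and $z(s_v)\geq r_v\geq 3$.
\item The edge $vs_v$ is proper if and only if $K_v+d_H(v)\neq r_v$, which is exactly $d_H(v)\neq f(v)$; the term $w(vs_v)$ cancels.
\end{itemize}
Hence $T'$ has a proper $\{0,1\}$-edge-weighting if and only if $(T,f)$ has an antifactor. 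In the forward direction, take any antifactor $H$, weight the original edges by its indicator, set $w(vs_v)$ arbitrarily, and choose the $xy$-weights as in Step 1.

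The only delicate point is making sure the label ranges do not collide: original edges must never be improper, and the forcing edges $vx$ must be proper. The choice $K_v\geq 3$ together with the shift of $\Delta+2$ between the two color classes handles both, so I expect no real obstacle beyond bookkeeping. Combined with \cref{thm:treet} applied with no prescribed edges, this yields a polynomial-time algorithm for the antifactor problem on trees.
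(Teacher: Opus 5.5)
Your proof is correct and uses essentially the paper's construction: the paper's vertices of $D_i\setminus\{u_i\}$ with their leaves are your pendant $P_2$'s at $v$, and $u_i$ with its $iM+f(v_i)-1$ attached paths of length two plays the role of $s_v$, with the same cancellation of $w(vs_v)$ at the key edge. The only difference is the offsets. The paper gives each original vertex its own offset $iM$ with $M=n+2$, which makes all original labels pairwise distinct and does not use bipartiteness. You use the bipartition of the tree with just two offsets, $3$ and $\Delta+5$, which suffices because only adjacent labels must differ, and it yields a smaller tree ($O(n\Delta)$ rather than $O(n^3)$ vertices).
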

\begin{proof}
  Consider an antifactor instance $(G,f)$ in which $G=(V,E)$ is a tree on $n$ vertices.
  We construct a tree $G'$ that has a proper $\{0,1\}$-edge-weighting if and only if the antifactor instance is feasible.

  Write $V=\{v_1,\dots,v_n\}$ and set $M=n+2$.
  Begin with a copy of $G$.
  For every $v_i$, add a set $D_i$ of $iM$ new neighbors of $v_i$.
  Choose one vertex $u_i\in D_i$, and attach $iM+f(v_i)-1$ pairwise internally vertex-disjoint paths of length two to $u_i$.
  For every vertex $u\in D_i\setminus\{u_i\}$, add one new leaf adjacent to $u$.
  \Cref{fig:application} illustrates the construction at $v_i$.
  The resulting graph $G'$ is a tree with $O(n^3)$ vertices and can be constructed in polynomial time.

\begin{figure}[t]
\centering
\begin{sideways}
\centering

\begin{tikzpicture}[xscale=.8,yscale=-1]
\SetVertexMath
\tikzset{VertexStyle/.append style = {minimum size = 8pt,inner sep=0pt}}

\begin{scope}[xscale=.45]
\begin{scope}[yscale=.8]
\draw[thick] (1.5/.45,.0) circle (2.5);
\end{scope}
\draw[thick] (-.5,-1.35) arc (225-12.5:135+12.5:2.5);
\end{scope}

\draw (1.5,-2.3) node[rotate=270] {$G$};

\Vertex[x=1.5,y=0,L=$ $]{dots}
\draw (1.5,-.3) node[rotate=270] {$v_i$};

\Vertex[x=-1.5,y=2,L=$ $]{dots1}
\Vertex[x=-1.5,y=-.5,L=$ $]{dots2}
\Vertex[x=-1.5,y=-2,L=$ $]{dots3}

\draw (-1.5,-2.3) node[rotate=270] {$u_i\ $};

\Vertex[x=-3,y=-3,L=$ $]{dots7}
\Vertex[x=-3,y=-1,L=$ $]{dots8}

\Vertex[x=-4.5,y=-4,L=$ $]{dots13}
\Vertex[x=-4.5,y=0,L=$ $]{dots14}

\Vertex[x=-4.5+1.5,y=4-1,L=$ $]{dots19}
\Vertex[x=-1.5-1.5,y=-.5+1,L=$ $]{dots20}

\draw[] (dots19)--(dots1);
\draw[] (dots20)--(dots2);

\tikzset{VertexStyle/.style = {shape = circle,fill = black,minimum size = 2pt,inner sep=0pt}}
\Vertex[x=-1.5,y=.4,L=$ $]{dots4}
\Vertex[x=-1.5,y=.75,L=$ $]{dots5}
\Vertex[x=-1.5,y=1.1,L=$ $]{dots6}

\Vertex[x=-3,y=1.4,L=$ $]{dots21}
\Vertex[x=-3,y=1.75,L=$ $]{dots22}
\Vertex[x=-3,y=2.1,L=$ $]{dots23}

\draw[thick] (-1,2.2) arc (0:180:.5);
\draw[thick] (-2,-2.2) arc (180:360:.5);
\draw[thick] (-2,2.2) -- (-2,-2.2);
\draw[thick] (-1,2.2) -- (-1,-2.2);
\draw (-1.5,-3) node[rotate=270] {$D_i$};

\draw[] (dots)--(dots1);
\draw[] (dots)--(dots2);
\draw[] (dots)--(dots3);

\path [draw half paths={draw=none}{dotted}] (dots4)--(dots);
\path [draw half paths={draw=none}{dotted}] (dots5)--(dots);
\path [draw half paths={draw=none}{dotted}] (dots6)--(dots);

\path [draw half paths={draw=none}{dotted}] (dots21)--(dots4);
\path [draw half paths={draw=none}{dotted}] (dots22)--(dots5);
\path [draw half paths={draw=none}{dotted}] (dots23)--(dots6);

\tikzset{VertexStyle/.style = {shape = circle,fill = black,minimum size = 2pt,inner sep=0pt}}
\Vertex[x=-3,y=-2+0.3125,L=$ $]{dots10}
\Vertex[x=-3,y=-2,L=$ $]{dots11}
\Vertex[x=-3,y=-2-0.3125,L=$ $]{dots12}

\draw[] (dots3)--(dots7);
\draw[] (dots3)--(dots8);

\path [draw half paths={draw=none}{dotted}] (dots10)--(dots3);
\path [draw half paths={draw=none}{dotted}] (dots11)--(dots3);
\path [draw half paths={draw=none}{dotted}] (dots12)--(dots3);


\tikzset{VertexStyle/.style = {shape = circle,fill = black,minimum size = 2pt,inner sep=0pt}}
\Vertex[x=-4.5,y=-2+2*0.3125,L=$ $]{dots16}
\Vertex[x=-4.5,y=-2,L=$ $]{dots17}
\Vertex[x=-4.5,y=-2-2*0.3125,L=$ $]{dots18}

\draw[] (dots13)--(dots7);
\draw[] (dots14)--(dots8);

\path [draw half paths={draw=none}{dotted}] (dots16)--(dots10);
\path [draw half paths={draw=none}{dotted}] (dots17)--(dots11);
\path [draw half paths={draw=none}{dotted}] (dots18)--(dots12);

\draw[decorate,decoration={brace,amplitude=3pt}]
(-4.75,.11) node(t_k_unten){} --
(-4.75,-4.11) node(t_k_opt_unten){};
\node[rotate=270] at (-5.15,-2.0){$iM+f(v_i)-1$};

\draw (-.15,-1.6) node[rotate=270] {$iM$};

\end{tikzpicture}

\end{sideways}
\caption{Illustration of the construction described in the proof of \cref{thm:antifactor}.}\label{fig:application}
\end{figure}

  Suppose first that the antifactor instance has a feasible solution $H$.
  Assign weight $1$ to an original edge $e\in E$ if $e\in E(H)$ and weight $0$ otherwise, and assign weight $1$ to every newly added edge.
  We verify that the resulting $\{0,1\}$-edge-weighting of $G'$ is proper.

  For every original vertex $v_i$, its label lies between $iM$ and $iM+n-1$.
  Since $(i+1)M-(iM+n-1)=M-n+1=3$, the labels of the original vertices are pairwise distinct, and every original edge is proper.
  Each vertex in $D_i\setminus\{u_i\}$ has label $2$, while its leaf neighbor has label $1$ and $z(v_i)\geq iM\geq M>2$.
  Moreover, $z(v_i)=iM+d_H(v_i)$ and $z(u_i)=iM+f(v_i)$, so the edge $v_iu_i$ is proper because $d_H(v_i)\neq f(v_i)$.
  Every internal vertex of a length-two path attached to $u_i$ has label $2$, its leaf endpoint has label $1$, and $z(u_i)\geq M>2$.
  Thus every edge of $G'$ is proper.

  Conversely, suppose that $G'$ has a proper $\{0,1\}$-edge-weighting, and let $H$ be the spanning subgraph of $G$ formed by the original edges of weight $1$.
  For each $u\in D_i\setminus\{u_i\}$, the edge $uv_i$ must have weight $1$; otherwise, $u$ and its leaf neighbor would have equal labels.
  Similarly, every edge incident with $u_i$ other than $u_iv_i$ must have weight $1$, because if the edge from $u_i$ to the internal vertex of an attached length-two path had weight $0$, that internal vertex and the leaf endpoint of the path would have equal labels.
  Therefore
  \[
    z(v_i)=d_H(v_i)+iM-1+w(u_iv_i),
    \qquad
    z(u_i)=iM+f(v_i)-1+w(u_iv_i).
  \]
  Since $u_iv_i$ is proper, these labels are distinct, and hence $d_H(v_i)\neq f(v_i)$ for every $i$.
  Thus $H$ is a feasible solution of the antifactor instance.
\end{proof}

\section*{Acknowledgment}
The work was supported by the Lend\"ulet Programme of the Hungarian Academy of Sciences -- grant number LP2021-1/2021, and by the Ministry of Innovation and Technology of Hungary from the National Research, Development and Innovation Fund, financed under the ELTE TKP 2021-NKTA-62 funding scheme.

\section*{Conflict of interest}
The authors declare no potential conflict of interest.

\bibliographystyle{plain}
\bibliography{bibliography}

\end{document}